\def\ra{\rangle}
\def\la{\langle}
\def\mc{\mathcal}
\def\bea{\begin{eqnarray}}
\def\eea{\end{eqnarray}}
\def\be{\begin{equation}}
\def\ee{\end{equation}}
\def\beal{\begin{aligned}}
\def\eeal{\end{aligned}}
\def\bew{\begin{widetext}}
\def\eew{\end{widetext}}
\def\bea{\begin{array}}
\def\eea{\end{array}}
\def\ra{\rangle}
\def\b{\boldsymbol}
\def\mcP{\mathcal{P}}
\DeclareRobustCommand{\rchi}{{\mathpalette\irchi\relax}}
\newcommand{\irchi}[2]{\raisebox{\depth}{$#1\chi$}} 
\newcommand{\rmi}{\mathrm{i}}
\newcommand{\rme}{\mathrm{e}}
\newcommand{\bs}{\boldsymbol}
\newcommand{\da}{\dagger}
\newcommand{\bra}[1]{\langle #1|}
\newcommand{\ket}[1]{|#1 \rangle}
\newcommand{\braket}[2]{\langle #1|#2 \rangle}
\newcommand{\inner}[2]{\langle #1|#2\rangle}
\newcommand{\ii}{\mathrm{i}}
\newcommand{\id}{\mathds{1}}
\newcommand{\scP}{\mathcal{P}}
\newcommand{\scQ}{\mathcal{Q}}
\newcommand{\scH}{\mathcal{H}}
\newcommand{\scM}{\mathcal{M}}
\newcommand{\vac}{\mathsf{vac}}
\newcommand{\Tr}{\operatorname{Tr}}
\renewcommand{\Re}{\operatorname{Re}}
\renewcommand{\Im}{\operatorname{Im}}
\newcommand{\vect}[1]{{\bm{#1}}}
\newcommand{\dia}[3]{\raisebox{#3pt}{\includegraphics[height=#2pt]{dia_#1}}}
\newcommand{\eq}[1]{\begin{equation}#1\end{equation}}
\newcommand{\eqs}[1]{\begin{equation}\begin{split}#1\end{split}\end{equation}}
\newcommand{\eqnref}[1]{Eq.\,\eqref{#1}}
\newcommand{\figref}[1]{Fig.\,\ref{#1}}
\newcommand{\secref}[1]{Sec.\,\ref{#1}}\newcommand{\appref}[1]{Appendix\,\ref{#1}}
\newtheorem{theorem}{Theorem}[section]
\newtheorem{lemma}{Lemma}[section]
\newtheorem{corollary}{Corollary}[section]
\newtheorem{definition}{Definition}[section]
\begin{document}


\title{Quantum Magnetism in Wannier-Obstructed Mott Insulators}

\author{Xiao-Yang Huang}
\affiliation{Department of Physics, University of California, San Diego, CA 92093, USA}
\affiliation{School of Science, Xi'an Jiaotong University, Xi'an, Shaanxi 710049, China}

\author{Taige Wang}
\affiliation{Department of Physics, University of California, San Diego, CA 92093, USA}

\author{Shang Liu}
\affiliation{Department of Physics, Harvard University, Cambridge, MA 02138, USA}

\author{Hong-Ye Hu}
\affiliation{Department of Physics, University of California, San Diego, CA 92093, USA}

\author{Yi-Zhuang You}
\affiliation{Department of Physics, University of California, San Diego, CA 92093, USA}

\begin{abstract}

We develop a strong coupling approach towards quantum magnetism in Mott insulators for Wannier obstructed bands. Despite the lack of Wannier orbitals, electrons can still singly occupy a set of exponentially-localized but nonorthogonal orbitals to minimize the repulsive interaction energy. We develop a systematic method to establish an effective spin model from the electron Hamiltonian using a diagrammatic approach. The nonorthogonality of the Mott basis gives rise to multiple new channels of spin-exchange (or permutation) interactions beyond Hartree-Fock and superexchange terms. We apply this approach to a Kagome lattice model of interacting electrons in Wannier obstructed bands (including both Chern bands and fragile topological bands). Due to the orbital nonorthogonality, as parameterized by the nearest neighbor orbital overlap $g$, this model exhibits stable ferromagnetism up to a finite bandwidth $W\sim U g$, where $U$ is the interaction strength. This provides an explanation for the experimentally observed robust ferromagnetism in Wannier obstructed bands. The effective spin model constructed through our approach also opens up the possibility for frustrated quantum magnetism around the ferromagnet-antiferromagnet crossover in Wannier obstructed bands.

\end{abstract}

\maketitle

\section{Introduction}
Mott insulators are correlated insulators where electrons singly occupy localized orbitals to avoid the repulsive interaction. In many cases, Mott insulators further develop antiferromagnetic ordering below the charge gap due to the superexchange interaction among low-energy spin degrees of freedoms. However, in recent twisted bilayer graphene (tBLG) experiments\cite{Cao2018c,Cao2018d}, the observation of ferromagnetic hysteresis \cite{ISI:000483195200043} suggests that the three-quarter-filling Mott insulating state exhibits ferromagnetism. In another experiment on twisted double bilayer graphene (tDBLG) \cite{liu2019spinpolarized,shen2019observation,cao2019electric}, an increasing gap under in-plane magnetic field also suggests a ferromagnetic phase. Such ferromagnetic Mott state is understood in the flat band limit\cite{Mielke_1992,Mielke&Tasaki,Tasaki1996,ISI:000073659300001}, where the spin exchange term in Coulomb interaction reduces energy of the spin-polarized state. For similar reasons, the ferromagnetic phase in Moir\'e superlattice systems has a large overlap with the flat band ferromagnetism.\cite{alavirad2019ferromagnetism,2019arXiv190108110B,senthil2019narrow} Faithful treatments of correlated Moir\'e superlattice systems have been proposed in various ways, including projecting Coulomb interaction onto the symmetry-broken (or obstruction-free) Wannier basis\cite{PhysRevLett.122.246402,PhysRevLett.123.096802,PhysRevLett.122.246401,PhysRevB.99.205150,PhysRevResearch.1.033126} and analyzing the interacting effects within momentum space in the weak-coupling limit.\cite{PhysRevX.8.031089,You2019Superconductivity,repellin2019ferromagnetism,PhysRevB.99.075127,Lee2019,xie2018nature,liu2019nematic,bultinck2019ground} While most analyses consistently point to a robust ferromagnetism near the flat-band limit, it remains challenging to analyze the instability of such ferromagnetic state in competition with the antiferromagnetic superexchange away from the flat-band limit. 

One major obstacle to model the competition between exchange and superexchange effects systematically in tBLG has to do with the Wannier obstruction,\cite{PhysRevLett.121.126402,PhysRevB.99.125122} i.e.~if Wannier orbitals are constructed with only the relevant bands, they cannot respect all the symmetries.\cite{PhysRevX.8.031088,PhysRevB.98.045103} To construct the Wannier orbitals for tBLG with all symmetries taken into account, the minimal model needs to contain at least ten bands as pointed out by H. C. Po \textit{et al}.\cite{PhysRevX.8.031089,PhysRevB.99.195455,PhysRevB.98.085435} If we only focus on a few bands within the energy scale of interaction, the Wannier obstruction would prevent us from constructing Wannier orbitals. In lack of Wannier orbitals, it becomes unclear how the electrons should localize to form Mott insulators, which further obscure the derivation of low-energy effective spin (and/or valley) models. The issue of Wannier obstruction has long been identified and studied in quantum Hall systems and other Chern insulators. In the phases with nonzero Chern number, exponentially-localized Wannier orbitals cannot be constructed regardless any symmetry considerations.\cite{PhysRevLett.98.046402} Now with more fragile topological systems being discovered,\cite{FragileExpt,Bernevig,2018arXiv181111786L} the Wannier obstruction becomes a more general issue in the study of strongly correlated electronic systems.

We begin our general discussion by considering an extended Hubbard model on an arbitrary lattice,
\eqs{\label{eq:H}H&=H_{t}+H_{U},\\
H_{t}&=\sum_{ij}t_{ij}c_i^\dagger c_j,\\
H_{U}&= \frac12 \sum_{ij}U_{ij}:n_i n_j:.}
where $c_i=(c_{i,\uparrow},c_{i,\downarrow})^\intercal$ is the electron operator containing spin degrees of freedom and $n_i=c_i^\dagger c_i$ is the total electron number on site $i$. The model may be generalized to include orbital or valley\cite{PhysRevB.99.195455} degrees of freedom, but in this work we will only focus on spins for illustration purpose. We assume certain degree of locality in $t_{ij}$ and $U_{ij}$. Suppose that $H_{t}$ produces several bands, described by the dispersion relations $\epsilon_n(\vect{k})$, 
\eq{\label{eq:Ht}
H_{t}=\sum_{n\vect{k}}c_{n\vect{k}}^\dagger \epsilon_n(\vect{k})c_{n\vect{k}}.}
In many cases, we are only interested in a subset of these bands near the Fermi energy with total bandwidth $W$, and well separated from other bands by energy gap $\Delta$. Although the full band structure has a tight-binding model description, a subset of these bands may not. H. C. Po \textit{et al.}\cite{PhysRevLett.121.126402,PhysRevB.99.125122} discussed several scenarios for a subset of bands, which can be trivial, obstructed trivial, fragile topological, or stably topological. 
As long as these bands are not trivial, they admit the Wannier obstruction, which is an obstruction towards constructing symmetric, exponentially-localized and orthogonal Wannier orbitals by linearly combining Bloch states within the subset of bands.
The well known ones are with Chern number where no additional symmetry is required.\cite{PhysRevLett.49.405,PhysRevLett.98.046402,PhysRevLett.107.126803} We also have fragile topology where they are obstructed but the obstruction can be removed by adding trivial bands (essentially site localized bands) below the Fermi energy which can then be mixed with the bands of interest to obtain Wannier orbitals.\cite{FragileExpt,Bernevig,2018arXiv181111786L} Finally, for the obstructed trivial bands, the obstruction is only due to the fact that the Wannier center does not reside on a lattice site, which can be resolved by adding empty sites.
The absence of such Wannier orbitals prevents us from writing down an effective tight-binding model targeting only those subset of bands of interest. This is precisely the case for tBLG, where the relevant conduction and valance bands are fragile topological and hence Wannier obstructed by the $C_2\mc T$ symmetry.\cite{PhysRevX.8.031089,PhysRevB.99.195455,PhysRevB.98.085435} 

Weak-coupling approaches have been developed to treat the interaction perturbatively (as $U\ll W$),\cite{PhysRevX.8.031089,You2019Superconductivity,Lee2019} such that one only need to work with a momentum space description of the effective band structure near the Fermi surface, hence circumventing the Wannier obstruction. However, it remains challenging to understand the strong-coupling physics in Wannier obstructed bands, when the energy scales are arranged in the following hierarchy
\eq{W\ll U\ll \Delta.}
The band gap $\Delta$, as the leading energy scale, protects a set of Wannier obstructed low-energy bands from mixing with high-energy bands away from the Fermi surface. To further respect the interaction energy $U$, electrons should repel each other into real-space localized orbitals by combining states in the low-energy bands. In the standard notion of Mott insulators, electrons are localized in Wannier orbitals with charge fluctuation gapped and spin fluctuation remained active at low energy. Now in the absence of Wannier orbitals, does the many-body Hamiltonian in \eqnref{eq:H} still admits Mott-like ground states? If so, can we write down the trial wave function to describe the low-lying states? Can we derive an effective model to describe the spin dynamics at low energy?

Motivated by these questions, we take a closer look at the requirements of Wannier orbitals, namely symmetry, locality and orthogonality. If we sacrifice one or more of them, the Wannier obstruction can be lifted and it would be possible to construct orbitals to host electrons in a Mott-like state. If we sacrifice the symmetry requirement, we will have to fine tune hopping parameters to fit the band structure. If we sacrifice the locality requirement, we will end up with a non-local hopping model which is hard to deal with. So we decided to explore the possibility of sacrificing orthogonality and working with a set of nonorthogonal Wannier basis. This approach is in analogous to the Maki-Zotos wavefunction \cite{PhysRevB.28.4349} and the von Neumann lattice formulation \cite{PhysRevB.42.10610,PhysRevB.51.5048,PhysRevB.65.075311} in quantum Hall systems with nonzero but negligible orbital overlap.

We present the general theory based on nonorthogonal Wannier basis with finite orbital overlap in \secref{sec:theory}. The nonorthogonality of the orbitals leads to new spin exchange channels and chiral spin exchange channels. The competition among these new channels could lead to a rich magnetic phase diagram. We discuss leading order contributions to the effective spin Hamiltonian in \secref{sec:spin} and discuss several new channels emerging from the theory of nonorthogonal basis. We also propose an energetic objective function in \secref{construct} to construct these nonorthogonal orbitals numerically to facilitate the study of any concrete model. Within this framework we study a toy model proposed in Ref.~\onlinecite{PhysRevB.99.125122} in \secref{model} to demonstrate our framework in Wannier-obstructed bands. In this model, we show that the flat-band ferromagnetism remains stable up to finite band width and a variety of magnetic phases appear around the the ferromagnet-antiferromagnet crossover.

\section{Theory of nonorthogonal Basis} \label{sec:theory}

In this section, we discuss how to project the Hamiltonian in \eqnref{eq:H} to a nonorthogonal spin basis of low-energy states and how to treat perturbative corrections. Let us assume a set of localized and normalized but nonorthogonal orbitals $\phi_{I}(i)$ in the real space labeled by the orbital index $I$, which jointly labels the unit cell and the orbital within the unit cell. We will leave the energetic criterion to optimize these orbitals and their completeness as a set of basis for later discussions in \secref{construct}. For now, we assume that electrons will self-organize under repulsive interaction to develop such nonorthogonal localized orbitals. The nonorthogonality implies a non-trivial metric $g_{IJ}$ among these orbitals,
\begin{equation} \label{eq:g}
    g_{IJ} \equiv \sum_i\phi_{I}^*(i)\phi_J(i) \neq \delta_{IJ}.
\end{equation}
Nevertheless, we assume the normalization condition $g_{II}=1$ for all $I$. Given these orbitals, we can define a set of fermion operators $a_{I\sigma}^\dagger$ that create electrons residing on these orbitals,
\eq{a^\da_{I\sigma}=\sum_{i}\phi_{I}(i)c^\da_{i\sigma},\quad(\sigma=\uparrow,\downarrow)}
such that they satisfy the following anticommutation relations $\{a_{I\sigma},a_{J\tau}\} = \{a_{I\sigma}^\dagger,a_{J\tau}^\dagger\} = 0$ and $\{a_{I\sigma},a_{J\tau}^\dagger\} = g_{IJ} \delta_{\sigma\tau}$. These operators $a_{I\sigma}^\dagger$ allow us to construct a set of many-body trial states from the vacuum state $\ket{\vac}$,
\eq{\label{eq:ketsigma}\ket{\Psi_\vect{\sigma}}=\prod_{I}a_{I\sigma_I}^\dagger\ket{\vac}.}
In these many-body states, every orbital is singly occupied and the spin configuration is labelled by $\vect{\sigma}=\{\sigma_I\}$. The nonorthogonality of single particle orbitals also implies the nonorthogonality of these many-body states $\braket{\Psi_\vect{\sigma}}{\Psi_\vect{\tau}} \neq \delta_{\vect{\sigma}\vect{\tau}}$.
In the Mott limit $U \gg W$, the trial states $\ket{\Psi_\vect{\sigma}}$ span the low energy manifold $\scH$. All the charge degrees of freedom are frozen, and the spin degrees of freedom are still allowed to fluctuate, resembling the Mott states. We can then derive the effective theory for these spin degrees of freedom and investigate the resulting phases.

\subsection{Exchange Interactions and Beyond}

To formulate an effective spin model, we first introduce the spin Hilbert space $\tilde{\scH}$ spanned by a fictitious set of \emph{orthogonal} Ising basis $\ket{\vect{\sigma}}$, which allows us to define the spin operator $\vect{S}_I=(S_I^x,S_I^y,S_I^z)$ in the conventional way $\bra{\vect{\sigma}}S_I^a\ket{\vect{\tau}}=\frac{1}{2}\sigma_{\tau_I\sigma_I}^a\prod_{J\neq I}\delta_{\tau_J\sigma_J}$, where $\sigma^a_{\tau\sigma}$ denotes the Pauli matrix element. We would like to comment that the Ising states $\ket{\vect{\sigma}}$ do not directly correspond to physical electronic states, but merely play a bookkeeping role to provide a convenient basis for the purpose of representing spin operators. The relation between different Hilbert spaces is illustrated in \figref{fig:spaces}.

\begin{figure}[htbp]
\begin{center}
\includegraphics[width=0.9\columnwidth]{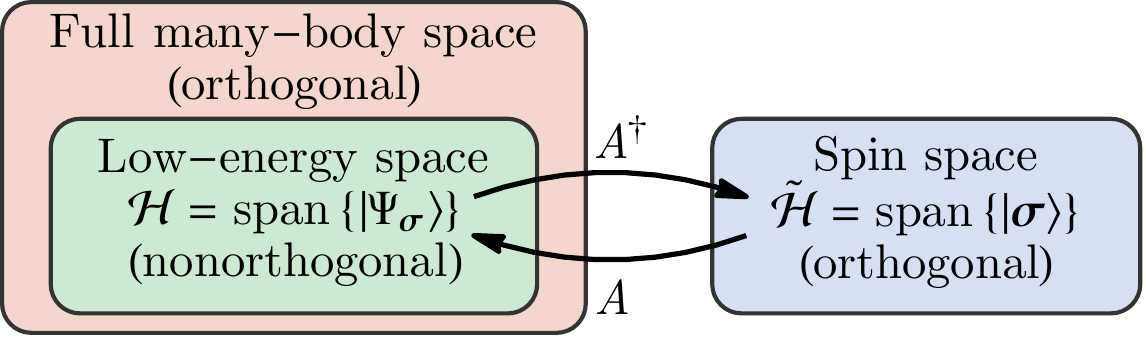}
\caption{The lattice model in \eqnref{eq:H} is defined in the full many-body Hilbert space spanned by the orthogonal Fock states of electrons, which includes a low-energy subspace $\scH$ spanned by the nonorthogonal trial states $\ket{\Psi_\vect{\sigma}}$ in \eqnref{eq:ketsigma}. A spin Hilbert space $\tilde{\scH}$ spanned by the orthogonal Ising basis $\ket{\vect{\sigma}}$ is introduced to represent the effective spin model. The linear map $A$ (and $A^\dagger$) connects $\scH$ and $\tilde{\scH}$.}
\label{fig:spaces}
\end{center}
\end{figure}

We want to project the many-body Hamiltonian \eqnref{eq:H} to the low energy subspace $\scH$ of electrons and then translate it to the spin space $\tilde{\mathcal{H}}$. The complication arises from the nonorthogonality of the many-body trial states $\ket{\Psi_\vect{\sigma}}$ that span the low energy subspace $\scH$. Here we present a systematic approach to deal with the nonorthogonality. First we introduce a non-unitary linear transformation $A : \tilde{\mathcal{H}} \to \mathcal{H}$ to map the Ising basis $\ket{\vect{\sigma}}$ to the many-body trial states $\ket{\Psi_\vect{\sigma}}$ with corresponding spin configuration,
\begin{equation}
    A=\sum_{\vect{\sigma}}\ket{\Psi_\vect{\sigma}}\bra{\vect{\sigma}}.
\end{equation}
Next we define the adjoint operator $A^{\dagger} = \sum_{\vect{\sigma}}\ket{\vect{\sigma}}\bra{\Psi_\vect{\sigma}}$, such that $\braket{\Psi_\vect{\sigma}}{A \vect{\tau}} = \braket{A^{\dagger} \Psi_\vect{\sigma}}{\vect{\tau}}$. Note that $A$ is not a unitary transformation, so $A^\dagger\neq A^{-1}$. In fact, $A^{-1} = \sum_{\vect{\sigma}}\ket{\vect{\sigma}}\bra{\bar{\Psi}_\vect{\sigma}}$, where $\{\ket{\bar{\Psi}_\vect{\sigma}}\}$ is the dual basis to $\{\ket{\Psi_\vect{\sigma}}\}$ such that $\la\bar{\Psi}_\vect{\sigma}|\Psi_\vect{\tau}\ra=\delta_{\vect{\sigma\tau}}$.  Now we can use the operators $A$ and $A^{\dagger}$ to project the identity operator $\id$ and Hamiltonian $H$ to the Ising basis,
\begin{equation} \label{eq:transformation}
    \begin{aligned}
        \id &\to A^{\dagger}A\equiv G,\\
        H &\to A^{\dagger} H A \equiv \tilde{H}.
    \end{aligned}
\end{equation}
Under the projection, the original eigen problem $H\ket{\Psi}=E\ket{\Psi}$ becomes a generalized eigen problem by inserting the identity operator $A A^{-1}$ and multipling $A^{\dagger}$ from the left,
\begin{equation} \label{eq:general}
    \tilde{H}\ket{\Phi}=EG\ket{\Phi},
\end{equation}
where $\ket{\Phi} \equiv A^{-1}\ket{\Psi}$ is the representation of the many-body eigenstate $\ket{\Psi}$ in the Ising basis. Now the generalized eigen problem in \eqnref{eq:general} is formulated in the spin Hilbert space $\tilde{\scH}$ with a nice orthogonal basis.

After the projection, we can expand the many-body Gram matrix $G$ and the projected Hamiltonian $\tilde{H}$ as linear combinations of permutation operators $\rchi_\scP$ in the spin space,
\eqs{\label{eq:GH formula}
G&=\sum_{\scP\in S_N}(-)^\scP G_\scP\rchi_\scP,\\
\tilde{H}&=\sum_{\scP\in S_N}(-)^\scP H_\scP\rchi_\scP,}
where $S_N$ denotes the permutation group over all orbitals $I$, $(-)^\scP$ is the sign of the permutation $\scP$, and $\rchi_\scP$ is the permutation operator that permutes the spins among different orbitals $\rchi_\scP \ket{\{\sigma_I\}} = \ket{\{\sigma_{\scP^{-1}(I)}\}}$. For two-spin and three-spin permutations, $\rchi_\scP$ can be expressed with the familiar Heisenberg term and chiral spin term,
\begin{equation} \label{eq:permutation}
    \begin{aligned}
        \rchi_{(IJ)} &= \tfrac{1}{2}+2\vect{S}_I\cdot\vect{S}_J,\\
        \rchi_{(IJK)} &= \tfrac{1}{4}+\vect{S}_I\cdot\vect{S}_J+\vect{S}_J\cdot\vect{S}_K+\vect{S}_K\cdot\vect{S}_I\\
        &\phantom{=} - 2\ii\vect{S}_I\cdot(\vect{S}_J\times\vect{S}_K). 
    \end{aligned}
\end{equation}
To specify the coefficients $G_\scP$ and $H_\scP$, we introduce the hopping tensor $t_{IJ}$ and the interaction tensor $U_{IJKL}$,
\eqs{\label{eq:matrix}
t_{IJ}&=\sum_{ij}\phi_I^*(i)t_{ij}\phi_J(j),\\
U_{IJKL}&=\sum_{ij}\phi_I^*(i)\phi_J(i)U_{ij}\phi_K^*(j)\phi_L(j),}
where $t_{ij}$ and $U_{ij}$ are the bare hopping and interaction coefficients in $H$, as introduced in \eqnref{eq:H}. Given the tensors $g_{IJ}$ in \eqnref{eq:g} and $t_{IJ},U_{IJKL}$ in \eqnref{eq:matrix}, the coefficients $G_\scP$, $H_\scP\equiv T_\scP +U_\scP$ in \eqnref{eq:GH formula} are given by
\eqs{\label{eq:co}G_\scP&=\prod_{I}g_{\scP(I)I},\\
T_\scP&= \sum_{I}t_{\scP(I)I}\prod_{J\neq I}g_{\scP(J)J},\\
U_\scP&= \frac{1}{2}\sum_{I\neq J}U_{\scP(I)I\scP(J)J}\prod_{K\neq I,J}g_{\scP(K)K},}
where $T_\scP$ and $U_\scP$ denotes the contribution from the hopping and the interaction terms respectively. 

To simplify the notation, we introduce the diagrammatic representation: a closed loop of arrows represents a permutation cycle among the orbitals. If an arrow from $I$ to $J$ is labeled by $t$ or $g$, it contributes a factor of $t_{IJ}$ or $g_{IJ}$; if two arrows, e.g. one from $I$ to $J$ and the other from $K$ to $L$, are connected and labeled by $U$, it contributes a factor of $U_{IJKL}$. For example,
\begin{equation}
    \dia{tg}{34}{-17}\equiv t_{IJ}g_{JI}, \quad
    \dia{UF}{14}{-7} \equiv U_{IJJI}.
\end{equation}
Using the diagrammatic representations, one can expand $G$ and $\tilde{H}$ in terms of spin permutations $\rchi_\scP$ order by order. To the order of two-spin exchange, we have 
\begin{equation}\label{eq:GH}
\begin{split}
G&=\id-\sum_{(IJ)}\Big(\dia{gg}{34}{-15}\Big)\rchi_{(IJ)}+\cdots,\\
\tilde{H}&=\sum_{I}\Big(\id-\sum_{(KL)}\Big(\dia{ggKL}{34}{-15}\Big)\rchi_{(KL)}\Big)\\
&\hspace{36pt}\times\Big(\dia{t}{13}{-5}+\frac{1}{2}\sum_{J}\dia{UH}{18}{-6}\Big)\\
&-\sum_{(IJ)}\Big(\dia{tg}{34}{-15}+\frac{1}{2}\dia{UF}{14}{-5}+\frac{1}{2}\sum_{K}\dia{gU}{37}{-15}\Big)\rchi_{(IJ)}\\
&+\cdots,
\end{split}
\end{equation}
where the permutation operator $\rchi_{(IJ)}$ can eventually be written in terms of spin operators as in \eqnref{eq:permutation}. Higher order permutations can be included systematically based on \eqnref{eq:GH formula}. For all summations appeared in \eqnref{eq:GH}, it is assumed that the orbitals labeled by different indices do not coincide. In the orthogonal limit $g_{IJ} = \delta_{IJ}$, all diagrams in \eqnref{eq:GH} that contains $g$-labeled (red) arrows will vanish, such that $G$ reduces back to the identity operator $\id$ and $\tilde{H}$ reduces to the following three terms
\begin{equation}\label{eq:HartreeFock}
\tilde{H}=\sum_{I}\dia{t}{13}{-5}+\frac{1}{2}\sum_{I\neq J}\Big(\dia{UH}{18}{-6}-\dia{UF}{14}{-5}\;\rchi_{(IJ)}\Big),
\end{equation}
The first diagram is the band energy. The second and third diagrams are respectively the Hartree and the Fock energies between electrons from orbitals $I$ and $J$ (where the Fock interaction is accompanied with the spin exchange $\rchi_{(IJ)}$). For nonorthogonal orbitals, the metric $g_{IJ}$ becomes non-trivial, then non-vanishing terms (containing $g$-labeled arrows) in \eqnref{eq:GH} suggest contributions beyond Hartree-Fock approximation, which lead to new channels of spin exchange interactions. Furthermore, there exist three- and even more spin interactions that are conventionally only present through the interaction-suppressed super-exchange effects. The various spin permutation interactions competing with each other could result in a frustrated quantum magnet with a rather rich phase diagram. To determine the ground state of the low-energy spin degrees of freedom in Wannier-obstructed Mott insulators, one will need to solve the generalized eigen problem in \eqnref{eq:general}.

\subsection{Superexchange Interactions from Perturbation} \label{sec:perturbation}

In the above discussion, we project the many-body Hamiltonian $H$ to the low-energy subspace $\scH$ spanned by the Mott states $\ket{\Psi_\vect{\sigma}}$ to derive the effective spin model in the flat-band limit  $W\ll U$. Away from the flat-band limit, the electrons can virtually hop to the neighboring orbitals and back, which gives rise to the superexchange interactions among the spins. To capture such effect, we should go beyond the low-energy subspace $\scH$ and consider the perturbation effects in orders of $(t/U)$.

In the following, we analyze the perturbative corrections to the effective spin model within our framework. In \appref{app:perturbation}, we review the generalized perturbation theory for nonorthogonal basis. We apply the perturbation theory to the spin dynamics in the low energy manifold $\mathcal{H}$ by treating the hopping Hamiltonian $H_t$ in \eqnref{eq:Ht} as a small perturbation.\footnote{Strictly speaking, we should treat the bandwidth $W$ as perturbation, such that $H_t$ corresponds to the difference $H_W - H_{W=0}$, where $H_{W=0}$ corresponds to the band-flattened version of the band structure, but this will not affect our formulation of the general approach.} Consider the high energy subspace spanned by states $\ket{n,\alpha}$ with double occupancy but still the same number of electrons, where $n$ labels the number of doubly occupied orbitals and $\alpha$ labels the configuration. Assuming that the zeroth order energy is completely determined by $n$ and states with different $n$'s are orthogonal, i.e. $\braket{n,\alpha}{m,\beta} = \delta_{mn} G_{n\alpha\beta}$, we get the following energy correction from the second-order perturbation theory
\begin{equation} \label{eq:pert2}
    \bra{\Psi_{\vect{\sigma}}}H^{(2)}\ket{\Psi_{\vect{\tau}}} = - \sum_{n > 0} \frac{\bra{\Psi_{\vect{\sigma}}}H_t\ket{n,\alpha}G_{n}^{\alpha \beta}\bra{n,\beta}H_t\ket{\Psi_{\vect{\tau}}}}{E_n - E_0}
\end{equation}
where $G_{n}^{\alpha \beta}$ is the inverse of $G_{n\alpha \beta}$. Since the dominant contribution of $\bra{\Psi_{\vect{\sigma}}}H_t\ket{n,\alpha}$ comes from states with only one doubly occupied site, we can approximate \eqnref{eq:pert2} by
\begin{equation}
    \bra{\Psi_{\vect{\sigma}}}H^{(2)}\ket{\Psi_{\vect{\tau}}} \approx - \frac1U \left ( \bra{\Psi_{\vect{\sigma}}}H_t^2 - H_t \id_{\Psi} H_t \ket{\Psi_{\vect{\tau}}} \right )
\end{equation}
where $U$ is the energy required to create one double occupancy, and $\id_{\Psi}=\sum_{\vect{\sigma}}\ket{\Psi_\vect{\sigma}}\bra{\bar{\Psi}_\vect{\sigma}}$ is the projection operator to the low energy subspace. Now we can use the projection introduced previously to write down the Hamiltonian $\tilde{H}^{(2)} \equiv A^{\dagger} H^{(2)} A$ in the Ising basis,
\begin{equation} \label{eq:pertco}
    \tilde{H}^{(2)} = - \frac{1}{U} \left ( \widetilde{H_t^2} - \tilde{H}_t G^{-1} \tilde{H}_t \right ),
\end{equation}
where $\widetilde{H_t^2} \equiv A^{\dagger} H_t^2 A$ follows our convention, and in the second term we use $A A^{-1} = \left ( A^{\dagger} \right )^{-1} A^{\dagger} = \id_{\Psi}$. We can further write $\tilde{H}^{(2)}$ in terms of hopping tensors,
\begin{equation}
    \begin{split}
    \tilde{H}^{(2)} & = - \frac{1}{U} \Big(\sum_{\scP\in S_N}(-)^\scP \left ( T^2 \right )_\scP \rchi_\scP \\ 
    & \hspace{36pt}- \sum_{\scP\circ\scQ \in S_N}(-)^{\scP\circ\scQ} T_\scP G^{-1} T_{\scQ} \rchi_{\scP \circ \scQ} \Big) ,
\end{split} 
\end{equation}
where $\circ$ denotes the composition of permutations (such that $\rchi_{\scP\circ\scQ}=\rchi_{\scP}\rchi_{\scQ}$), and the coefficients $T_\scP$ and $\left ( T^2 \right )_\scP$ are given by 
\begin{equation} \label{eq:H2formula}
    \begin{aligned}
        T_\scP &= \sum_{I}t_{\scP(I)I}\prod_{J\neq I}g_{\scP(J)J},\\
        \left ( T^2 \right )_\scP &= \sum_{I}\left (t^2 \right)_{\scP(I)I}\prod_{J\neq I}g_{\scP(J)J}\\
        &\phantom{=}+\sum_{I\neq J}t_{\scP(I)I}t_{\scP(J)J}\prod_{K\neq I,J}g_{\scP(K)K},
    \end{aligned}
\end{equation}
and $\left( t^2 \right)_{IJ}=\sum_{ijk}\phi_I^*(i)t_{ik}t_{kj}\phi_J(j)$. To the order of two-spin exchange, we have
\eqs{\label{eq:H2}
 \tilde{H}^{(2)} &= \frac{1}{U} \sum_{I}\Big(\id-\sum_{(JK)}\Big(\dia{ggJK}{34}{-15}\Big)\rchi_{(JK)}\Big)\\
 &
\hspace{46pt}\times\Big(\dia{tIt}{13}{-4}-\dia{t2}{13}{-5}\Big)\\
 &\hspace{12pt}-\frac{1}{U}\sum_{(IJ)} \Big(2\Big(\dia{tItg}{34}{-15}+\dia{tgJt}{34}{-15}\Big)\\
&\hspace{46pt}-\dia{tt}{34}{-15}-\dia{t2g}{34}{-15}\Big) \rchi_{(IJ)} 
+\cdots}
For all summations appeared in \eqnref{eq:H2}, it is assumed that orbitals labeled by different indices do not coincide. In the orthogonal limit $g_{IJ} = \delta_{IJ}$, all diagrams containing $g$-labeled arrows will vanish, such that the second-order perturbation
\eq{
 \tilde{H}^{(2)} =\frac{1}{U}\sum_{(IJ)} \Big(\dia{tt}{34}{-15}\Big) \rchi_{(IJ)} +\text{const.}}
contains only the usual $t^2/U$ antiferromagnetic superexchange interaction. When we allow non-trivial $g_{IJ}$, both $T_\scP$ and $T_\scP^2$ can give rise to new channels in spin interactions. Mediated by $g_{IJ}$, three- or higher order spin interactions can also arise even at the level of second-order perturbation in $(t/U)$. Similar treatment can be generalized to higher order perturbation theory.

\section{Effective Spin Hamiltonian} \label{sec:spin}

In this section, we discuss how to solve the effective spin model constructed in \secref{sec:theory}. There are two major challenges. First, the model is presented as a generalized eigenvalue problem in \eqnref{eq:general}, which requires to diagonalize a complicated operator $G^{-1} \tilde{H}$ that is not guaranteed to be short-ranged on the lattice. Second, the summation over all permutation in \eqnref{eq:GH formula} is hard to track even numerically. Both challenges can be resolved by separating connected permutations and disconnected permutations. A connected permutation is formally defined as a cyclic permutation (or cycle) in group theory, while those that are not cycles are called disconnected in the following text. As demonstrated in \eqnref{eq:GH}, diagrams in $\tilde{H}$ can be organized by the connected diagram (containing $t$ or $U$), each followed by a series of disconnected diagrams (containing $g$ only). The series of disconnected diagrams is similar to $G$, which motivates us to factor $G$ out of $\tilde{H}$. However, residue terms are generated due to over counting diagrams with colliding indices, illustrated as follows
\begin{equation}\label{eq:H=Gexpand}
\tilde{H}=G\sum\Big(\dia{H}{26}{-12}-\dia{HG}{26}{-12}+\dia{HGG}{26}{-12}+\cdots\Big),
\end{equation}
where $\scP_0$ sums over single-cycle (i.e.~connected) permutations and $\scP_{1,2,\cdots}$ sum over those permutations that have non-vanishing index overlap with every other permutation (including $\scP_0$) in the diagram. The explicit expression and a detailed convergence analysis of the entire series can be found in \appref{sec:convergence}. The rough idea is that the expansion is controlled by the small parameter $g \equiv |g_{\langle IJ \rangle}| \ll 1$ for well-localized orbitals $\phi_I$. Since $n$-spin interaction can only be generated by $\rchi_{\mcP}$ with $\text{length}(\mcP)\geq n$, they are suppressed by at least $g^{n-2}$. Thus the spin dynamics is still dominated by few-spin interactions as expected. Furthermore, for these few-spin interactions of small $n$, the contribution from sub-leading terms in the series \eqnref{eq:H=Gexpand} is further suppressed by $ng^2$ compared to the leading $g^{n-2}$ term. Thus, for those dominating few-spin interactions, we only need to consider the leading order connected diagrams:
\be
G^{-1}\tilde{H}\simeq H_c \equiv \sum_{\mcP_0\in S^*_N}(-)^{\mcP_0} H_{\mcP_0}\rchi_{\mcP_0},
\label{hc}
\ee
where $S^*_N$ is the set of single-cycle permutations in $S_N$. With this approximation, the general eigenvalue problem in \eqnref{eq:general} reduces to the ordinary eigenvalue problem
\begin{equation} \label{eq:connect}
    H_c \ket{\Phi} = E \ket{\Phi}
\end{equation}
where $H_c$ collects the connected pieces in $\tilde{H}$. The problem reduces to solving the effective spin Hamiltonian $H_c$ on a set of orthogonal basis $\ket{\vect{\sigma}}$, for which many well-developed analytical and numerical tools in quantum magnetism can be applied. Similar treatment applies to the perturbation theory described in \secref{sec:perturbation}, where the $G^{-1}$ in \eqnref{eq:pertco} has cancelled the disconnected pieces in one of the $\tilde{H}_t$'s. Then the effective spin Hamiltonian $H_c$ in \eqnref{eq:connect} gets corrected by $H_c^{(2)}$. In conclusion, despite of the nonorthogonality of the Mott basis and the complication of the generalized eigen problem, we can still work with an effective spin Hamiltonian $H_c$ in a ordinary eigen problem to  describe the low-energy spin degrees of freedoms approximately.

The full spin-rotation symmetry dictates the spin Hamiltonian $H_c$ to take the general form of
\begin{equation}
    H_c = \sum_{\langle IJ \rangle} J_{IJ} \vect{S}_I\cdot\vect{S}_J + \sum_{\langle IJK \rangle} K_{IJK} \vect{S}_I\cdot(\vect{S}_J\times\vect{S}_K) + \cdots
\end{equation}
up to three-spin interactions. Here we keep only the near neighbor interactions and analyze the coupling strengths of the Heisenberg interaction $\vect{S}_I\cdot\vect{S}_J$ and the chiral spin interaction $\vect{S}_I\cdot(\vect{S}_J\times\vect{S}_K)$.  In \eqnref{eq:GH}, among the terms attached with $\rchi_{\langle IJ \rangle}$, to the leading order of $g$, those contribute to the Heisenberg interaction are 
\begin{equation} \label{eq:2spin}
    J_{IJ} = - 4 \Re \Big ( \dia{tg}{34}{-17}+ \frac{1}{2}\dia{UF}{14}{-7} \Big).
\end{equation}
The second term is the familiar Fock exchange term, which is always positive and thus provides inter-site Hund’s coupling.
We remark that the Fock term is non-vanishing even in the orthogonal limit due to the density-density overlap between orbitals.\cite{PhysRevB.99.205150,senthil2019narrow} The first term is a new channel arising from nonorthogonality, which can be either ferromagnetic or antiferromagnetic depending on its sign. This channel could potentially provide a stronger antiferromagnetism in the strong coupling (large $U$) limit than the usual $t^2/U$ superexchange antiferromagnetism,\cite{RevModPhys.70.1039} which may enhance the magnetic frustration in the spin model. When the competition between ferromagnetism and antiferromagnetism reaches a balance in certain parameter regime, higher-order spin interactions will start to dominate the spin model. For example, there are two terms attached with $\rchi_{\langle IJK \rangle}$ that contribute to the three-spin ring exchange interaction,
\begin{equation}\label{eq:3spin}
    K_{IJK} = 2 \Im \Big (\dia{tgg}{40}{-20}+
    \dia{Ug}{38}{-19}+ \text{perm.} \Big ).
\end{equation}
Both terms give rise to new channels contributing to the chiral spin interaction $\vect{S}_I\cdot(\vect{S}_J\times\vect{S}_K)$. Compared to the usual $t^3/U^2$ chiral spin interaction from the 3rd order superexchange channel,\cite{PhysRevLett.116.137202,Bauer2014Chiral} these nonorthogonality enabled exchange channels could provide stronger chiral spin interaction in the strong coupling (large $U$) limit, in favor of the chiral spin liquid ground state.\cite{Wen1989Chiral}

We can repeat the analysis for the second-order perturbative correction $\tilde{H}^{(2)}$. It can also be approximated by the connected part $H^{(2)}_c$ as argued previously. If we look at the Heisenberg interaction and the chiral spin interaction,
\begin{equation}
    H^{(2)}_c = \sum_{\langle IJ \rangle} J^{(2)}_{IJ} \vect{S}_I\cdot\vect{S}_J + \sum_{\langle IJK \rangle} K^{(2)}_{IJK} \vect{S}_I\cdot(\vect{S}_J\times\vect{S}_K) + \cdots
\end{equation}
To the leading order in $g$, we get
\begin{equation}\label{eq:2nd}
\begin{split}
J^{(2)}_{IJ} &= \frac{4}{U} \Big(\dia{tt}{34}{-16}\Big),\\ 
K^{(2)}_{IJK} &= - \frac{2}{U} \Im \Big(\dia{tgt}{40}{-19}+\text{perm.}\Big).
\end{split}
\end{equation}
The Heisenberg term $J_{IJ}^{(2)}$ contains contributions from the standard superexchange channel. The chiral spin term $K_{IJK}^{(2)}$ contains contributions from a new channel as nonorthogonal ring superexchange. 

Collecting all contributions from \eqnref{eq:2spin}, \eqnref{eq:3spin} and \eqnref{eq:2nd}, there are three channels that contribute to the Heisenberg interaction -- the $gt$ term from the nonorthogonal exchange, the $U$ term from the conventional exchange, and the $t^2/U$ term from the superexchange; and there are four channels that contribute to the chiral spin interaction -- the $g^2t$ and $gU$ terms from the nonorthogonal ring exchange in \eqnref{eq:3spin}, the $gt^2/U$ term from second-order perturbation theory, and the conventional $t^3/U^2$ term from third-order perturbation theory (which will appear in $H_c^{(3)}$). In the strong coupling (large $U$) limit, the novel channels originated from the orbital nonorthogonality typically dominate over the conventional superexchange and ring exchange channels. They are crucial to the analysis of the magnetism in Wannier-obstructed Mott insulators.

\section{Constructing Localized Orbitals} \label{construct}

In previous discussions, we have established the low-energy effective spin model starting from the assumption of electrons localized on a set of nonorthogonal orbitals $\phi_I(i)$. Now we come back to discuss why such arrangement is favorable and how these orbitals should be determined. In correlated materials, when the interaction energy dominates over the band width $U\gg W$, it becomes energetically favorable to recombine single-particle states in the energy band to form localized orbitals, and to arrange one electron in each localized orbital to reduce the repulsive interaction. Although the Wannier obstruction prevents us from constructing orthogonal Wannier orbitals, it does not prevent us from constructing nonorthogonal and localized orbitals, on which electrons can reside. The criterion is to minimize the total energy of the system. Thus we start from the trial many-body state $\ket{\Psi_\vect{\sigma}}$ proposed in \eqnref{eq:ketsigma}, and minimize its energy $\bra{\Psi_\vect{\sigma}}H\ket{\Psi_\vect{\sigma}}$ so as to optimize the localized orbitals $\phi_I(i)$ that were used to construct the trial state $\ket{\Psi_\vect{\sigma}}$.

However, the energy $\bra{\Psi_\vect{\sigma}}H\ket{\Psi_\vect{\sigma}}$ still depends on the spin configurations $\vect{\sigma}$, which makes the objective function undetermined. To proceed, we focus on the ``spin-independent part'' of the energy, which is naturally the constant piece in the effective spin model $H_c$. It corresponds to the Hartree energy $H_{()}$ in front of the identity operator $\rchi_{()}$, as given in \eqnref{eq:co},
\begin{equation}\label{eq:E}
\begin{split}
H_{()}&=\sum_{I}\dia{t}{13}{-6}+\frac{1}{2}\sum_{I\neq J} \dia{UH}{18}{-8}\\
&=\sum_{I}\sum_{n,ij}\phi_I^*(i)\epsilon_{n}(\ii\nabla)_{ij}\phi_I(j)\\
&\hspace{12pt}+\frac{1}{2}\sum_{I\neq J}\sum_{ij}U_{ij}|\phi_I(i)|^2|\phi_J(j)|^2.
\end{split}
\end{equation}
Here $\epsilon_{n}(\ii\nabla)$ denotes a real space representation of the band structure $\epsilon_{n}(\vect{k})$. We will assume that the $n$th band $\epsilon_{n}$ is well separated from other bands by a large band gap $\Delta$, which is much larger than the interaction strength $U$. The separating energy scales ($\Delta\gg U$) allows us to focus on the $n$th band only and find the optimal orbitals that can minimize the energy $H_{()}$. For the purpose of designing lattice models, the required separation of energy scales ($\Delta\gg W$) can be realized by applying the band flattening approach.\cite{Wang2012Fractional,Parameswaran2013Fractional,Zeng2018SUN-fractional} The energy optimization $\delta H_{()}/\delta \phi_I=0$ boils down to solving the following Gross-Pitaevskii (GP) equation,
\eq{\label{eq:GPeqn}\Big( \sum_n \epsilon_{n}(\ii\nabla)+\sum_{I\neq J} \sum_{j} U_{ij}|\phi_J(j)|^2\Big) \phi_{I}(i)=E \phi_I(i).}
The ground state can be found by imaginary time evolution of $\phi_I(i)$ under the GP Hamiltonian (i.e.~the operator on the left-hand-side of \eqnref{eq:GPeqn}). In each step of the evolution, the updated $\phi_I(i)$ orbital will be broadcasted to other unit cells by translation. During the evolution, we do not impose the orthogonality among Wannier orbitals, so the orbitals we eventually obtain are in general nonorthogonal. The interaction will automatically determine whether or not the optimal orbitals will spontaneously break the point group symmetry. 

We make a remark on the completeness of the localized orbitals. As is well known, for a single Chern band (suppose it is isolated for simplicity), it is impossible to choose a set of Bloch states $\ket{\psi_{\bs k}}$ such that it is normalized and smooth in the Brillouin zone torus.\cite{PhysRevLett.98.046402} This implies a Wannier obstruction to construct a set of exponentially localized orbitals which are orthonormal, complete and related to each other by translations. What if we relax the orthonormality constraint? Unfortunately, it is still impossible to find exponential localized orbitals which are complete and translation invariant, even if they are allowed to be nonorthogonal. If such orbitals exist, we can Fourier transform to obtain a set of unnormalized but smooth and nowhere vanishing Bloch states. Further normalizing these states leads to normalized and smooth Bloch states, causing a contradiction. It turns out that, under certain assumption which holds for the example we will be considering, it is possible to find a set of exponentially localized orbitals which are complete but nonorthogonal and also breaks the translation symmetry for one orbital. In other words, $N-1$ number of orbitals are related to each other by translations but the last orbital takes a different form, where $N$ is the total number of orbitals. We proved this claim in \appref{ChernBandWannier}. The orbital obtained from the energy optimization procedure described above, if exponentially localized, can be used to generate these $N-1$ translation related orbitals, and a different last orbital is needed to complete a basis. We expect a single orbital to have little effect on the overall physics of the system, thus we ignore this subtlety hereafter. In the appendix, we also show that the dual orbital formalism, which has been useful in the study of Hubbard model ferromagnetism in topologically trivial bands \cite{Mielke&Tasaki,Tasaki1996}, is not applicable to a Chern band with the orbitals we constructed. This is one of the reasons that we take a different approach in this work. 

\section{Application} \label{model}

\subsection{Kagome Lattice Model and Wannier Obstructions}

In this section, we apply our theory to an interacting fermion model on a Kagome lattice, whose Hamiltonian $H=H_t+H_U$ takes the form of \eqnref{eq:H}. The hopping Hamiltonian $H_t$ consists of purely imaginary nearest neighbor hopping only, with $t_{ij}=\ii/2$ for $j \to i$ along the bond direction as specified in \figref{fig:Kagome}(a). The model was introduced by Ref.~\onlinecite{PhysRevB.99.125122} to demonstrate fragile topological insulators. This lattice model preserves translation symmetry and six-fold rotation symmetry $C_6$ (about the hexagon center).  The single particle spectrum consists of three bands fully gapped from each other as shown in \figref{fig:Kagome}(b).

\begin{figure}[htbp]
\begin{center}
\includegraphics[width=0.9\columnwidth]{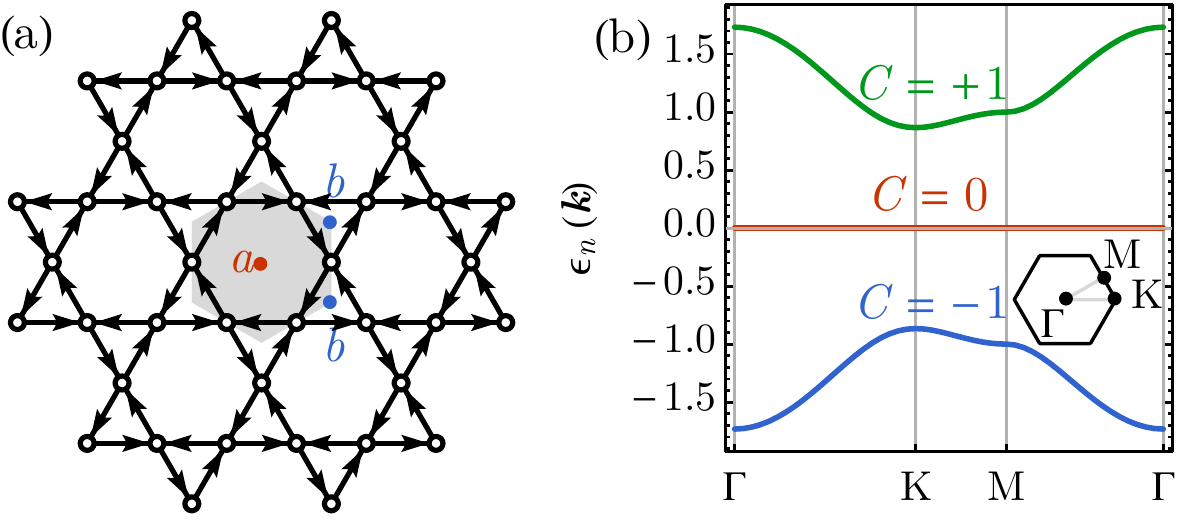}
\caption{(a) Kagome lattice model with imaginary hopping. Bound directions are specified by arrows. The gray hexagon marks out the unit cell. The Wyckoff positions $a$ and $b$ are respectively the hexagon and triangle centers. (b) The band structure of the Kagome lattice model, with the band Chern number $C$ labeled. The inset shows the Brillouin zone and high-symmetry momentum points.}
\label{fig:Kagome}
\end{center}
\end{figure}

The middle band is strictly flat with zero Chern number $C=0$, which is an obstructed trivial band. Its Wannier obstruction is only due to the lack of lattice sites at the hexagon center Wyckoff position, such that the obstruction can be lifted by adding empty sites. The top and bottom bands are Chern bands of Chern number $C=\pm1$ respectively. They combined together to form a set of fragile topological bands, which is Wannier obstructed and does not admit an effective tight-binding model description (despite their total Chern number being zero),\cite{PhysRevB.99.125122,Liu2019Shift} because the symmetry representations at high-symmetry momentum points do not match those of any atomic insulators of the same lattice symmetry. This situation is analogous to the middle bands in the tBLG around the charge neutrality, which have a fragile topological band structure for each valley.\cite{PhysRevB.98.085435,PhysRevLett.121.126402} Placing the tBLG on the aligned hexagonal boron nitride (hBN) substrate further opens up the band gap at charge neutrality. The top and bottom bands are valley Chern bands of opposite Chern numbers,\cite{2019arXiv190108110B} which resembles the Chern bands in the Kagome lattice model as in \figref{fig:Kagome}(b). In the following, we will first set aside the possible connection to tBLG systems and focus on the Kagome lattice model itself. By applying our proposed approach to analyze this toy model, we wish to gain general understanding about Wannier obstructed Mott insulators, which could facilitate future study of correlated insulating phases in Moir\'e superlattice systems.

\subsection{Nonorthogonal Localized Orbitals in Wannier Obstructed Bands}

We follow the method described in \secref{construct} to construct the localized but nonorthogonal Wannier orbitals. If we isolate the middle flat band and apply on-site repulsive interaction $U_{ij}=U_0\delta_{ij}$, the orbital $\phi_I(i)$ that minimizes the energy $H_{()}$ defined in \eqnref{eq:E} is found to be strictly localized around the hexagon as shown in \figref{fig:orbitals}(a). This orbital is similar symmetry-wise to the localized orbitals proposed in Ref.\onlinecite{PhysRevA.83.023615} for a different model, where the strict localization in both cases comes from the destructive interference between orbitals.

\begin{figure}[htbp]
\begin{center}
\includegraphics[width=0.75\columnwidth]{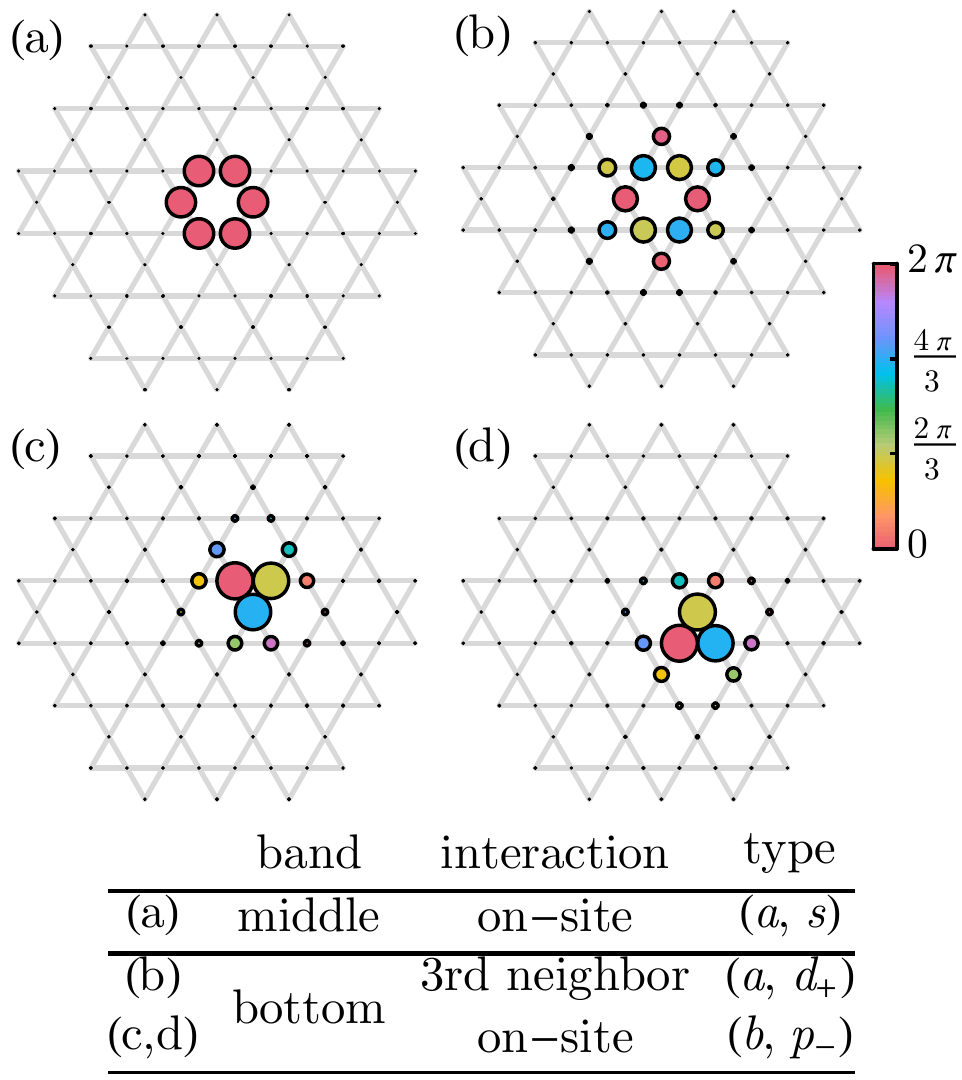}
\caption{Localized orbitals found by minimizing objective energy $H_{()}$ (the constant piece of the effective spin Hamiltonian $H_c$). The phase of the orbital wave function is specified by the colorbar.}
\label{fig:orbitals}
\end{center}
\end{figure}

If we focus on the bottom Chern band with $C=-1$ and again apply the on-site repulsion $U_0$, we found two degenerated solutions of $\phi_I(i)$ as shown in \figref{fig:orbitals}(c) and (d). These orbitals have similar features as the Wannier orbitals in twisted bilayer graphene (tBLG).\cite{PhysRevX.8.031087,PhysRevX.8.031088} They have three major peaks around the triangular lattice site and are equipped with non-zero angular momentum. To make connection to the tBLG system, we follow the notation in Ref.\onlinecite{PhysRevB.99.195455} to label these orbitals by the Wyckoff positions of their orbital centers and the angular momenta with respect to their orbital centers. The hexagon and the triangle center Wyckoff positions are denoted by $a$ and $b$ respectively as in \figref{fig:Kagome}(a), and angular momentum $0, \pm 1, \pm 2$ are denoted by $s$, $p_{\pm}$ and $d_{\pm}$. The orbital types are summarized in the table of \figref{fig:orbitals}.

\begin{figure}[htbp]

\centering
\includegraphics[width=.9\linewidth]{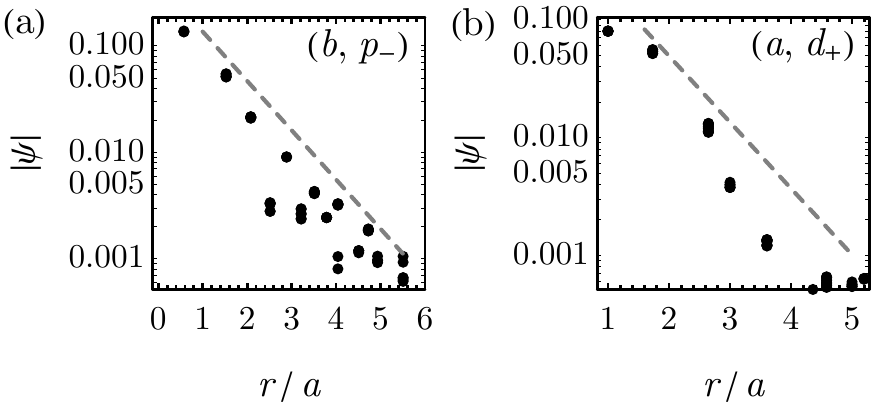}
\caption{Logarithmic norm of orbital wave functions against distance in units of bond length for (a) $(b,p_-)$ orbital  as in Fig.\ref{fig:orbitals}(c,d); (b) $(a,d_+)$ orbital  as in Fig.\ref{fig:orbitals}(b).}
\label{phase}
\end{figure}

In the Mott limit, electrons will spontaneously choose one of the $(b, p_-)$ orbitals in \figref{fig:orbitals}(c) and (d) to reside, which spontaneously breaks the $C_6$ rotation symmetry to the $C_3$ subgroup and results in the $C_3$ nematic phases. However, if we include longer range interactions, the energetically most favorable orbital can be different. For example, if we apply 3rd neighbor repulsion $U_{\la\!\la\!\la ij \ra\!\ra\!\ra}=U_3$ to the bottom Chern band, we obtain a $(a,d_+)$ orbital as shown in \figref{fig:orbitals}(b). This orbital preserves the $C_6$ rotation symmetry, so the corresponding Mott phases are not nematic. As shown in Fig.\ref{phase}, all the orbitals we constructed are indeed exponentially localized. As a result, the tensors $g_{IJ}$, $t_{IJ}$ and $U_{IJKL}$ should all decay exponentially with the inter-orbital distance, so we expect the resulting effective spin model to exhibit well controlled locality.

\subsection{Effective Spin Models and Possible Phases}

In the Mott limit, the spin dynamics for any of these orbitals can be described by the effective spin Hamiltonian on a triangular lattice, as shown in \figref{fig:triangle}. The spin-rotational symmetric Hamiltonian takes the following form
\eq{\label{eq:JK}H= J_1\sum_{\la IJ\ra}\vect{S}_I\cdot\vect{S}_J+\sum_{\la IJK \ra\in \vartriangle / \triangledown}K_{\vartriangle / \triangledown} \b{S}_I\cdot(\b{S}_J\times\b{S}_K)+...}
where three sites $IJK$ surrounding both up and down triangular plaquettes are arranged in the counterclockwise order. Again we focus on the bottom Chern band. The coupling strengths of the nearest-neighbor Heisenberg interaction $J_1$ and the chiral spin interactions $K_{\vartriangle / \triangledown}$ are plotted in \figref{JKmain} as a function of $t/U_0$ or $t/U_3$.

\begin{figure}[htbp]
\begin{center}
\includegraphics[width=0.54\columnwidth]{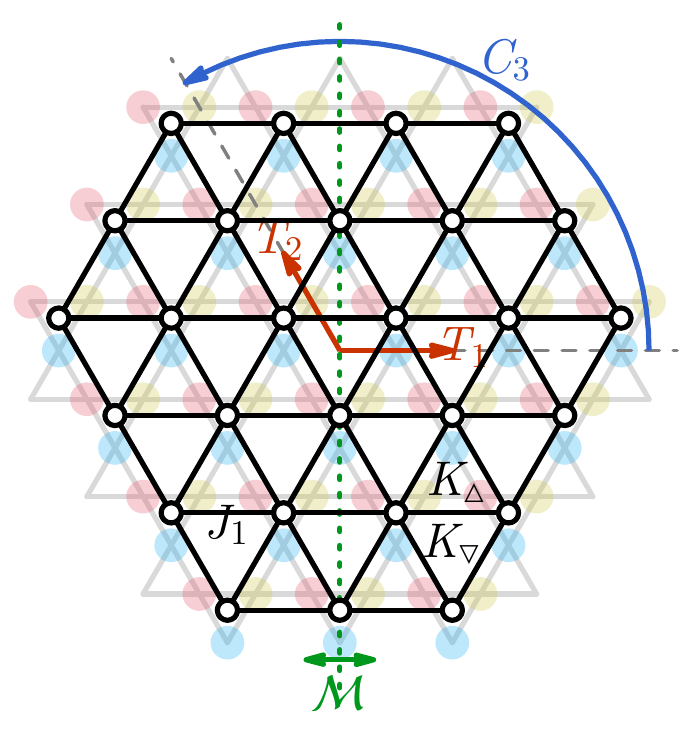}
\caption{The triangular lattice formed by the $(b,p_-)$ orbitals (in the bottom Chern band) arranged on the Kagome lattice. The effective spin model contains the Heisenberg interaction $J_1$ across the bonds and the chiral spin interaction $K_{\vartriangle/\triangledown}$ around the up/down triangles. It respects the translation $T_{1,2}$, three-fold rotation $C_3$ and anti-unitary mirror $\scM$ symmetries.}
\label{fig:triangle}
\end{center}
\end{figure}

Now we briefly discuss symmetries of this effective Hamiltonian and direct readers to \appref{app:PSG} for more details. The electron Hamiltonian has  translation symmetries $T_1$ and $T_2$ along two different directions, six-fold rotation symmetry $C_6$, and an anti-unitary mirror symmetry $\scM$. The optimal localized orbitals $\phi_I$ can spontaneously break some of the symmetries. For example, the $(b,p_-)$ orbital breaks $C_6$ to $C_3$ (see \figref{fig:triangle}). Given the symmetries of the orbitals, we can infer the symmetry of tensors $g_{IJ}$, $t_{IJ}$ and $U_{IJKL}$. It turns out that between nearest neighboring sites $I$ and $J$, $t_{IJ}$ ($g_{IJ}$) can be generated by a single parameter $t \equiv |t_{\la IJ \ra}|$ ($g \equiv |g_{\la IJ \ra}|$) given $T_{1,2}$, $C_3$ and $\scM$. Then the chiral spin interaction $K_{\triangledown}/K_{\vartriangle}$ result from $t_{IJ}$ and $g_{IJ}$ must be opposite on
neighboring triangular plaquettes, i.e. $K_{\vartriangle} = -K_{\triangledown}$. The interaction tensor $U_{IJKL}$ breaks this pattern, but $K_{\triangledown}/K_{\vartriangle}$ result from $U_{IJKL}$ is much smaller than others in this model, so we still have a good approximate symmetry $K_{\vartriangle} \simeq -K_{\triangledown}$. If we further have $C_6$ symmetry like in the case of $(a,d_+)$ orbital, $t_{IJ}$ and $g_{IJ}$ will be restricted to real numbers and furthermore the chiral spin interaction is restricted to be uniform $K_{\vartriangle} = K_{\triangledown}$. This symmetry analysis is in agreement with a previous study on a similar model.\cite{PhysRevB.99.205150}

For the $(b,p_-)$ orbital favored by the onsite interaction $U_0$, the Heisenberg interaction changes from ferromagnetic (FM) to anti-ferromagnetic (AFM) as $t$ increases (\figref{JKmain} (a)). In this case, the nonorthogonality enabled channel $t_{IJ}g_{JI}$ favoring AFM. Thus, AFM interaction starts to dominate after this new channel takes over at large $t$. The same physics happens for the $(a,d_+)$ orbital favored by the 3rd neighbor interaction $U_3$, while the transition occurs at a much smaller $t$ (\figref{JKmain} (b)). Close to the transition, there can be intermediate phases, which requires to take the chiral spin interaction into account. It turns out that near the FM-AFM transition regime, the chiral spin interaction is dominated by the $tg^2$ channel in \eqnref{eq:3spin} (see \appref{app:PSG}). The $(a,d_+)$ orbital respects the $C_6$ symmetry, which constrains the chiral spin interaction to be vanishing small (\figref{JKmain} (b)). On the other hand, the $(b,p_-)$ orbital breaks the $C_6$ symmetry, so $K_{\vartriangle}$ and $K_{\triangledown}$ are approximately related by the staggered pattern (\figref{JKmain} (a)).

\begin{figure}[thbp]
\centering
\includegraphics[width=\linewidth]{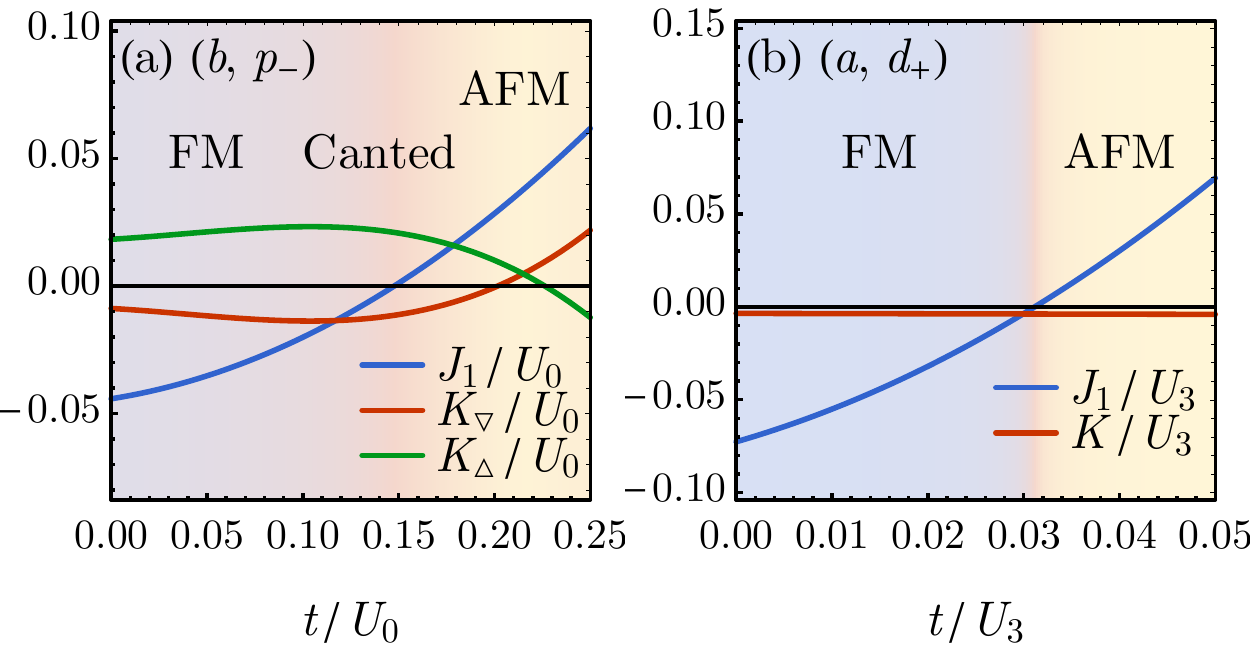}
\caption{The coupling strengths of the nearest-neighbor Heisenberg interaction $J_1$ and the chiral spin interaction $K_{\triangledown}, K_{\vartriangle}$ for (a) the $(b,p_-)$ orbital in \figref{fig:orbitals}(c) favored by the $U_0$ interaction, and for (b) the $(a,d_+)$ orbital favored by the $U_3$ interaction (where $K_{\triangledown}=K_{\vartriangle}=K$). FM/AFM stands for the (anti-)ferromagnetic phase; canted stands for the canted $120^{\circ}$ AFM configuration shown in the inset of \figref{phase_new}.}
\label{JKmain}
\end{figure}

Now we compute the coupling strengths perturbatively with two control parameters $t$ and $g$ to get insight for their behavior in more general orbitals. We approximate the orbitals by only major and secondary peaks, and the orbital wave function is determined by the angular momentum and the amplitude ratio between major and secondary peaks tunable by $g$. Most importantly, the nearest orbital overlap $g$ parameterizes the nonorthogonality of the orbitals, and controls the competition between different magnetic phases. Since $g$ only slightly depends on the magnitude of $U$, we approximate it by a constant in the following analysis. For the $(b, p_-)$ orbital, to the leading order in the hopping $t$ and the orbital overlap $g$, we have
\begin{equation} \label{eq:Jmodel}
J_1 = \frac{t^2}{3U_0} + \frac{2}{3} t g - U_0g^2,
\end{equation}
where new channels in the theory of nonorthogonal basis contribute to those terms that contain $g$. The Heisenberg coupling $J_1$ changes sign at $(U_0/t)^* = 1/g$ as shown in \figref{JKmain}(a). In the flat band limit $U_0/t \gg 1/g$, the FM exchange dominates. In \appref{app:exactFM}, we provide a rigorous proof of the ferromagnetism for Wannier obstructed Mott insulators in the flat band limit. As the band dispersion gets larger (but still on the strong coupling side) $1/g \gg U_0/t \gg 1$, the AFM interaction takes over. Due to the geometric frustration on the triangular lattice (especially when higher order spin interactions are also taken into account), several candidate orders may compete for the ground state, which we will leave for later discussion. 

From the above analysis, we expect the FM phase to become unstable toward AFM-like phases around $U_0/t\simeq1/g$. However, around this point, the exchange interaction $J_1$ tends to vanish, so we need to consider higher order interactions, e.g.~the chiral spin interaction,
\begin{equation} \label{eq:Kmodel}
K_{\triangledown}=-K_{\vartriangle}=\frac{t^3}{2U_0^2}  - \frac{t^2 g}{U_0} - \frac{5}{3} t g^2.
\end{equation}
Again, terms that contain $g$ arise from nonorthogonality enabled channels. The chiral spin interaction dominates the spin model $|K_{\vartriangle,\triangledown}|>|J_1|$ in a narrow window of $\left | U_0/t - 1/g \right | < 13/8$. A similar analysis can be repeated for the $(a,d_+)$ orbital under third neighbor repulsion $U_3$. Combining these information, we get a schematic phase diagram in \figref{phase_new}.

As a reminder, the coefficients in \eqnref{eq:Jmodel} and \eqnref{eq:Kmodel} are specific to the localized orbital in our model. However, we expect the Heisenberg interaction $J_1$ to be quadratic in $t/U$ and $g$ for generic systems, with different order-one coefficients, such that the FM-AFM cross over happens at $t/U\sim 1/g$ scale. A similar analysis can be generalized to the chiral spin interaction which is cubic in $t/U$ and $g$.

\begin{figure}[htbp]
\begin{center}
\includegraphics[width=0.74\columnwidth]{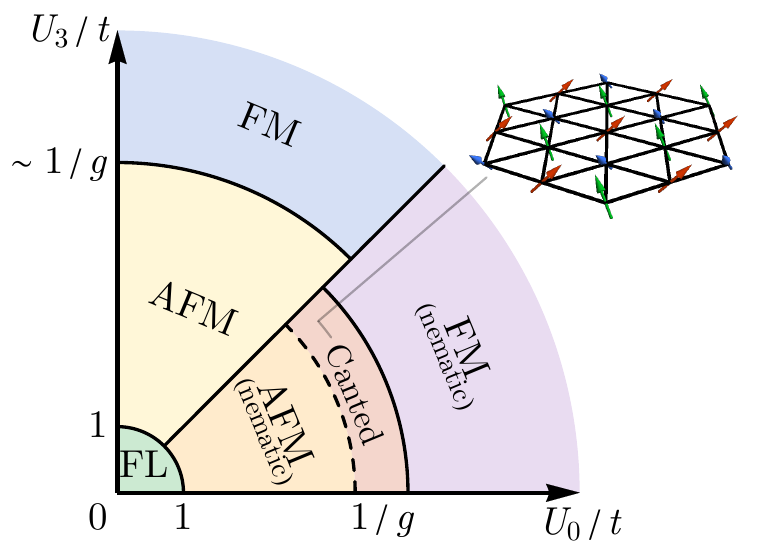}
\caption{Schematic phase diagram. The inset shows the spin configuration of the canted $120^{\circ}$ AFM order.}
\label{phase_new}
\end{center}
\end{figure}

The spin model in \eqnref{eq:JK} can give rise to different phases. We first consider the case when the on-site interaction $U_0$ dominates, which favors the $(b,p_-)$ orbital. The $(b,p_-)$ orbital breaks the $C_6$ symmetry to $C_3$ spontaneously, resulting in nematic phases. In this case, the chiral spin interaction is approximately staggered $K_{\vartriangle}\simeq -K_{\triangledown}$. We present a classical picture of possible phases in the following. Away from the $U_0/t\sim 1/g$ transition regime, the nearest Heisenberg interaction $J_1$ dominates, which leads to a Heisenberg FM state when $J_1<0$ and a $120^{\circ}$ AFM state when $J_1>0$. At the classical level, the in-plane $120^{\circ}$ AFM state can be tuned towards the $z$-axis FM state by canting the $120^{\circ}$ spin configuration in the $x$-$y$ plane toward the $z$-axis, as illustrated in the inset of \figref{phase_new}. The canted $120^{\circ}$ AFM configuration happens to have opposite spin chirality between up and down triangles, which is indeed favored by the staggered chiral spin interaction. We denote this intermediate state as the canted state in \figref{JKmain} and \figref{phase_new}, which carries both non-zero magnetization and staggered scalar spin chirality. This spin configuration has been previously studied as an umbrella-type noncoplanar phase in Ref.\onlinecite{Akagi_2011} within a Kondo system under different settings. Since both the canted state and the $120^{\circ}$ AFM state spontaneously break the spin $\mathrm{U}(1)$ symmetry, they actually belong to the same phase. In contrast, there has to be a phase transition between the canted AFM phase and the Heisenberg FM phase which preserves the spin $\mathrm{U}(1)$ symmetry (\figref{phase_new}). However, this classical picture might be modified under quantum fluctuations and longer-range geometric frustrations. We then comment on the other case when the longer-range interaction becomes important. For example, the third-neighboring interaction $U_3$ favors the $(a,d_+)$ orbital, which preserves all the lattice symmetries. In this case, the chiral spin term is parametrically small. The sign change of $J_1$ still happens when $U_3/t$ is around the order of $1/g$, which drives the transition between FM and AFM phases. Such transition is likely first order (\figref{phase_new}). Finally, when $U/t\sim 1$, the system is no longer captured by the strong coupling theory presented in this work. We simply denote the weak coupling phase as the Fermi liquid (FL) phase, whose instability should be further analyzed using weak coupling approaches. 

Let us further remark on some previous studies on the spin model in \eqnref{eq:JK} regarding more exotic phases due to quantum fluctuation. Though it is well believed that uniform flux $K_{\vartriangle} \simeq K_{\triangledown}$ can drive the system toward a chiral spin liquid (CSL) phase \cite{PhysRevB.95.035141, PhysRevB.96.075116}, it is not clear what happens in the staggered limit $K_{\vartriangle} \simeq -K_{\triangledown}$. Some numerical study\cite{bauer2013gapped} and parton construction\cite{PhysRevB.83.245131} on related models suggest a possibility of gapless spin liquid. When we further add a next-nearest-neighbor AFM coupling $J_2$ to the model \cite{PhysRevB.92.140403, PhysRevB.92.041105, PhysRevB.93.144411, PhysRevLett.123.207203}, it can drive the system toward a Dirac spin liquid (DSL) phase before the system fully develops a stripe order. These are all possible phases of the effective spin model we construct through nonorthogonal projection and perturbation. We will leave these rich possibilities for future numerical investigations.

\section{\label{sum}Conclusion}

In this work, we present a different approach to understand Mott physics in Wannier obstructed systems including Chern insulators and fragile topological systems like twisted bilayer graphene. To get around the obstruction, we sacrifice the orthogonality of Wannier basis and develop a method to construct the trial nonorthogonal Wannier orbitals by numerically optimizing the Hartree energy of the system. In the Mott limit, we fill these trial orbitals with one electron per orbital. To study the low energy spin dynamics, we systematically project the Hamiltonian to a nonorthogonal spin basis and further study perturbative corrections. This new procedure concerning nonorthogonal Wannier basis gives rise to new channels to spin interactions. For example, at the level of direct projection, we find new channels that contribute to ferromagnetic, antiferromagnetic, or spin liquid phases. We demonstrate our approach with a toy model that carries Chern bands and fragile topological bands. In this model, new channels widen the antiferromagnetic phase and enhance chiral spin interactions that may lead to rich magnetic phases. 

Our result may shed light on the magnetisms in Moir\'e superlattice systems, which often host Wannier obstructed bands.  For example, the two middle bands near the charge neutrality in twisted bilayer graphene (tBLG) are identified to be fragile topological bands.\cite{PhysRevX.8.031089,PhysRevB.99.195455,PhysRevB.98.085435} Aligning the tBLG with hexagonal boron nitride (hBN) substrate, the fragile topological bands further develop into separate Chern bands within each valley, which is analogous to the Chern bands in our toy model. The observation of ferromagnetic hysteresis \cite{ISI:000483195200043} suggests that the three-quarter-filling insulating state in such system exhibits ferromagnetism. As the tBLG band structure only preserves the $C_3$ rotation symmetry within each valley, the scenario is similar to the $U_0$ dominated case in our toy model, where the localized $(b,p_-)$ orbital exhibits the famous fidget spinner structure.\cite{PhysRevX.8.031089,PhysRevX.8.031088,PhysRevX.8.031087} 

Our analysis shows a non-vanishing inter-site ferromagnetic coupling from the Fock term due to finite overlap $g$ between nonorthognal orbitals even in the limit when the band width $W$ approaches zero. This ferromagnetism becomes unstable when the band width increases up to $Ug$ set by the interaction strength $U$ and the orbital nonorthogonality $g$, which is in consistent with previous studies on narrow Chern bands.\cite{senthil2019narrow} When the system is close to the ferromagnet-antiferromagnet crossover, our approach provides a systematic framework to write down an explicit effective spin model that enables further numerical investigation of intermediate phases. Meanwhile, new channels from our framework also open up the possibility of new and richer magnetic phases close to the crossover.

\begin{acknowledgments}
We acknowledge the stimulating discussion with Ashvin Vishwanath, Eslam Khalaf, Andreas Mielke, Cenke Xu, Senthil Todadri, Michael P. Zaletel, Siddharth Parameswaran, and Da-Chuan Lu. HYH and YZY are supported by a startup fund from UCSD. SL is supported by Ashvin Vishwanath by an Ultra-Quantum Matter grant from the Simons Foundation (651440, AV) and a Simons Investigator grant. 
\end{acknowledgments}

\bibliography{main.bib}

\begin{thebibliography}{67}%
\makeatletter
\providecommand \@ifxundefined [1]{%
 \@ifx{#1\undefined}
}%
\providecommand \@ifnum [1]{%
 \ifnum #1\expandafter \@firstoftwo
 \else \expandafter \@secondoftwo
 \fi
}%
\providecommand \@ifx [1]{%
 \ifx #1\expandafter \@firstoftwo
 \else \expandafter \@secondoftwo
 \fi
}%
\providecommand \natexlab [1]{#1}%
\providecommand \enquote  [1]{``#1''}%
\providecommand \bibnamefont  [1]{#1}%
\providecommand \bibfnamefont [1]{#1}%
\providecommand \citenamefont [1]{#1}%
\providecommand \href@noop [0]{\@secondoftwo}%
\providecommand \href [0]{\begingroup \@sanitize@url \@href}%
\providecommand \@href[1]{\@@startlink{#1}\@@href}%
\providecommand \@@href[1]{\endgroup#1\@@endlink}%
\providecommand \@sanitize@url [0]{\catcode `\\12\catcode `\$12\catcode
  `\&12\catcode `\#12\catcode `\^12\catcode `\_12\catcode `\%12\relax}%
\providecommand \@@startlink[1]{}%
\providecommand \@@endlink[0]{}%
\providecommand \url  [0]{\begingroup\@sanitize@url \@url }%
\providecommand \@url [1]{\endgroup\@href {#1}{\urlprefix }}%
\providecommand \urlprefix  [0]{URL }%
\providecommand \Eprint [0]{\href }%
\providecommand \doibase [0]{http://dx.doi.org/}%
\providecommand \selectlanguage [0]{\@gobble}%
\providecommand \bibinfo  [0]{\@secondoftwo}%
\providecommand \bibfield  [0]{\@secondoftwo}%
\providecommand \translation [1]{[#1]}%
\providecommand \BibitemOpen [0]{}%
\providecommand \bibitemStop [0]{}%
\providecommand \bibitemNoStop [0]{.\EOS\space}%
\providecommand \EOS [0]{\spacefactor3000\relax}%
\providecommand \BibitemShut  [1]{\csname bibitem#1\endcsname}%
\let\auto@bib@innerbib\@empty
\bibitem [{\citenamefont {Cao}\ \emph {et~al.}(2018{\natexlab{a}})\citenamefont
  {Cao}, \citenamefont {Fatemi}, \citenamefont {Demir}, \citenamefont {Fang},
  \citenamefont {Tomarken}, \citenamefont {Luo}, \citenamefont
  {Sanchez-Yamagishi}, \citenamefont {Watanabe}, \citenamefont {Taniguchi},
  \citenamefont {Kaxiras}, \citenamefont {Ashoori},\ and\ \citenamefont
  {Jarillo-Herrero}}]{Cao2018c}%
  \BibitemOpen
  \bibfield  {author} {\bibinfo {author} {\bibfnamefont {Y.}~\bibnamefont
  {Cao}}, \bibinfo {author} {\bibfnamefont {V.}~\bibnamefont {Fatemi}},
  \bibinfo {author} {\bibfnamefont {A.}~\bibnamefont {Demir}}, \bibinfo
  {author} {\bibfnamefont {S.}~\bibnamefont {Fang}}, \bibinfo {author}
  {\bibfnamefont {S.~L.}\ \bibnamefont {Tomarken}}, \bibinfo {author}
  {\bibfnamefont {J.~Y.}\ \bibnamefont {Luo}}, \bibinfo {author} {\bibfnamefont
  {J.~D.}\ \bibnamefont {Sanchez-Yamagishi}}, \bibinfo {author} {\bibfnamefont
  {K.}~\bibnamefont {Watanabe}}, \bibinfo {author} {\bibfnamefont
  {T.}~\bibnamefont {Taniguchi}}, \bibinfo {author} {\bibfnamefont
  {E.}~\bibnamefont {Kaxiras}}, \bibinfo {author} {\bibfnamefont {R.~C.}\
  \bibnamefont {Ashoori}}, \ and\ \bibinfo {author} {\bibfnamefont
  {P.}~\bibnamefont {Jarillo-Herrero}},\ }\href
  {https://doi.org/10.1038/nature26154 http://10.0.4.14/nature26154} {\bibfield
   {journal} {\bibinfo  {journal} {Nature}\ }\textbf {\bibinfo {volume}
  {556}},\ \bibinfo {pages} {80} (\bibinfo {year}
  {2018}{\natexlab{a}})}\BibitemShut {NoStop}%
\bibitem [{\citenamefont {Cao}\ \emph {et~al.}(2018{\natexlab{b}})\citenamefont
  {Cao}, \citenamefont {Fatemi}, \citenamefont {Fang}, \citenamefont
  {Watanabe}, \citenamefont {Taniguchi}, \citenamefont {Kaxiras},\ and\
  \citenamefont {Jarillo-Herrero}}]{Cao2018d}%
  \BibitemOpen
  \bibfield  {author} {\bibinfo {author} {\bibfnamefont {Y.}~\bibnamefont
  {Cao}}, \bibinfo {author} {\bibfnamefont {V.}~\bibnamefont {Fatemi}},
  \bibinfo {author} {\bibfnamefont {S.}~\bibnamefont {Fang}}, \bibinfo {author}
  {\bibfnamefont {K.}~\bibnamefont {Watanabe}}, \bibinfo {author}
  {\bibfnamefont {T.}~\bibnamefont {Taniguchi}}, \bibinfo {author}
  {\bibfnamefont {E.}~\bibnamefont {Kaxiras}}, \ and\ \bibinfo {author}
  {\bibfnamefont {P.}~\bibnamefont {Jarillo-Herrero}},\ }\href
  {https://doi.org/10.1038/nature26160 http://10.0.4.14/nature26160
  https://www.nature.com/articles/nature26160{\#}supplementary-information}
  {\bibfield  {journal} {\bibinfo  {journal} {Nature}\ }\textbf {\bibinfo
  {volume} {556}},\ \bibinfo {pages} {43} (\bibinfo {year}
  {2018}{\natexlab{b}})}\BibitemShut {NoStop}%
\bibitem [{\citenamefont {Sharpe}\ \emph {et~al.}(2019)\citenamefont {Sharpe},
  \citenamefont {Fox}, \citenamefont {Barnard}, \citenamefont {Finney},
  \citenamefont {Watanabe}, \citenamefont {Taniguchi}, \citenamefont
  {Kastner},\ and\ \citenamefont {Goldhaber-Gordon}}]{ISI:000483195200043}%
  \BibitemOpen
  \bibfield  {author} {\bibinfo {author} {\bibfnamefont {A.~L.}\ \bibnamefont
  {Sharpe}}, \bibinfo {author} {\bibfnamefont {E.~J.}\ \bibnamefont {Fox}},
  \bibinfo {author} {\bibfnamefont {A.~W.}\ \bibnamefont {Barnard}}, \bibinfo
  {author} {\bibfnamefont {J.}~\bibnamefont {Finney}}, \bibinfo {author}
  {\bibfnamefont {K.}~\bibnamefont {Watanabe}}, \bibinfo {author}
  {\bibfnamefont {T.}~\bibnamefont {Taniguchi}}, \bibinfo {author}
  {\bibfnamefont {M.~A.}\ \bibnamefont {Kastner}}, \ and\ \bibinfo {author}
  {\bibfnamefont {D.}~\bibnamefont {Goldhaber-Gordon}},\ }\href {\doibase
  10.1126/science.aaw3780} {\bibfield  {journal} {\bibinfo  {journal}
  {Science}\ }\textbf {\bibinfo {volume} {365}},\ \bibinfo {pages} {605}
  (\bibinfo {year} {2019})}\BibitemShut {NoStop}%
\bibitem [{\citenamefont {Liu}\ \emph {et~al.}(2019{\natexlab{a}})\citenamefont
  {Liu}, \citenamefont {Hao}, \citenamefont {Khalaf}, \citenamefont {Lee},
  \citenamefont {Watanabe}, \citenamefont {Taniguchi}, \citenamefont
  {Vishwanath},\ and\ \citenamefont {Kim}}]{liu2019spinpolarized}%
  \BibitemOpen
  \bibfield  {author} {\bibinfo {author} {\bibfnamefont {X.}~\bibnamefont
  {Liu}}, \bibinfo {author} {\bibfnamefont {Z.}~\bibnamefont {Hao}}, \bibinfo
  {author} {\bibfnamefont {E.}~\bibnamefont {Khalaf}}, \bibinfo {author}
  {\bibfnamefont {J.~Y.}\ \bibnamefont {Lee}}, \bibinfo {author} {\bibfnamefont
  {K.}~\bibnamefont {Watanabe}}, \bibinfo {author} {\bibfnamefont
  {T.}~\bibnamefont {Taniguchi}}, \bibinfo {author} {\bibfnamefont
  {A.}~\bibnamefont {Vishwanath}}, \ and\ \bibinfo {author} {\bibfnamefont
  {P.}~\bibnamefont {Kim}},\ }\href@noop {} {} (\bibinfo {year}
  {2019}{\natexlab{a}}),\ \Eprint {http://arxiv.org/abs/1903.08130}
  {arXiv:1903.08130 [cond-mat.mes-hall]} \BibitemShut {NoStop}%
\bibitem [{\citenamefont {Shen}\ \emph {et~al.}(2019)\citenamefont {Shen},
  \citenamefont {Li}, \citenamefont {Wang}, \citenamefont {Zhao}, \citenamefont
  {Tang}, \citenamefont {Liu}, \citenamefont {Tian}, \citenamefont {Chu},
  \citenamefont {Watanabe}, \citenamefont {Taniguchi}, \citenamefont {Yang},
  \citenamefont {Meng}, \citenamefont {Shi},\ and\ \citenamefont
  {Zhang}}]{shen2019observation}%
  \BibitemOpen
  \bibfield  {author} {\bibinfo {author} {\bibfnamefont {C.}~\bibnamefont
  {Shen}}, \bibinfo {author} {\bibfnamefont {N.}~\bibnamefont {Li}}, \bibinfo
  {author} {\bibfnamefont {S.}~\bibnamefont {Wang}}, \bibinfo {author}
  {\bibfnamefont {Y.}~\bibnamefont {Zhao}}, \bibinfo {author} {\bibfnamefont
  {J.}~\bibnamefont {Tang}}, \bibinfo {author} {\bibfnamefont {J.}~\bibnamefont
  {Liu}}, \bibinfo {author} {\bibfnamefont {J.}~\bibnamefont {Tian}}, \bibinfo
  {author} {\bibfnamefont {Y.}~\bibnamefont {Chu}}, \bibinfo {author}
  {\bibfnamefont {K.}~\bibnamefont {Watanabe}}, \bibinfo {author}
  {\bibfnamefont {T.}~\bibnamefont {Taniguchi}}, \bibinfo {author}
  {\bibfnamefont {R.}~\bibnamefont {Yang}}, \bibinfo {author} {\bibfnamefont
  {Z.~Y.}\ \bibnamefont {Meng}}, \bibinfo {author} {\bibfnamefont
  {D.}~\bibnamefont {Shi}}, \ and\ \bibinfo {author} {\bibfnamefont
  {G.}~\bibnamefont {Zhang}},\ }\href@noop {} {\enquote {\bibinfo {title}
  {Observation of superconductivity with tc onset at 12k in electrically
  tunable twisted double bilayer graphene},}\ } (\bibinfo {year} {2019}),\
  \Eprint {http://arxiv.org/abs/1903.06952} {arXiv:1903.06952
  [cond-mat.supr-con]} \BibitemShut {NoStop}%
\bibitem [{\citenamefont {Cao}\ \emph {et~al.}(2019)\citenamefont {Cao},
  \citenamefont {Rodan-Legrain}, \citenamefont {Rubies-Bigorda}, \citenamefont
  {Park}, \citenamefont {Watanabe}, \citenamefont {Taniguchi},\ and\
  \citenamefont {Jarillo-Herrero}}]{cao2019electric}%
  \BibitemOpen
  \bibfield  {author} {\bibinfo {author} {\bibfnamefont {Y.}~\bibnamefont
  {Cao}}, \bibinfo {author} {\bibfnamefont {D.}~\bibnamefont {Rodan-Legrain}},
  \bibinfo {author} {\bibfnamefont {O.}~\bibnamefont {Rubies-Bigorda}},
  \bibinfo {author} {\bibfnamefont {J.~M.}\ \bibnamefont {Park}}, \bibinfo
  {author} {\bibfnamefont {K.}~\bibnamefont {Watanabe}}, \bibinfo {author}
  {\bibfnamefont {T.}~\bibnamefont {Taniguchi}}, \ and\ \bibinfo {author}
  {\bibfnamefont {P.}~\bibnamefont {Jarillo-Herrero}},\ }\href@noop {}
  {\enquote {\bibinfo {title} {Electric field tunable correlated states and
  magnetic phase transitions in twisted bilayer-bilayer graphene},}\ }
  (\bibinfo {year} {2019}),\ \Eprint {http://arxiv.org/abs/1903.08596}
  {arXiv:1903.08596 [cond-mat.str-el]} \BibitemShut {NoStop}%
\bibitem [{\citenamefont {Mielke}(1992)}]{Mielke_1992}%
  \BibitemOpen
  \bibfield  {author} {\bibinfo {author} {\bibfnamefont {A.}~\bibnamefont
  {Mielke}},\ }\href {\doibase 10.1088/0305-4470/25/16/011} {\bibfield
  {journal} {\bibinfo  {journal} {J. Phys. A-Math. Gen.}\ }\textbf {\bibinfo
  {volume} {25}},\ \bibinfo {pages} {4335} (\bibinfo {year}
  {1992})}\BibitemShut {NoStop}%
\bibitem [{\citenamefont {Mielke}\ and\ \citenamefont
  {Tasaki}(1993)}]{Mielke&Tasaki}%
  \BibitemOpen
  \bibfield  {author} {\bibinfo {author} {\bibfnamefont {A.}~\bibnamefont
  {Mielke}}\ and\ \bibinfo {author} {\bibfnamefont {H.}~\bibnamefont
  {Tasaki}},\ }\href {\doibase 10.1007/BF02108079} {\bibfield  {journal}
  {\bibinfo  {journal} {Communications in Mathematical Physics}\ }\textbf
  {\bibinfo {volume} {158}},\ \bibinfo {pages} {341} (\bibinfo {year}
  {1993})}\BibitemShut {NoStop}%
\bibitem [{\citenamefont {Tasaki}(1996)}]{Tasaki1996}%
  \BibitemOpen
  \bibfield  {author} {\bibinfo {author} {\bibfnamefont {H.}~\bibnamefont
  {Tasaki}},\ }\href {\doibase 10.1007/BF02179652} {\bibfield  {journal}
  {\bibinfo  {journal} {J. Stat. Phys.}\ }\textbf {\bibinfo {volume} {84}},\
  \bibinfo {pages} {535} (\bibinfo {year} {1996})}\BibitemShut {NoStop}%
\bibitem [{\citenamefont {Tasaki}(1998)}]{ISI:000073659300001}%
  \BibitemOpen
  \bibfield  {author} {\bibinfo {author} {\bibfnamefont {H.}~\bibnamefont
  {Tasaki}},\ }\href {\doibase 10.1143/PTP.99.489} {\bibfield  {journal}
  {\bibinfo  {journal} {Prog. Theor. Phys.}\ }\textbf {\bibinfo {volume}
  {99}},\ \bibinfo {pages} {489} (\bibinfo {year} {1998})}\BibitemShut
  {NoStop}%
\bibitem [{\citenamefont {Alavirad}\ and\ \citenamefont
  {Sau}(2019)}]{alavirad2019ferromagnetism}%
  \BibitemOpen
  \bibfield  {author} {\bibinfo {author} {\bibfnamefont {Y.}~\bibnamefont
  {Alavirad}}\ and\ \bibinfo {author} {\bibfnamefont {J.~D.}\ \bibnamefont
  {Sau}},\ }\href@noop {} {\enquote {\bibinfo {title} {Ferromagnetism and its
  stability from the one-magnon spectrum in twisted bilayer graphene},}\ }
  (\bibinfo {year} {2019}),\ \Eprint {http://arxiv.org/abs/1907.13633}
  {arXiv:1907.13633 [cond-mat.mes-hall]} \BibitemShut {NoStop}%
\bibitem [{\citenamefont {{Bultinck}}\ \emph {et~al.}(2019)\citenamefont
  {{Bultinck}}, \citenamefont {{Chatterjee}},\ and\ \citenamefont
  {{Zaletel}}}]{2019arXiv190108110B}%
  \BibitemOpen
  \bibfield  {author} {\bibinfo {author} {\bibfnamefont {N.}~\bibnamefont
  {{Bultinck}}}, \bibinfo {author} {\bibfnamefont {S.}~\bibnamefont
  {{Chatterjee}}}, \ and\ \bibinfo {author} {\bibfnamefont {M.~P.}\
  \bibnamefont {{Zaletel}}},\ }\href@noop {} {\bibfield  {journal} {\bibinfo
  {journal} {arXiv e-prints}\ ,\ \bibinfo {eid} {arXiv:1901.08110}} (\bibinfo
  {year} {2019})},\ \Eprint {http://arxiv.org/abs/1901.08110} {arXiv:1901.08110
  [cond-mat.str-el]} \BibitemShut {NoStop}%
\bibitem [{\citenamefont {{Repellin}}\ \emph {et~al.}(2019)\citenamefont
  {{Repellin}}, \citenamefont {{Dong}}, \citenamefont {{Zhang}},\ and\
  \citenamefont {{Senthil}}}]{senthil2019narrow}%
  \BibitemOpen
  \bibfield  {author} {\bibinfo {author} {\bibfnamefont {C.}~\bibnamefont
  {{Repellin}}}, \bibinfo {author} {\bibfnamefont {Z.}~\bibnamefont {{Dong}}},
  \bibinfo {author} {\bibfnamefont {Y.-H.}\ \bibnamefont {{Zhang}}}, \ and\
  \bibinfo {author} {\bibfnamefont {T.}~\bibnamefont {{Senthil}}},\ }\href@noop
  {} {\bibfield  {journal} {\bibinfo  {journal} {arXiv preprint
  arXiv:1907.11723}\ } (\bibinfo {year} {2019})}\BibitemShut {NoStop}%
\bibitem [{\citenamefont {Seo}\ \emph {et~al.}(2019)\citenamefont {Seo},
  \citenamefont {Kotov},\ and\ \citenamefont {Uchoa}}]{PhysRevLett.122.246402}%
  \BibitemOpen
  \bibfield  {author} {\bibinfo {author} {\bibfnamefont {K.}~\bibnamefont
  {Seo}}, \bibinfo {author} {\bibfnamefont {V.~N.}\ \bibnamefont {Kotov}}, \
  and\ \bibinfo {author} {\bibfnamefont {B.}~\bibnamefont {Uchoa}},\ }\href
  {\doibase 10.1103/PhysRevLett.122.246402} {\bibfield  {journal} {\bibinfo
  {journal} {Phys. Rev. Lett.}\ }\textbf {\bibinfo {volume} {122}},\ \bibinfo
  {pages} {246402} (\bibinfo {year} {2019})}\BibitemShut {NoStop}%
\bibitem [{\citenamefont {Wolf}\ \emph {et~al.}(2019)\citenamefont {Wolf},
  \citenamefont {Lado}, \citenamefont {Blatter},\ and\ \citenamefont
  {Zilberberg}}]{PhysRevLett.123.096802}%
  \BibitemOpen
  \bibfield  {author} {\bibinfo {author} {\bibfnamefont {T.~M.~R.}\
  \bibnamefont {Wolf}}, \bibinfo {author} {\bibfnamefont {J.~L.}\ \bibnamefont
  {Lado}}, \bibinfo {author} {\bibfnamefont {G.}~\bibnamefont {Blatter}}, \
  and\ \bibinfo {author} {\bibfnamefont {O.}~\bibnamefont {Zilberberg}},\
  }\href {\doibase 10.1103/PhysRevLett.123.096802} {\bibfield  {journal}
  {\bibinfo  {journal} {Phys. Rev. Lett.}\ }\textbf {\bibinfo {volume} {123}},\
  \bibinfo {pages} {096802} (\bibinfo {year} {2019})}\BibitemShut {NoStop}%
\bibitem [{\citenamefont {Kang}\ and\ \citenamefont
  {Vafek}(2019)}]{PhysRevLett.122.246401}%
  \BibitemOpen
  \bibfield  {author} {\bibinfo {author} {\bibfnamefont {J.}~\bibnamefont
  {Kang}}\ and\ \bibinfo {author} {\bibfnamefont {O.}~\bibnamefont {Vafek}},\
  }\href {\doibase 10.1103/PhysRevLett.122.246401} {\bibfield  {journal}
  {\bibinfo  {journal} {Phys. Rev. Lett.}\ }\textbf {\bibinfo {volume} {122}},\
  \bibinfo {pages} {246401} (\bibinfo {year} {2019})}\BibitemShut {NoStop}%
\bibitem [{\citenamefont {Zhang}\ and\ \citenamefont
  {Senthil}(2019)}]{PhysRevB.99.205150}%
  \BibitemOpen
  \bibfield  {author} {\bibinfo {author} {\bibfnamefont {Y.-H.}\ \bibnamefont
  {Zhang}}\ and\ \bibinfo {author} {\bibfnamefont {T.}~\bibnamefont
  {Senthil}},\ }\href {\doibase 10.1103/PhysRevB.99.205150} {\bibfield
  {journal} {\bibinfo  {journal} {Phys. Rev. B}\ }\textbf {\bibinfo {volume}
  {99}},\ \bibinfo {pages} {205150} (\bibinfo {year} {2019})}\BibitemShut
  {NoStop}%
\bibitem [{\citenamefont {Zhang}\ \emph
  {et~al.}(2019{\natexlab{a}})\citenamefont {Zhang}, \citenamefont {Mao},\ and\
  \citenamefont {Senthil}}]{PhysRevResearch.1.033126}%
  \BibitemOpen
  \bibfield  {author} {\bibinfo {author} {\bibfnamefont {Y.-H.}\ \bibnamefont
  {Zhang}}, \bibinfo {author} {\bibfnamefont {D.}~\bibnamefont {Mao}}, \ and\
  \bibinfo {author} {\bibfnamefont {T.}~\bibnamefont {Senthil}},\ }\href
  {\doibase 10.1103/PhysRevResearch.1.033126} {\bibfield  {journal} {\bibinfo
  {journal} {Phys. Rev. Research}\ }\textbf {\bibinfo {volume} {1}},\ \bibinfo
  {pages} {033126} (\bibinfo {year} {2019}{\natexlab{a}})}\BibitemShut
  {NoStop}%
\bibitem [{\citenamefont {Po}\ \emph {et~al.}(2018{\natexlab{a}})\citenamefont
  {Po}, \citenamefont {Zou}, \citenamefont {Vishwanath},\ and\ \citenamefont
  {Senthil}}]{PhysRevX.8.031089}%
  \BibitemOpen
  \bibfield  {author} {\bibinfo {author} {\bibfnamefont {H.~C.}\ \bibnamefont
  {Po}}, \bibinfo {author} {\bibfnamefont {L.}~\bibnamefont {Zou}}, \bibinfo
  {author} {\bibfnamefont {A.}~\bibnamefont {Vishwanath}}, \ and\ \bibinfo
  {author} {\bibfnamefont {T.}~\bibnamefont {Senthil}},\ }\href {\doibase
  10.1103/PhysRevX.8.031089} {\bibfield  {journal} {\bibinfo  {journal} {Phys.
  Rev. X}\ }\textbf {\bibinfo {volume} {8}},\ \bibinfo {pages} {031089}
  (\bibinfo {year} {2018}{\natexlab{a}})}\BibitemShut {NoStop}%
\bibitem [{\citenamefont {You}\ and\ \citenamefont
  {Vishwanath}(2019)}]{You2019Superconductivity}%
  \BibitemOpen
  \bibfield  {author} {\bibinfo {author} {\bibfnamefont {Y.-Z.}\ \bibnamefont
  {You}}\ and\ \bibinfo {author} {\bibfnamefont {A.}~\bibnamefont
  {Vishwanath}},\ }\href {\doibase 10.1038/s41535-019-0153-4} {\bibfield
  {journal} {\bibinfo  {journal} {npj Quantum Materials}\ }\textbf {\bibinfo
  {volume} {4}},\ \bibinfo {pages} {16} (\bibinfo {year} {2019})}\BibitemShut
  {NoStop}%
\bibitem [{\citenamefont {Repellin}\ \emph {et~al.}(2019)\citenamefont
  {Repellin}, \citenamefont {Dong}, \citenamefont {Zhang},\ and\ \citenamefont
  {Senthil}}]{repellin2019ferromagnetism}%
  \BibitemOpen
  \bibfield  {author} {\bibinfo {author} {\bibfnamefont {C.}~\bibnamefont
  {Repellin}}, \bibinfo {author} {\bibfnamefont {Z.}~\bibnamefont {Dong}},
  \bibinfo {author} {\bibfnamefont {Y.-H.}\ \bibnamefont {Zhang}}, \ and\
  \bibinfo {author} {\bibfnamefont {T.}~\bibnamefont {Senthil}},\ }\href@noop
  {} {} (\bibinfo {year} {2019}),\ \Eprint {http://arxiv.org/abs/1907.11723}
  {arXiv:1907.11723 [cond-mat.str-el]} \BibitemShut {NoStop}%
\bibitem [{\citenamefont {Zhang}\ \emph
  {et~al.}(2019{\natexlab{b}})\citenamefont {Zhang}, \citenamefont {Mao},
  \citenamefont {Cao}, \citenamefont {Jarillo-Herrero},\ and\ \citenamefont
  {Senthil}}]{PhysRevB.99.075127}%
  \BibitemOpen
  \bibfield  {author} {\bibinfo {author} {\bibfnamefont {Y.-H.}\ \bibnamefont
  {Zhang}}, \bibinfo {author} {\bibfnamefont {D.}~\bibnamefont {Mao}}, \bibinfo
  {author} {\bibfnamefont {Y.}~\bibnamefont {Cao}}, \bibinfo {author}
  {\bibfnamefont {P.}~\bibnamefont {Jarillo-Herrero}}, \ and\ \bibinfo {author}
  {\bibfnamefont {T.}~\bibnamefont {Senthil}},\ }\href {\doibase
  10.1103/PhysRevB.99.075127} {\bibfield  {journal} {\bibinfo  {journal} {Phys.
  Rev. B}\ }\textbf {\bibinfo {volume} {99}},\ \bibinfo {pages} {075127}
  (\bibinfo {year} {2019}{\natexlab{b}})}\BibitemShut {NoStop}%
\bibitem [{\citenamefont {Lee}\ \emph {et~al.}(2019)\citenamefont {Lee},
  \citenamefont {Khalaf}, \citenamefont {Liu}, \citenamefont {Liu},
  \citenamefont {Hao}, \citenamefont {Kim},\ and\ \citenamefont
  {Vishwanath}}]{Lee2019}%
  \BibitemOpen
  \bibfield  {author} {\bibinfo {author} {\bibfnamefont {J.~Y.}\ \bibnamefont
  {Lee}}, \bibinfo {author} {\bibfnamefont {E.}~\bibnamefont {Khalaf}},
  \bibinfo {author} {\bibfnamefont {S.}~\bibnamefont {Liu}}, \bibinfo {author}
  {\bibfnamefont {X.}~\bibnamefont {Liu}}, \bibinfo {author} {\bibfnamefont
  {Z.}~\bibnamefont {Hao}}, \bibinfo {author} {\bibfnamefont {P.}~\bibnamefont
  {Kim}}, \ and\ \bibinfo {author} {\bibfnamefont {A.}~\bibnamefont
  {Vishwanath}},\ }\href {\doibase 10.1038/s41467-019-12981-1} {\bibfield
  {journal} {\bibinfo  {journal} {Nature Communications}\ ,\ \bibinfo {pages}
  {1}} (\bibinfo {year} {2019})},\ \Eprint {http://arxiv.org/abs/1903.08685}
  {arXiv:1903.08685} \BibitemShut {NoStop}%
\bibitem [{\citenamefont {Xie}\ and\ \citenamefont
  {MacDonald}(2018)}]{xie2018nature}%
  \BibitemOpen
  \bibfield  {author} {\bibinfo {author} {\bibfnamefont {M.}~\bibnamefont
  {Xie}}\ and\ \bibinfo {author} {\bibfnamefont {A.~H.}\ \bibnamefont
  {MacDonald}},\ }\href@noop {} {} (\bibinfo {year} {2018}),\ \Eprint
  {http://arxiv.org/abs/1812.04213} {arXiv:1812.04213 [cond-mat.str-el]}
  \BibitemShut {NoStop}%
\bibitem [{\citenamefont {Liu}\ \emph {et~al.}(2019{\natexlab{b}})\citenamefont
  {Liu}, \citenamefont {Khalaf}, \citenamefont {Lee},\ and\ \citenamefont
  {Vishwanath}}]{liu2019nematic}%
  \BibitemOpen
  \bibfield  {author} {\bibinfo {author} {\bibfnamefont {S.}~\bibnamefont
  {Liu}}, \bibinfo {author} {\bibfnamefont {E.}~\bibnamefont {Khalaf}},
  \bibinfo {author} {\bibfnamefont {J.~Y.}\ \bibnamefont {Lee}}, \ and\
  \bibinfo {author} {\bibfnamefont {A.}~\bibnamefont {Vishwanath}},\
  }\href@noop {} {} (\bibinfo {year} {2019}{\natexlab{b}}),\ \Eprint
  {http://arxiv.org/abs/1905.07409} {arXiv:1905.07409 [cond-mat.str-el]}
  \BibitemShut {NoStop}%
\bibitem [{\citenamefont {Bultinck}\ \emph {et~al.}(2019)\citenamefont
  {Bultinck}, \citenamefont {Khalaf}, \citenamefont {Liu}, \citenamefont
  {Chatterjee}, \citenamefont {Vishwanath},\ and\ \citenamefont
  {Zaletel}}]{bultinck2019ground}%
  \BibitemOpen
  \bibfield  {author} {\bibinfo {author} {\bibfnamefont {N.}~\bibnamefont
  {Bultinck}}, \bibinfo {author} {\bibfnamefont {E.}~\bibnamefont {Khalaf}},
  \bibinfo {author} {\bibfnamefont {S.}~\bibnamefont {Liu}}, \bibinfo {author}
  {\bibfnamefont {S.}~\bibnamefont {Chatterjee}}, \bibinfo {author}
  {\bibfnamefont {A.}~\bibnamefont {Vishwanath}}, \ and\ \bibinfo {author}
  {\bibfnamefont {M.~P.}\ \bibnamefont {Zaletel}},\ }\href@noop {} {\enquote
  {\bibinfo {title} {Ground state and hidden symmetry of magic angle graphene
  at even integer filling},}\ } (\bibinfo {year} {2019}),\ \Eprint
  {http://arxiv.org/abs/1911.02045} {arXiv:1911.02045 [cond-mat.str-el]}
  \BibitemShut {NoStop}%
\bibitem [{\citenamefont {Po}\ \emph {et~al.}(2018{\natexlab{b}})\citenamefont
  {Po}, \citenamefont {Watanabe},\ and\ \citenamefont
  {Vishwanath}}]{PhysRevLett.121.126402}%
  \BibitemOpen
  \bibfield  {author} {\bibinfo {author} {\bibfnamefont {H.~C.}\ \bibnamefont
  {Po}}, \bibinfo {author} {\bibfnamefont {H.}~\bibnamefont {Watanabe}}, \ and\
  \bibinfo {author} {\bibfnamefont {A.}~\bibnamefont {Vishwanath}},\ }\href
  {\doibase 10.1103/PhysRevLett.121.126402} {\bibfield  {journal} {\bibinfo
  {journal} {Phys. Rev. Lett.}\ }\textbf {\bibinfo {volume} {121}},\ \bibinfo
  {pages} {126402} (\bibinfo {year} {2018}{\natexlab{b}})}\BibitemShut
  {NoStop}%
\bibitem [{\citenamefont {Else}\ \emph {et~al.}(2019)\citenamefont {Else},
  \citenamefont {Po},\ and\ \citenamefont {Watanabe}}]{PhysRevB.99.125122}%
  \BibitemOpen
  \bibfield  {author} {\bibinfo {author} {\bibfnamefont {D.~V.}\ \bibnamefont
  {Else}}, \bibinfo {author} {\bibfnamefont {H.~C.}\ \bibnamefont {Po}}, \ and\
  \bibinfo {author} {\bibfnamefont {H.}~\bibnamefont {Watanabe}},\ }\href
  {\doibase 10.1103/PhysRevB.99.125122} {\bibfield  {journal} {\bibinfo
  {journal} {Phys. Rev. B}\ }\textbf {\bibinfo {volume} {99}},\ \bibinfo
  {pages} {125122} (\bibinfo {year} {2019})}\BibitemShut {NoStop}%
\bibitem [{\citenamefont {Kang}\ and\ \citenamefont
  {Vafek}(2018)}]{PhysRevX.8.031088}%
  \BibitemOpen
  \bibfield  {author} {\bibinfo {author} {\bibfnamefont {J.}~\bibnamefont
  {Kang}}\ and\ \bibinfo {author} {\bibfnamefont {O.}~\bibnamefont {Vafek}},\
  }\href {\doibase 10.1103/PhysRevX.8.031088} {\bibfield  {journal} {\bibinfo
  {journal} {Phys. Rev. X}\ }\textbf {\bibinfo {volume} {8}},\ \bibinfo {pages}
  {031088} (\bibinfo {year} {2018})}\BibitemShut {NoStop}%
\bibitem [{\citenamefont {Yuan}\ and\ \citenamefont
  {Fu}(2018)}]{PhysRevB.98.045103}%
  \BibitemOpen
  \bibfield  {author} {\bibinfo {author} {\bibfnamefont {N.~F.~Q.}\
  \bibnamefont {Yuan}}\ and\ \bibinfo {author} {\bibfnamefont {L.}~\bibnamefont
  {Fu}},\ }\href {\doibase 10.1103/PhysRevB.98.045103} {\bibfield  {journal}
  {\bibinfo  {journal} {Phys. Rev. B}\ }\textbf {\bibinfo {volume} {98}},\
  \bibinfo {pages} {045103} (\bibinfo {year} {2018})}\BibitemShut {NoStop}%
\bibitem [{\citenamefont {Po}\ \emph {et~al.}(2019)\citenamefont {Po},
  \citenamefont {Zou}, \citenamefont {Senthil},\ and\ \citenamefont
  {Vishwanath}}]{PhysRevB.99.195455}%
  \BibitemOpen
  \bibfield  {author} {\bibinfo {author} {\bibfnamefont {H.~C.}\ \bibnamefont
  {Po}}, \bibinfo {author} {\bibfnamefont {L.}~\bibnamefont {Zou}}, \bibinfo
  {author} {\bibfnamefont {T.}~\bibnamefont {Senthil}}, \ and\ \bibinfo
  {author} {\bibfnamefont {A.}~\bibnamefont {Vishwanath}},\ }\href {\doibase
  10.1103/PhysRevB.99.195455} {\bibfield  {journal} {\bibinfo  {journal} {Phys.
  Rev. B}\ }\textbf {\bibinfo {volume} {99}},\ \bibinfo {pages} {195455}
  (\bibinfo {year} {2019})}\BibitemShut {NoStop}%
\bibitem [{\citenamefont {Zou}\ \emph {et~al.}(2018)\citenamefont {Zou},
  \citenamefont {Po}, \citenamefont {Vishwanath},\ and\ \citenamefont
  {Senthil}}]{PhysRevB.98.085435}%
  \BibitemOpen
  \bibfield  {author} {\bibinfo {author} {\bibfnamefont {L.}~\bibnamefont
  {Zou}}, \bibinfo {author} {\bibfnamefont {H.~C.}\ \bibnamefont {Po}},
  \bibinfo {author} {\bibfnamefont {A.}~\bibnamefont {Vishwanath}}, \ and\
  \bibinfo {author} {\bibfnamefont {T.}~\bibnamefont {Senthil}},\ }\href
  {\doibase 10.1103/PhysRevB.98.085435} {\bibfield  {journal} {\bibinfo
  {journal} {Phys. Rev. B}\ }\textbf {\bibinfo {volume} {98}},\ \bibinfo
  {pages} {085435} (\bibinfo {year} {2018})}\BibitemShut {NoStop}%
\bibitem [{\citenamefont {Brouder}\ \emph {et~al.}(2007)\citenamefont
  {Brouder}, \citenamefont {Panati}, \citenamefont {Calandra}, \citenamefont
  {Mourougane},\ and\ \citenamefont {Marzari}}]{PhysRevLett.98.046402}%
  \BibitemOpen
  \bibfield  {author} {\bibinfo {author} {\bibfnamefont {C.}~\bibnamefont
  {Brouder}}, \bibinfo {author} {\bibfnamefont {G.}~\bibnamefont {Panati}},
  \bibinfo {author} {\bibfnamefont {M.}~\bibnamefont {Calandra}}, \bibinfo
  {author} {\bibfnamefont {C.}~\bibnamefont {Mourougane}}, \ and\ \bibinfo
  {author} {\bibfnamefont {N.}~\bibnamefont {Marzari}},\ }\href {\doibase
  10.1103/PhysRevLett.98.046402} {\bibfield  {journal} {\bibinfo  {journal}
  {Phys. Rev. Lett.}\ }\textbf {\bibinfo {volume} {98}},\ \bibinfo {pages}
  {46402} (\bibinfo {year} {2007})}\BibitemShut {NoStop}%
\bibitem [{\citenamefont {{Zaletel}}\ and\ \citenamefont
  {{Khoo}}(2019)}]{FragileExpt}%
  \BibitemOpen
  \bibfield  {author} {\bibinfo {author} {\bibfnamefont {M.~P.}\ \bibnamefont
  {{Zaletel}}}\ and\ \bibinfo {author} {\bibfnamefont {J.~Y.}\ \bibnamefont
  {{Khoo}}},\ }\href@noop {} {} (\bibinfo {year} {2019}),\ \Eprint
  {http://arxiv.org/abs/1901.01294} {arXiv:1901.01294 [cond-mat.mes-hall]}
  \BibitemShut {NoStop}%
\bibitem [{\citenamefont {{Song}}\ \emph {et~al.}(2019)\citenamefont {{Song}},
  \citenamefont {{Elcoro}}, \citenamefont {{Regnault}},\ and\ \citenamefont
  {{Bernevig}}}]{Bernevig}%
  \BibitemOpen
  \bibfield  {author} {\bibinfo {author} {\bibfnamefont {Z.}~\bibnamefont
  {{Song}}}, \bibinfo {author} {\bibfnamefont {L.}~\bibnamefont {{Elcoro}}},
  \bibinfo {author} {\bibfnamefont {N.}~\bibnamefont {{Regnault}}}, \ and\
  \bibinfo {author} {\bibfnamefont {B.~A.}\ \bibnamefont {{Bernevig}}},\
  }\href@noop {} {} (\bibinfo {year} {2019}),\ \Eprint
  {http://arxiv.org/abs/1905.03262} {arXiv:1905.03262 [cond-mat.mes-hall]}
  \BibitemShut {NoStop}%
\bibitem [{\citenamefont {{Lian}}\ \emph {et~al.}(2018)\citenamefont {{Lian}},
  \citenamefont {{Xie}},\ and\ \citenamefont
  {{Bernevig}}}]{2018arXiv181111786L}%
  \BibitemOpen
  \bibfield  {author} {\bibinfo {author} {\bibfnamefont {B.}~\bibnamefont
  {{Lian}}}, \bibinfo {author} {\bibfnamefont {F.}~\bibnamefont {{Xie}}}, \
  and\ \bibinfo {author} {\bibfnamefont {B.~A.}\ \bibnamefont {{Bernevig}}},\
  }\href@noop {} {\bibfield  {journal} {\bibinfo  {journal} {arXiv e-prints}\
  ,\ \bibinfo {eid} {arXiv:1811.11786}} (\bibinfo {year} {2018})},\ \Eprint
  {http://arxiv.org/abs/1811.11786} {arXiv:1811.11786 [cond-mat.mes-hall]}
  \BibitemShut {NoStop}%
\bibitem [{\citenamefont {Thouless}\ \emph {et~al.}(1982)\citenamefont
  {Thouless}, \citenamefont {Kohmoto}, \citenamefont {Nightingale},\ and\
  \citenamefont {den Nijs}}]{PhysRevLett.49.405}%
  \BibitemOpen
  \bibfield  {author} {\bibinfo {author} {\bibfnamefont {D.~J.}\ \bibnamefont
  {Thouless}}, \bibinfo {author} {\bibfnamefont {M.}~\bibnamefont {Kohmoto}},
  \bibinfo {author} {\bibfnamefont {M.~P.}\ \bibnamefont {Nightingale}}, \ and\
  \bibinfo {author} {\bibfnamefont {M.}~\bibnamefont {den Nijs}},\ }\href
  {\doibase 10.1103/PhysRevLett.49.405} {\bibfield  {journal} {\bibinfo
  {journal} {Phys. Rev. Lett.}\ }\textbf {\bibinfo {volume} {49}},\ \bibinfo
  {pages} {405} (\bibinfo {year} {1982})}\BibitemShut {NoStop}%
\bibitem [{\citenamefont {Qi}(2011)}]{PhysRevLett.107.126803}%
  \BibitemOpen
  \bibfield  {author} {\bibinfo {author} {\bibfnamefont {X.-L.}\ \bibnamefont
  {Qi}},\ }\href {\doibase 10.1103/PhysRevLett.107.126803} {\bibfield
  {journal} {\bibinfo  {journal} {Phys. Rev. Lett.}\ }\textbf {\bibinfo
  {volume} {107}},\ \bibinfo {pages} {126803} (\bibinfo {year}
  {2011})}\BibitemShut {NoStop}%
\bibitem [{\citenamefont {Maki}\ and\ \citenamefont
  {Zotos}(1983)}]{PhysRevB.28.4349}%
  \BibitemOpen
  \bibfield  {author} {\bibinfo {author} {\bibfnamefont {K.}~\bibnamefont
  {Maki}}\ and\ \bibinfo {author} {\bibfnamefont {X.}~\bibnamefont {Zotos}},\
  }\href {\doibase 10.1103/PhysRevB.28.4349} {\bibfield  {journal} {\bibinfo
  {journal} {Phys. Rev. B}\ }\textbf {\bibinfo {volume} {28}},\ \bibinfo
  {pages} {4349} (\bibinfo {year} {1983})}\BibitemShut {NoStop}%
\bibitem [{\citenamefont {Imai}\ \emph {et~al.}(1990)\citenamefont {Imai},
  \citenamefont {Ishikawa}, \citenamefont {Matsuyama},\ and\ \citenamefont
  {Tanaka}}]{PhysRevB.42.10610}%
  \BibitemOpen
  \bibfield  {author} {\bibinfo {author} {\bibfnamefont {N.}~\bibnamefont
  {Imai}}, \bibinfo {author} {\bibfnamefont {K.}~\bibnamefont {Ishikawa}},
  \bibinfo {author} {\bibfnamefont {T.}~\bibnamefont {Matsuyama}}, \ and\
  \bibinfo {author} {\bibfnamefont {I.}~\bibnamefont {Tanaka}},\ }\href
  {\doibase 10.1103/PhysRevB.42.10610} {\bibfield  {journal} {\bibinfo
  {journal} {Phys. Rev. B}\ }\textbf {\bibinfo {volume} {42}},\ \bibinfo
  {pages} {10610} (\bibinfo {year} {1990})}\BibitemShut {NoStop}%
\bibitem [{\citenamefont {Ishikawa}\ \emph {et~al.}(1995)\citenamefont
  {Ishikawa}, \citenamefont {Maeda},\ and\ \citenamefont
  {Tadaki}}]{PhysRevB.51.5048}%
  \BibitemOpen
  \bibfield  {author} {\bibinfo {author} {\bibfnamefont {K.}~\bibnamefont
  {Ishikawa}}, \bibinfo {author} {\bibfnamefont {N.}~\bibnamefont {Maeda}}, \
  and\ \bibinfo {author} {\bibfnamefont {K.}~\bibnamefont {Tadaki}},\ }\href
  {\doibase 10.1103/PhysRevB.51.5048} {\bibfield  {journal} {\bibinfo
  {journal} {Phys. Rev. B}\ }\textbf {\bibinfo {volume} {51}},\ \bibinfo
  {pages} {5048} (\bibinfo {year} {1995})}\BibitemShut {NoStop}%
\bibitem [{\citenamefont {Ezawa}\ and\ \citenamefont
  {Hasebe}(2002)}]{PhysRevB.65.075311}%
  \BibitemOpen
  \bibfield  {author} {\bibinfo {author} {\bibfnamefont {Z.~F.}\ \bibnamefont
  {Ezawa}}\ and\ \bibinfo {author} {\bibfnamefont {K.}~\bibnamefont {Hasebe}},\
  }\href {\doibase 10.1103/PhysRevB.65.075311} {\bibfield  {journal} {\bibinfo
  {journal} {Phys. Rev. B}\ }\textbf {\bibinfo {volume} {65}},\ \bibinfo
  {pages} {075311} (\bibinfo {year} {2002})}\BibitemShut {NoStop}%
\bibitem [{Note1()}]{Note1}%
  \BibitemOpen
  \bibinfo {note} {Strictly speaking, we should treat the bandwidth $W$ as
  perturbation, such that $H_t$ corresponds to the difference $H_W - H_{W=0}$,
  where $H_{W=0}$ corresponds to the band-flattened version of the band
  structure, but this will not affect our formulation of the general
  approach.}\BibitemShut {Stop}%
\bibitem [{\citenamefont {Imada}\ \emph {et~al.}(1998)\citenamefont {Imada},
  \citenamefont {Fujimori},\ and\ \citenamefont {Tokura}}]{RevModPhys.70.1039}%
  \BibitemOpen
  \bibfield  {author} {\bibinfo {author} {\bibfnamefont {M.}~\bibnamefont
  {Imada}}, \bibinfo {author} {\bibfnamefont {A.}~\bibnamefont {Fujimori}}, \
  and\ \bibinfo {author} {\bibfnamefont {Y.}~\bibnamefont {Tokura}},\ }\href
  {\doibase 10.1103/RevModPhys.70.1039} {\bibfield  {journal} {\bibinfo
  {journal} {Rev. Mod. Phys.}\ }\textbf {\bibinfo {volume} {70}},\ \bibinfo
  {pages} {1039} (\bibinfo {year} {1998})}\BibitemShut {NoStop}%
\bibitem [{\citenamefont {Hickey}\ \emph {et~al.}(2016)\citenamefont {Hickey},
  \citenamefont {Cincio}, \citenamefont {Papi\ifmmode~\acute{c}\else
  \'{c}\fi{}},\ and\ \citenamefont {Paramekanti}}]{PhysRevLett.116.137202}%
  \BibitemOpen
  \bibfield  {author} {\bibinfo {author} {\bibfnamefont {C.}~\bibnamefont
  {Hickey}}, \bibinfo {author} {\bibfnamefont {L.}~\bibnamefont {Cincio}},
  \bibinfo {author} {\bibfnamefont {Z.}~\bibnamefont
  {Papi\ifmmode~\acute{c}\else \'{c}\fi{}}}, \ and\ \bibinfo {author}
  {\bibfnamefont {A.}~\bibnamefont {Paramekanti}},\ }\href {\doibase
  10.1103/PhysRevLett.116.137202} {\bibfield  {journal} {\bibinfo  {journal}
  {Phys. Rev. Lett.}\ }\textbf {\bibinfo {volume} {116}},\ \bibinfo {pages}
  {137202} (\bibinfo {year} {2016})}\BibitemShut {NoStop}%
\bibitem [{\citenamefont {Bauer}\ \emph {et~al.}(2014)\citenamefont {Bauer},
  \citenamefont {Cincio}, \citenamefont {Keller}, \citenamefont {Dolfi},
  \citenamefont {Vidal}, \citenamefont {Trebst},\ and\ \citenamefont
  {Ludwig}}]{Bauer2014Chiral}%
  \BibitemOpen
  \bibfield  {author} {\bibinfo {author} {\bibfnamefont {B.}~\bibnamefont
  {Bauer}}, \bibinfo {author} {\bibfnamefont {L.}~\bibnamefont {Cincio}},
  \bibinfo {author} {\bibfnamefont {B.~P.}\ \bibnamefont {Keller}}, \bibinfo
  {author} {\bibfnamefont {M.}~\bibnamefont {Dolfi}}, \bibinfo {author}
  {\bibfnamefont {G.}~\bibnamefont {Vidal}}, \bibinfo {author} {\bibfnamefont
  {S.}~\bibnamefont {Trebst}}, \ and\ \bibinfo {author} {\bibfnamefont {A.~W.}\
  \bibnamefont {Ludwig}},\ }\href@noop {} {\bibfield  {journal} {\bibinfo
  {journal} {Nature Communications}\ }\textbf {\bibinfo {volume} {5}},\
  \bibinfo {pages} {5137} (\bibinfo {year} {2014})}\BibitemShut {NoStop}%
\bibitem [{\citenamefont {Wen}\ \emph {et~al.}(1989)\citenamefont {Wen},
  \citenamefont {Wilczek},\ and\ \citenamefont {Zee}}]{Wen1989Chiral}%
  \BibitemOpen
  \bibfield  {author} {\bibinfo {author} {\bibfnamefont {X.~G.}\ \bibnamefont
  {Wen}}, \bibinfo {author} {\bibfnamefont {F.}~\bibnamefont {Wilczek}}, \ and\
  \bibinfo {author} {\bibfnamefont {A.}~\bibnamefont {Zee}},\ }\href {\doibase
  10.1103/PhysRevB.39.11413} {\bibfield  {journal} {\bibinfo  {journal} {Phys.
  Rev. B}\ }\textbf {\bibinfo {volume} {39}},\ \bibinfo {pages} {11413}
  (\bibinfo {year} {1989})}\BibitemShut {NoStop}%
\bibitem [{\citenamefont {{Wang}}\ \emph {et~al.}(2012)\citenamefont {{Wang}},
  \citenamefont {{Yao}}, \citenamefont {{Gong}},\ and\ \citenamefont
  {{Sheng}}}]{Wang2012Fractional}%
  \BibitemOpen
  \bibfield  {author} {\bibinfo {author} {\bibfnamefont {Y.-F.}\ \bibnamefont
  {{Wang}}}, \bibinfo {author} {\bibfnamefont {H.}~\bibnamefont {{Yao}}},
  \bibinfo {author} {\bibfnamefont {C.-D.}\ \bibnamefont {{Gong}}}, \ and\
  \bibinfo {author} {\bibfnamefont {D.~N.}\ \bibnamefont {{Sheng}}},\ }\href
  {\doibase 10.1103/PhysRevB.86.201101} {\bibfield  {journal} {\bibinfo
  {journal} {\prb}\ }\textbf {\bibinfo {volume} {86}},\ \bibinfo {eid} {201101}
  (\bibinfo {year} {2012})},\ \Eprint {http://arxiv.org/abs/1204.1697}
  {arXiv:1204.1697 [cond-mat.str-el]} \BibitemShut {NoStop}%
\bibitem [{\citenamefont {{Parameswaran}}\ \emph {et~al.}(2013)\citenamefont
  {{Parameswaran}}, \citenamefont {{Roy}},\ and\ \citenamefont
  {{Sondhi}}}]{Parameswaran2013Fractional}%
  \BibitemOpen
  \bibfield  {author} {\bibinfo {author} {\bibfnamefont {S.~A.}\ \bibnamefont
  {{Parameswaran}}}, \bibinfo {author} {\bibfnamefont {R.}~\bibnamefont
  {{Roy}}}, \ and\ \bibinfo {author} {\bibfnamefont {S.~L.}\ \bibnamefont
  {{Sondhi}}},\ }\href {\doibase 10.1016/j.crhy.2013.04.003} {\bibfield
  {journal} {\bibinfo  {journal} {Comptes Rendus Physique}\ }\textbf {\bibinfo
  {volume} {14}},\ \bibinfo {pages} {816} (\bibinfo {year} {2013})},\ \Eprint
  {http://arxiv.org/abs/1302.6606} {arXiv:1302.6606 [cond-mat.str-el]}
  \BibitemShut {NoStop}%
\bibitem [{\citenamefont {Zeng}\ and\ \citenamefont
  {Sheng}(2018)}]{Zeng2018SUN-fractional}%
  \BibitemOpen
  \bibfield  {author} {\bibinfo {author} {\bibfnamefont {T.-S.}\ \bibnamefont
  {Zeng}}\ and\ \bibinfo {author} {\bibfnamefont {D.~N.}\ \bibnamefont
  {Sheng}},\ }\href {\doibase 10.1103/PhysRevB.97.035151} {\bibfield  {journal}
  {\bibinfo  {journal} {Phys. Rev. B}\ }\textbf {\bibinfo {volume} {97}},\
  \bibinfo {pages} {035151} (\bibinfo {year} {2018})}\BibitemShut {NoStop}%
\bibitem [{\citenamefont {{Liu}}\ \emph {et~al.}(2019)\citenamefont {{Liu}},
  \citenamefont {{Vishwanath}},\ and\ \citenamefont {{Khalaf}}}]{Liu2019Shift}%
  \BibitemOpen
  \bibfield  {author} {\bibinfo {author} {\bibfnamefont {S.}~\bibnamefont
  {{Liu}}}, \bibinfo {author} {\bibfnamefont {A.}~\bibnamefont {{Vishwanath}}},
  \ and\ \bibinfo {author} {\bibfnamefont {E.}~\bibnamefont {{Khalaf}}},\
  }\href {\doibase 10.1103/PhysRevX.9.031003} {\bibfield  {journal} {\bibinfo
  {journal} {Physical Review X}\ }\textbf {\bibinfo {volume} {9}},\ \bibinfo
  {eid} {031003} (\bibinfo {year} {2019})},\ \Eprint
  {http://arxiv.org/abs/1809.01636} {arXiv:1809.01636 [cond-mat.mes-hall]}
  \BibitemShut {NoStop}%
\bibitem [{\citenamefont {Zhang}\ \emph {et~al.}(2011)\citenamefont {Zhang},
  \citenamefont {Hung}, \citenamefont {Zhang},\ and\ \citenamefont
  {Wu}}]{PhysRevA.83.023615}%
  \BibitemOpen
  \bibfield  {author} {\bibinfo {author} {\bibfnamefont {M.}~\bibnamefont
  {Zhang}}, \bibinfo {author} {\bibfnamefont {H.-h.}\ \bibnamefont {Hung}},
  \bibinfo {author} {\bibfnamefont {C.}~\bibnamefont {Zhang}}, \ and\ \bibinfo
  {author} {\bibfnamefont {C.}~\bibnamefont {Wu}},\ }\href {\doibase
  10.1103/PhysRevA.83.023615} {\bibfield  {journal} {\bibinfo  {journal} {Phys.
  Rev. A}\ }\textbf {\bibinfo {volume} {83}},\ \bibinfo {pages} {023615}
  (\bibinfo {year} {2011})}\BibitemShut {NoStop}%
\bibitem [{\citenamefont {Koshino}\ \emph {et~al.}(2018)\citenamefont
  {Koshino}, \citenamefont {Yuan}, \citenamefont {Koretsune}, \citenamefont
  {Ochi}, \citenamefont {Kuroki},\ and\ \citenamefont
  {Fu}}]{PhysRevX.8.031087}%
  \BibitemOpen
  \bibfield  {author} {\bibinfo {author} {\bibfnamefont {M.}~\bibnamefont
  {Koshino}}, \bibinfo {author} {\bibfnamefont {N.~F.~Q.}\ \bibnamefont
  {Yuan}}, \bibinfo {author} {\bibfnamefont {T.}~\bibnamefont {Koretsune}},
  \bibinfo {author} {\bibfnamefont {M.}~\bibnamefont {Ochi}}, \bibinfo {author}
  {\bibfnamefont {K.}~\bibnamefont {Kuroki}}, \ and\ \bibinfo {author}
  {\bibfnamefont {L.}~\bibnamefont {Fu}},\ }\href {\doibase
  10.1103/PhysRevX.8.031087} {\bibfield  {journal} {\bibinfo  {journal} {Phys.
  Rev. X}\ }\textbf {\bibinfo {volume} {8}},\ \bibinfo {pages} {031087}
  (\bibinfo {year} {2018})}\BibitemShut {NoStop}%
\bibitem [{\citenamefont {Akagi}\ and\ \citenamefont
  {Motome}(2011)}]{Akagi_2011}%
  \BibitemOpen
  \bibfield  {author} {\bibinfo {author} {\bibfnamefont {Y.}~\bibnamefont
  {Akagi}}\ and\ \bibinfo {author} {\bibfnamefont {Y.}~\bibnamefont {Motome}},\
  }\href {\doibase 10.1088/1742-6596/320/1/012059} {\bibfield  {journal}
  {\bibinfo  {journal} {Journal of Physics: Conference Series}\ }\textbf
  {\bibinfo {volume} {320}},\ \bibinfo {pages} {012059} (\bibinfo {year}
  {2011})}\BibitemShut {NoStop}%
\bibitem [{\citenamefont {Wietek}\ and\ \citenamefont
  {L\"auchli}(2017)}]{PhysRevB.95.035141}%
  \BibitemOpen
  \bibfield  {author} {\bibinfo {author} {\bibfnamefont {A.}~\bibnamefont
  {Wietek}}\ and\ \bibinfo {author} {\bibfnamefont {A.~M.}\ \bibnamefont
  {L\"auchli}},\ }\href {\doibase 10.1103/PhysRevB.95.035141} {\bibfield
  {journal} {\bibinfo  {journal} {Phys. Rev. B}\ }\textbf {\bibinfo {volume}
  {95}},\ \bibinfo {pages} {035141} (\bibinfo {year} {2017})}\BibitemShut
  {NoStop}%
\bibitem [{\citenamefont {Gong}\ \emph {et~al.}(2017)\citenamefont {Gong},
  \citenamefont {Zhu}, \citenamefont {Zhu}, \citenamefont {Sheng},\ and\
  \citenamefont {Yang}}]{PhysRevB.96.075116}%
  \BibitemOpen
  \bibfield  {author} {\bibinfo {author} {\bibfnamefont {S.-S.}\ \bibnamefont
  {Gong}}, \bibinfo {author} {\bibfnamefont {W.}~\bibnamefont {Zhu}}, \bibinfo
  {author} {\bibfnamefont {J.-X.}\ \bibnamefont {Zhu}}, \bibinfo {author}
  {\bibfnamefont {D.~N.}\ \bibnamefont {Sheng}}, \ and\ \bibinfo {author}
  {\bibfnamefont {K.}~\bibnamefont {Yang}},\ }\href {\doibase
  10.1103/PhysRevB.96.075116} {\bibfield  {journal} {\bibinfo  {journal} {Phys.
  Rev. B}\ }\textbf {\bibinfo {volume} {96}},\ \bibinfo {pages} {075116}
  (\bibinfo {year} {2017})}\BibitemShut {NoStop}%
\bibitem [{\citenamefont {Bauer}\ \emph {et~al.}(2013)\citenamefont {Bauer},
  \citenamefont {Keller}, \citenamefont {Dolfi}, \citenamefont {Trebst},\ and\
  \citenamefont {Ludwig}}]{bauer2013gapped}%
  \BibitemOpen
  \bibfield  {author} {\bibinfo {author} {\bibfnamefont {B.}~\bibnamefont
  {Bauer}}, \bibinfo {author} {\bibfnamefont {B.~P.}\ \bibnamefont {Keller}},
  \bibinfo {author} {\bibfnamefont {M.}~\bibnamefont {Dolfi}}, \bibinfo
  {author} {\bibfnamefont {S.}~\bibnamefont {Trebst}}, \ and\ \bibinfo {author}
  {\bibfnamefont {A.~W.}\ \bibnamefont {Ludwig}},\ }\href@noop {} {\bibfield
  {journal} {\bibinfo  {journal} {arXiv preprint arXiv:1303.6963}\ } (\bibinfo
  {year} {2013})}\BibitemShut {NoStop}%
\bibitem [{\citenamefont {Biswas}\ \emph {et~al.}(2011)\citenamefont {Biswas},
  \citenamefont {Fu}, \citenamefont {Laumann},\ and\ \citenamefont
  {Sachdev}}]{PhysRevB.83.245131}%
  \BibitemOpen
  \bibfield  {author} {\bibinfo {author} {\bibfnamefont {R.~R.}\ \bibnamefont
  {Biswas}}, \bibinfo {author} {\bibfnamefont {L.}~\bibnamefont {Fu}}, \bibinfo
  {author} {\bibfnamefont {C.~R.}\ \bibnamefont {Laumann}}, \ and\ \bibinfo
  {author} {\bibfnamefont {S.}~\bibnamefont {Sachdev}},\ }\href {\doibase
  10.1103/PhysRevB.83.245131} {\bibfield  {journal} {\bibinfo  {journal} {Phys.
  Rev. B}\ }\textbf {\bibinfo {volume} {83}},\ \bibinfo {pages} {245131}
  (\bibinfo {year} {2011})}\BibitemShut {NoStop}%
\bibitem [{\citenamefont {Hu}\ \emph {et~al.}(2015)\citenamefont {Hu},
  \citenamefont {Gong}, \citenamefont {Zhu},\ and\ \citenamefont
  {Sheng}}]{PhysRevB.92.140403}%
  \BibitemOpen
  \bibfield  {author} {\bibinfo {author} {\bibfnamefont {W.-J.}\ \bibnamefont
  {Hu}}, \bibinfo {author} {\bibfnamefont {S.-S.}\ \bibnamefont {Gong}},
  \bibinfo {author} {\bibfnamefont {W.}~\bibnamefont {Zhu}}, \ and\ \bibinfo
  {author} {\bibfnamefont {D.~N.}\ \bibnamefont {Sheng}},\ }\href {\doibase
  10.1103/PhysRevB.92.140403} {\bibfield  {journal} {\bibinfo  {journal} {Phys.
  Rev. B}\ }\textbf {\bibinfo {volume} {92}},\ \bibinfo {pages} {140403}
  (\bibinfo {year} {2015})}\BibitemShut {NoStop}%
\bibitem [{\citenamefont {Zhu}\ and\ \citenamefont
  {White}(2015)}]{PhysRevB.92.041105}%
  \BibitemOpen
  \bibfield  {author} {\bibinfo {author} {\bibfnamefont {Z.}~\bibnamefont
  {Zhu}}\ and\ \bibinfo {author} {\bibfnamefont {S.~R.}\ \bibnamefont
  {White}},\ }\href {\doibase 10.1103/PhysRevB.92.041105} {\bibfield  {journal}
  {\bibinfo  {journal} {Phys. Rev. B}\ }\textbf {\bibinfo {volume} {92}},\
  \bibinfo {pages} {041105} (\bibinfo {year} {2015})}\BibitemShut {NoStop}%
\bibitem [{\citenamefont {Iqbal}\ \emph {et~al.}(2016)\citenamefont {Iqbal},
  \citenamefont {Hu}, \citenamefont {Thomale}, \citenamefont {Poilblanc},\ and\
  \citenamefont {Becca}}]{PhysRevB.93.144411}%
  \BibitemOpen
  \bibfield  {author} {\bibinfo {author} {\bibfnamefont {Y.}~\bibnamefont
  {Iqbal}}, \bibinfo {author} {\bibfnamefont {W.-J.}\ \bibnamefont {Hu}},
  \bibinfo {author} {\bibfnamefont {R.}~\bibnamefont {Thomale}}, \bibinfo
  {author} {\bibfnamefont {D.}~\bibnamefont {Poilblanc}}, \ and\ \bibinfo
  {author} {\bibfnamefont {F.}~\bibnamefont {Becca}},\ }\href {\doibase
  10.1103/PhysRevB.93.144411} {\bibfield  {journal} {\bibinfo  {journal} {Phys.
  Rev. B}\ }\textbf {\bibinfo {volume} {93}},\ \bibinfo {pages} {144411}
  (\bibinfo {year} {2016})}\BibitemShut {NoStop}%
\bibitem [{\citenamefont {Hu}\ \emph {et~al.}(2019)\citenamefont {Hu},
  \citenamefont {Zhu}, \citenamefont {Eggert},\ and\ \citenamefont
  {He}}]{PhysRevLett.123.207203}%
  \BibitemOpen
  \bibfield  {author} {\bibinfo {author} {\bibfnamefont {S.}~\bibnamefont
  {Hu}}, \bibinfo {author} {\bibfnamefont {W.}~\bibnamefont {Zhu}}, \bibinfo
  {author} {\bibfnamefont {S.}~\bibnamefont {Eggert}}, \ and\ \bibinfo {author}
  {\bibfnamefont {Y.-C.}\ \bibnamefont {He}},\ }\href {\doibase
  10.1103/PhysRevLett.123.207203} {\bibfield  {journal} {\bibinfo  {journal}
  {Phys. Rev. Lett.}\ }\textbf {\bibinfo {volume} {123}},\ \bibinfo {pages}
  {207203} (\bibinfo {year} {2019})}\BibitemShut {NoStop}%
\bibitem [{\citenamefont {Kardar}(2007)}]{kardar_2007}%
  \BibitemOpen
  \bibfield  {author} {\bibinfo {author} {\bibfnamefont {M.}~\bibnamefont
  {Kardar}},\ }\href {\doibase 10.1017/CBO9780511815881} {\emph {\bibinfo
  {title} {Statistical Physics of Fields}}}\ (\bibinfo  {publisher} {Cambridge
  University Press},\ \bibinfo {year} {2007})\BibitemShut {NoStop}%
\bibitem [{\citenamefont {Family}\ and\ \citenamefont
  {Gould}(1984)}]{ISI:A1984SP43400052}%
  \BibitemOpen
  \bibfield  {author} {\bibinfo {author} {\bibfnamefont {F.}~\bibnamefont
  {Family}}\ and\ \bibinfo {author} {\bibfnamefont {H.}~\bibnamefont {Gould}},\
  }\href {\doibase 10.1063/1.447171} {\bibfield  {journal} {\bibinfo  {journal}
  {J. Chem. Phys.}\ }\textbf {\bibinfo {volume} {80}},\ \bibinfo {pages} {3892}
  (\bibinfo {year} {1984})}\BibitemShut {NoStop}%
\bibitem [{Note2()}]{Note2}%
  \BibitemOpen
  \bibinfo {note} {Here we use a normalization which is convenient for taking
  the infinite size limit.}\BibitemShut {Stop}%
\bibitem [{\citenamefont {Marzari}\ \emph {et~al.}(2012)\citenamefont
  {Marzari}, \citenamefont {Mostofi}, \citenamefont {Yates}, \citenamefont
  {Souza},\ and\ \citenamefont {Vanderbilt}}]{RevModPhys.84.1419}%
  \BibitemOpen
  \bibfield  {author} {\bibinfo {author} {\bibfnamefont {N.}~\bibnamefont
  {Marzari}}, \bibinfo {author} {\bibfnamefont {A.~A.}\ \bibnamefont
  {Mostofi}}, \bibinfo {author} {\bibfnamefont {J.~R.}\ \bibnamefont {Yates}},
  \bibinfo {author} {\bibfnamefont {I.}~\bibnamefont {Souza}}, \ and\ \bibinfo
  {author} {\bibfnamefont {D.}~\bibnamefont {Vanderbilt}},\ }\href {\doibase
  10.1103/RevModPhys.84.1419} {\bibfield  {journal} {\bibinfo  {journal} {Rev.
  Mod. Phys.}\ }\textbf {\bibinfo {volume} {84}},\ \bibinfo {pages} {1419}
  (\bibinfo {year} {2012})}\BibitemShut {NoStop}%
\bibitem [{\citenamefont {Mielke}(1993)}]{ISI:A1993KU26700019}%
  \BibitemOpen
  \bibfield  {author} {\bibinfo {author} {\bibfnamefont {A.}~\bibnamefont
  {Mielke}},\ }\href {\doibase 10.1016/0375-9601(93)90207-G} {\bibfield
  {journal} {\bibinfo  {journal} {Phys. Lett. A.}\ }\textbf {\bibinfo {volume}
  {174}},\ \bibinfo {pages} {443} (\bibinfo {year} {1993})}\BibitemShut
  {NoStop}%
\end{thebibliography}%

\clearpage
\onecolumngrid
\appendix
\section{Derivation of Perturbation Theory}\label{app:perturbation}
Suppose the Hilbert space can be split into degenerate subspaces labeled by the principle quantum number $n$. Basis states $\ket{n\alpha}$ within the subspace are labeled by the secondary quantum number $\alpha$. Assuming different subspaces are orthogonal to each other, but different basis states within each subspace can be nonorthogonal,
\eq{\braket{m\alpha}{n\beta}=\delta_{mn}G_{n\alpha\beta}.}
Define $G_n^{\alpha\beta}$ (the inverse metric) to be the inverse of $G_{n\alpha\beta}$.

Consider perturbing a Hamiltonian $H_0$ by the operator $V$ in the form of
\eqs{H(\lambda)&=H_0+\lambda V,\\
H_0&=\sum_{n}\ket{n\alpha}E_nG_n^{\alpha\beta}\bra{n\beta},\\
V&=\sum_{mn}\ket{m\alpha}G_m^{\alpha\alpha'}V_{m\alpha',n\beta'}G_n^{\beta'\beta}\bra{n\beta},}
where $\lambda$ is a small parameter controlling the perturbative expansion. The coefficients $E_n$ and $V_{m\alpha,n\beta}$ are given by
\eqs{E_nG_{n\alpha\beta}&=\bra{n\alpha}H_0\ket{n\beta},\\
V_{m\alpha,n\beta}&=\bra{m\alpha}V\ket{n\beta},}
where we have assumed that all states within the same subspace are degenerated in energy under $H_0$, i.e. $H_0\ket{n\alpha}=E_n\ket{n\alpha}$.

Under the perturbation, the degeneracy in teach subspace could be lifted. The goal is to find a new set of basis which block diagonalized the perturbed Hamiltonian $H(\lambda)$, such that
\eq{\label{eq:Heigen}H(\lambda)\ket{n\beta(\lambda)}=\ket{n\alpha(\lambda)}G_n^{\alpha\alpha'}E_{n\alpha'\beta}(\lambda).}
We can always fix the gauge such that the metric $G_n^{\alpha\alpha'}$ is invariant as we move along $\lambda$ (i.e. the gauge connection is trivial). The perturbation theory provides us a systematic method to calculate $E_{n\alpha\beta}(\lambda)$ order by order as Taylor series
\eq{\label{eq:Taylor}E_{n\alpha\beta}(\lambda)=E_nG_{n\alpha\beta}+\lambda\partial_\lambda E_{n\alpha\beta}+\frac{\lambda^2}{2}\partial_\lambda^2E_{n\alpha\beta}+\cdots,}
where we have used the fact that $E_{n\alpha\beta}(0)=E_nG_{n\alpha\beta}$ in the unperturbed limit. To evaluate the derivatives, let us first derive the  Hellmann-Feynman theorem.

We start by applying $\partial_\lambda$ to both sides of \eqnref{eq:Heigen},
\eqs{&\partial_\lambda H\ket{n\beta}+H\ket{\partial_\lambda n\beta}\\
=&\ket{\partial_\lambda n\alpha}G_n^{\alpha\alpha'}E_nG_{n\alpha'\beta}+\ket{n\alpha}G_n^{\alpha\alpha'}\partial_\lambda E_{n\alpha'\beta}\\
=&\ket{\partial_\lambda n\beta}E_n+\ket{n\alpha}G_n^{\alpha\alpha'}\partial_\lambda E_{n\alpha'\beta},}
where in the second step we have used $G_n^{\alpha\alpha'}G_{n\alpha'\beta}=\delta^\alpha_\beta$. Now overlap with $\bra{m\gamma}$ on both sides, also we have
\eq{\label{eq:withmg}\bra{m\gamma}\partial_\lambda H\ket{n\beta}+\bra{m\gamma}H\ket{\partial_\lambda n\beta}=\braket{m\gamma}{\partial_\lambda n\beta}E_n+\braket{m\gamma}{n\alpha}G_n^{\alpha\alpha'}\partial_\lambda E_{n\alpha'\beta}.}
\eqnref{eq:Heigen} implies $\bra{m\gamma}H=E_m\bra{m\gamma}$ at $\lambda=0$. Moreover,  $\braket{m\gamma}{n\alpha}G_n^{\alpha\alpha'}=\delta_{mn}G_{n\gamma\alpha}G_n^{\alpha\alpha'}=\delta_{mn}\delta_\gamma^{\alpha'}$, thus \eqnref{eq:withmg} becomes
\eq{\label{eq:FH}\bra{m\gamma}\partial_\lambda H\ket{n\beta}=\braket{m\gamma}{\partial_\lambda n\beta}(E_n-E_m)+\delta_{mn}\partial_\lambda E_{n\gamma\beta}.}
When $m=n$, \eqnref{eq:FH} implies the first Hellmann-Feynman theorem
\eq{\label{eq:FH1}\partial_\lambda E_{n\alpha\beta}=\bra{n\alpha}\partial_\lambda H\ket{n\beta}=V_{n\alpha,n\beta}.}
When $m\neq n$, \eqnref{eq:FH} implies the second Hellmann-Feynman theorem
\eq{\label{eq:FH2}\braket{m\alpha}{\partial_\lambda n\beta}=\frac{\bra{m\alpha}\partial_\lambda H\ket{n\beta}}{E_n-E_m}=\frac{V_{m\alpha,n\beta}}{E_n-E_m}.}

Now applying \eqnref{eq:FH1}, we can already evaluate the first order derivative $\partial_\lambda E_{n\alpha\beta}=V_{n\alpha,n\beta}$. Take one more derivative,
\eqs{\partial_\lambda^2 E_{n\alpha\beta}&=\bra{\partial_\lambda n\alpha}V\ket{n\beta}+\bra{n\alpha}V\ket{\partial_\lambda n\beta}\\
&=\sum_{m\neq n}\braket{\partial_\lambda n\alpha}{m\gamma}G_m^{\gamma\delta}\bra{m\delta}V\ket{n\beta}+\sum_{m\neq n}\bra{n\alpha}V\ket{m\gamma}G_m^{\gamma\delta}\braket{m\delta}{\partial_\lambda n\beta},}
applying \eqnref{eq:FH2},
\eq{\partial_\lambda^2 E_{n\alpha\beta}
=2\sum_{m\neq n}\frac{\bra{n\alpha}V\ket{m\gamma}G_m^{\gamma\delta}\bra{m\delta}V\ket{n\beta}}{E_n-E_m}.}
Substitute into \eqnref{eq:Taylor}, we arrive at
\eqs{\label{eq:pert}E_{n\alpha\beta}(\lambda)&=E_nG_{n\alpha\beta}+\lambda V_{n\alpha,n\beta}+\lambda^2\sum_{m\neq n}\frac{V_{n\alpha,m\gamma}G_m^{\gamma\delta}V_{m\delta,n\beta}}{E_n-E_m}+\cdots.}
This gives the perturbative correction to the effective Hamiltonian within each block to the order of $\lambda^2$.

The perturbative correction of the state can be calculated as well. We first evaluate the derivative
\eqs{\ket{\partial_\lambda n\alpha}&=\sum_{m\neq n}\ket{m\beta}G_m^{\beta\gamma}\braket{m\gamma}{\partial_\lambda n\alpha}\\
&=\sum_{m\neq n}\ket{m\beta}G_m^{\beta\gamma}\frac{V_{m\gamma,n\alpha}}{E_n-E_m}.}
Then the state correction to the order of $\lambda$ reads
\eq{\ket{n\alpha(\lambda)}=\ket{n\alpha}+\lambda\sum_{m\neq n}\ket{m\beta}\frac{G_m^{\beta\gamma}V_{m\gamma,n\alpha}}{E_n-E_m}+\cdots.}

\section{\label{sec:convergence} Effective Spin Hamiltonian and Convergence of Permutations}
The spin Hamiltonian $\tilde{H}$ (\eqnref{eq:GH formula}) necessarily carries non-local spin interactions arising from apart permutations. At order $\mathcal{O}(g^4)$, orbital $I_1$ and $J_1$ would support a non local spin interaction $\sum_{(I_2J_2)}(\vect{S}_{I_1}\cdot\vect{S}_{J_1})(\vect{S}_{I_2}\cdot\vect{S}_{J_2})$, where the summation is taken over the entire lattice. Thus, as long as the overlapping weight $g$ is non-zero, the single site energy in FM phase would blow up after enumerating over infinite number of lattice sites. This contrasts to the fact that the Hubbard model on the Kagome lattice is well-defined in the thermodynamical limit. The bottom line is that the many-body overlapping matrix $G$ also contains non-local spin interaction, which eventually cancels out those terms in $\tilde{H}$. Specifically, we can factor out $G$ from $\tilde{H}$ by adding residue terms with colliding indices

\begin{equation}
    \label{eq:series}
    \begin{aligned}
        \tilde{H}&=G\times\bigg\{\sum_{\mcP_0\in S^*_N}(-)^{\mcP_0} \rchi_{\mcP_0}H_{\mcP_0}-\sum_{\substack{\mcP_0,\mcP_1\in S^*_N\\ \mcP_0 \cap \mcP_1\neq\emptyset}}(-)^{\mcP_0+\mcP_1} \rchi_{\mcP_0\circ\mcP_1}H_{\mcP_0}G_{\mcP_1}+...\\
        &\phantom{=}+(-)^n \sum_{\substack{\mcP_i\in S^*_N\\ \mcP_i \cap \{\mcP_j\}/\mcP_i\neq\emptyset}}(-)^{\sum_{i=0}^n\mcP_i} \rchi_{\mcP_0\circ\mcP_1...\circ\mcP_n}H_{\mcP_0}\prod_{i=1}^nG_{\mcP_i}+...\bigg\}.
    \end{aligned}
\end{equation}
In the main text, we show an diagram representation of it in Eq.\ref{eq:H=Gexpand}, and we identify the first term as $H_c$. However, the oscillating series still contain infinite terms, and the convergence of the entire series is not gauranteed. To resolve these puzzles, we perform a numerical test on a finite-size system, and then give a general argument on the convergence of entire series.

First, we notice that for $\mcP'_0=\mcP_0\circ\mcP_1$, the first two terms in Eq.(\ref{eq:series}) have the same spin operator $\rchi_{\mcP'_0}=\rchi_{\mcP_0\circ\mcP_1}$ but with different strength $H_{\mcP'_0}>H_{\mcP_0}G_{\mcP_1}$, because the second formula necessary contains redundant $t,g,U$ at the intersection $\mcP_0 \cap \mcP_1\neq\emptyset$. Therefore, we would like to claim that $H_c$ dominates over others. To this end, we perform a numerical test of it on a finite-size system. We define the deviation between $G^{-1}\tilde{H}$ and $H_c$ as
\be
D=1-\frac{||G H_c||}{||\tilde{H}||}.
\ee
\begin{figure}[htbp]
 \includegraphics[width=.8\linewidth]{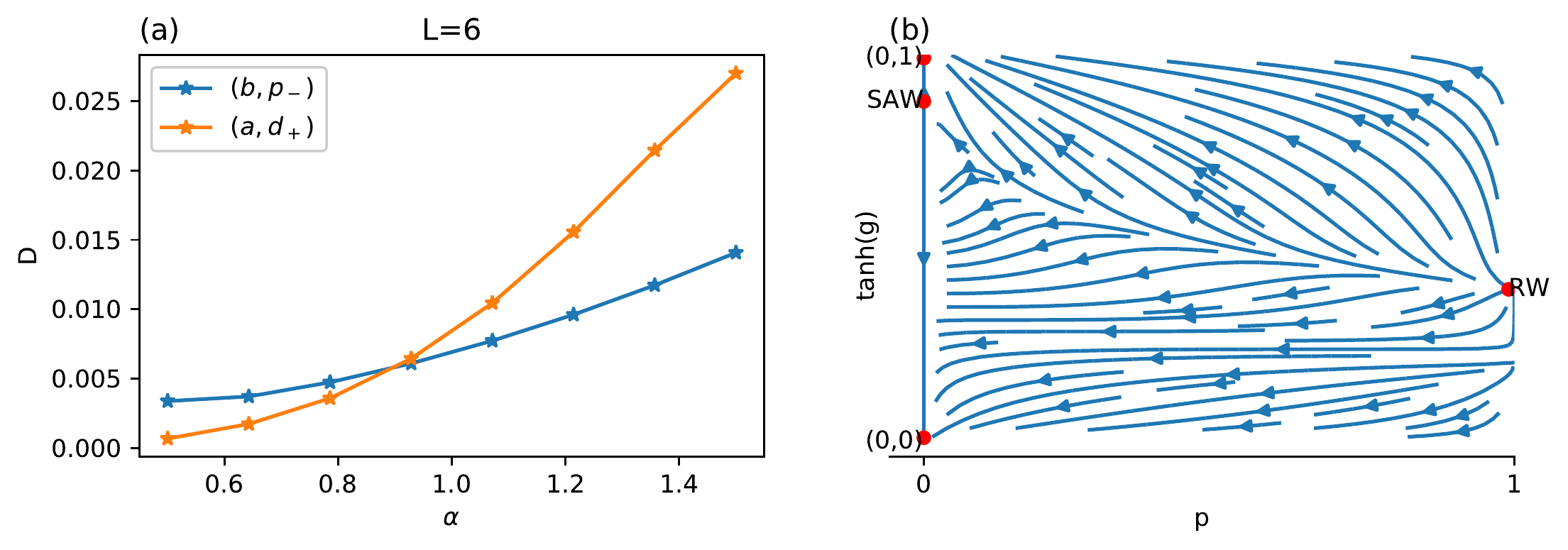}
\caption{In (a), we plot $D$ against rescaled weight $g'=\alpha g$ for two different nonorthogonal Wannier orbitals (Fig.\ref{fig:orbitals}) on a six-site system. For $\alpha=1$ we have the original configuration. In (b), we plot RG flow of Eq.(\ref{rgflow}), in which red points indicate fixed points.}
\label{D}
\end{figure}
The relation between $D$ and typical overlapping weight $g$ on a finite-size triangular lattice is plotted in Fig.\ref{D}. We artificially rescale the weight of each orbital $g'=\alpha g$ , which is equivalent to transferring weights onto the Wyckoff position $a$ and $b$. They do not belong to the kagome lattice thus have no contribution on permutations. We find that $D(\alpha=1)<0.01$, meaning a high accuracy of approximating $G^{-1}\tilde{H}$ with $H_c$. In conclusion, within numeric capability, $H_c$ is a promising starting point to investigate the physics of the model.

Next, we want to address the issue related to the convergence of the whole series. We take the FM phase to examine this convergence. The energy of FM phase $\la H_c\ra_\text{FM}$ is calculated using  $\la \rchi \ra_\text{FM}=1$.
Generically, there will be a $g^l$ total weight in front of each term with $l=|\mcP|$ being the length of the connected permutation. When $l$ increases, the number of graphs also grows, but we have little knowledge of the speed. We need to determine which factor will dominate in the thermodynamical limit. To this end, we first transform the energy into the language of random walk. Since each term only encounters one $H_\mcP$, and each individual $H_\mcP$ only differs from $G_\mcP$ by a local factor $t_{\mcP(I)I}$ or $U_{\mcP(I)I\mcP(J)J}$, we are allowed to regard them as $g_{IJ}$ correspondingly without changing the convergence of the series. Then we have
\be
\beal
G_{\mcP\in S^*_N}&\to g^{|\mcP|}\la0|\text{1-loop}(|\mcP|)|0\ra\\
H_{\mcP\in S^*_N}&\to |\mcP|g^{|\mcP|}\la0|\text{1-loop}(|\mcP|)|0\ra
\eeal
\ee
where $|.|$represent the length of connected permutation and the factor $|\mcP|$ for $H$ comes from the fact that we can choose any bond to be $t_{\mcP(I)I}$. The $\la0|W|0\ra$ means the total number of graphs starting from origin and ending at it under rule $W$. And $W=\text{1-loop}(|\mcP|)$ means we can only take a closed connected permutations with no self-intersections. 
Applying $(-)^n(-)^{\sum_{i=0}^n\mcP_i}=(-)^n(-)^{\sum_{i=0}^n(|\mcP_i|-1)}=(-)^{-1+\sum_{i=0}^n|\mcP_i|}$, we have
\be
\la H_c\ra_\text{FM}=\frac{N}{2}\sum_l l(-g)^l \la0|\text{RW}(l)|0\ra \cdot \prod_{i=1}^{n_c}\frac{1}{k_i}
\ee
where $k_i$ is the overcounting factor for each intersection (e.g. $k_i=2$ for site $i$ being visited twice). This is because whenever there is an intersection, simple random walk (RW) will have multiple choice to go through, which overcounts the $\la H_c\ra_\text{FM}$. In fact, for backtracking process $\leftrightarrows$, there should be no discounting factor since there is only one way for RW to act. But neglecting backtracking does not influence the results of RW much, especially when the number of neighbors is large. Generally, the random walk problem can be written as
\be
f(g, p)=\frac{N}{2}\sum_l l(-g)^l p^{n_c}\la0|\text{RW}(l)|0\ra, 
\label{gmp}
\ee
where $0<g, p<1$. Notice, when $p=1$, we recover the RW problem, while when $p=0$, it becomes self-avoiding walk (SAW). Following \cite{kardar_2007}, we decompose number of walks with length $l$ into $\la0|W(l)|0\ra=\la0|T^l|0\ra$ with transfer matrix $T=W(1)$. The matrix elements of transfer matrix in position space can be written down explicitly. 
Applying Fourier transformation, the transfer matrix becomes diagonalizable due to the translation symmetry. The energy is calculated as
\be
\beal
\frac{E}{N}&=\frac{1}{2}\sum_l l(-g)^l \la 0 | W(l)|0\ra\\
&=\frac{1}{2}\Tr\bigg(\sum_l l(-gT)^l \bigg)\\
&=\frac{1}{2}\Tr[ \log(1+gT)]\\
&=\frac{1}{2}\sum_{\b q}\Tr[ \log(1+gT(\b{q}))]
\eeal
\ee
For RW and SAW, we have
\be
\beal
\frac{E_\text{RW}}{N}&=\frac{1}{2} \int \frac{d^2\b{q}}{(2 \pi)^2}\log\bigg\{1+2g(\cos{q_x}+\cos{q_y}+\cos{(q_x-q_y)})\bigg\}\\
\frac{E_\text{SAW}}{N}&=\frac{1}{2}\int\frac{d^2\b q}{(2\pi)^2} \log\bigg\{1+g^2(3-8g+3g^2+g^4)+2g(1-g^2)^2(\cos{q_x}+\cos{q_y}+\cos{(q_x-q_y)})\bigg\},
\eeal
\ee
respectively. In both cases, the critical point is given by setting $q_x=q_y=\pi$. Specifically, in RW, the critical point is $g^*_\text{RW}=1/2$, larger than which the energy keeps diverging; in SAW, there is only one zero points at $g^*_\text{SAW}=1$, apart from which, the energy is finite. Then one would wonder what is the fate of Eq.(\ref{gmp}) when $0<p<1$ --- Does it diverge like RW when $g>g^*$ or like SAW when $g=g^*$? We apply the real space renormalization group \cite{ISI:A1984SP43400052} (RG) approach to investigate the critical point of random walk. F. Family \textit{et al.} argued that for bond dimension $b$, random walk of length $\xi=(d-1)(b-1)^2+b^2$ is enough to capture the critical point. Consequently, we obtain the recursion relation for $b=2$ on a triangular lattice,
\be
\beal
g' =&-g^2+5g^3-g^4(8+4p+2p^2)+g^5(4+32p+23p^2)\\
g'^2p'=&g^4p^2-g^5(4p+6p^2)+g^6(4p+28p^2+17p^3+4p^4)-g^7(48p^2+158p^3+52p^4)\\
&+g^8(8p^2+236p^3+448p^4+88p^5+6p^6)-g^9(90p^3+696p^4+746p^5+120p^6)\\
&+g^{10}(338p^4+1607p^5+1376p^6+160p^7)
\eeal
\label{rgflow}
\ee
There are two stable fixed points $(g=\infty,p=0)$ and $(g=0,p=0)$ and four unstable fixed points $(g\approx1.25,p=0)$, $(g=\infty,p=1)$, $(g\approx0.4,p=1)$ and $(g=0,p=1)$. For fixed point $(g\approx1.25,p=0)$, it is irrelevant in $p$ but relevant in $g$. We plot the RG flow in Fig.\ref{D}(b), and find for any $0<p<1$, the RW belongs to the same universality class of SAW, implying only divergence at a critical $g^*(p)$ . Thus the FM phase in our problem is indeed well-defined except at a point. In our model (\secref{model}), the $g$ is even below the critical point $g*_\text{RW}=1/2$ of RW.

\section{Exponentially Localized Orbitals for a Chern Band}\label{ChernBandWannier}
Consider an isolated band characterized by the Bloch states $\ket{\psi_{\bs k}}$, which are normalized as\footnote{Here we use a normalization which is convenient for taking the infinite size limit. } $\inner{\psi_{\bs k}}{\psi_{\bs k'}}=N\delta_{\bs k,\bs k'}=(2\pi)^d\delta(\bs k-\bs k')$, where $N$ is the number of unit cells and $d$ is the spatial dimension. If $\ket{\psi_{\bs k}}$ is smooth in the Brillouin zone (BZ) torus (which implies periodicity), the conventional Wannier orbitals \cite{RevModPhys.84.1419} defined as 
\begin{align}
	\ket{\phi_{\bs R}}=\frac{1}{N}\sum_{\bs k}\rme^{-\rmi \bs k\cdot \bs R}\ket{\psi_{\bs k}}, 
\end{align}
with $\bs R$ being Bravais lattice vectors labeling the unit cells, are exponentially localized, orthonormal, and related to each other by translations. However, when $d=2$ and the band has a nonzero Chern number, such a smooth gauge is never possible, and one can not find $N$ number of orbitals satisfying the three properties simultaneously. 
If we can find an unnormalized smooth gauge $\ket{\psi'_{\bs k}}=\lambda_{\bs k}\ket{\psi_{\bs k}}$ and define $\ket{\phi'_{\bs R}}$ orbitals in the same way as above, then these orbitals will not be orthonormal, but are still exponentially localized and related to each other by translations. Although this sounds like a good deal, there is an important issue here: $\ket{\psi'_{\bs k}}$ has to vanish at some point in the BZ, otherwise we can normalize it and obtain a smooth normalized gauge. This implies that the orbitals $\ket{\phi'_{\bs R}}$ do not form a complete basis of the subspace. 
In the following, we will show that, in the simplest situation where the smooth gauge $\ket{\psi'_{\bs k}}$ vanishes at a single point, which is indeed true for the Kagome lattice model we considered in this work, one can find a set of \emph{complete and exponentially localized} orbitals for a Chern band with both orthonormality and the translation symmetry sacrificed. 

Suppose the smooth gauge $\ket{\psi'_{\bs k}}$ vanishes at $\bs k_c\in {\rm BZ}$ and is nonvanishing elsewhere. We can find another smooth gauge $\ket{\psi''_{\bs k}}$ which is nonzero at $\bs k_c$, and define another set of localized orbitals $\ket{\phi''_{\bs R}}$. Let $\bs R_c$ be some arbitrary unit cell, we now prove that 
\begin{align}
	\{\ket{\phi'_{\bs R}}|\bs R\neq \bs R_c\} \cup \{\ket{\phi''_{\bs R_c}}\}
\end{align}
is a complete basis. Note that these orbitals preserve the translation symmetry except for a single unit cell $\bs R_c$. Since $\ket{\phi''_{\bs R_c}}$ contains a momentum component which is absent in all $\ket{\phi'_{\bs R}}$, it suffices to prove that $\ket{\phi'_{\bs R}}$ with $\bs R\neq \bs R_c$ are linearly independent, which follows from the following lemma. 
\begin{lemma}
Let $M_{\bs k,\bs R}=\rme^{-\rmi\bs k\cdot\bs R}$ with $\bs k\neq \bs k_c$ and $\bs R\neq \bs R_c$ be the matrix of an incomplete Fourier transform. $M$ is invertible with the inverse explicitly given by
\begin{align}
   \left(M^{-1}\right)_{\bs R,\bs k}=\frac{1}{N}\rme^{\rmi \bs k\cdot \bs R}\left( 1-\rme^{-\rmi (\bs k-\bs k_c)\cdot(\bs R-\bs R_c)} \right). 
\end{align}
\end{lemma}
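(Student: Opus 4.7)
The plan is to verify the proposed formula directly by computing the product $M^{-1} M$ (or equivalently $M M^{-1}$) and showing it equals the identity on the restricted index set. Since both $M$ and the proposed $M^{-1}$ are square matrices of the same size $(N-1)\times(N-1)$, a one-sided inverse suffices.

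First I would write out
\begin{equation}
(M^{-1} M)_{\bs R,\bs R'}=\frac{1}{N}\sum_{\bs k\neq \bs k_c}e^{i\bs k\cdot(\bs R-\bs R')}\bigl(1-e^{-i(\bs k-\bs k_c)\cdot(\bs R-\bs R_c)}\bigr),
\end{equation}
split it into two terms, and apply the basic completeness identity on the full Brillouin zone, $\frac{1}{N}\sum_{\bs k\in\mathrm{BZ}}e^{i\bs k\cdot\bs x}=\delta_{\bs x,0}$, after compensating for the excluded point: $\frac{1}{N}\sum_{\bs k\neq \bs k_c}e^{i\bs k\cdot\bs x}=\delta_{\bs x,0}-\frac{1}{N}e^{i\bs k_c\cdot\bs x}$.

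Applying this to the first piece gives $\delta_{\bs R,\bs R'}-\frac{1}{N}e^{i\bs k_c\cdot(\bs R-\bs R')}$. For the second piece, I would pull out the $\bs R$-dependent factor $e^{i\bs k_c\cdot(\bs R-\bs R_c)}$ and reduce it to $\frac{1}{N}e^{i\bs k_c\cdot(\bs R-\bs R_c)}\sum_{\bs k\neq \bs k_c}e^{i\bs k\cdot(\bs R_c-\bs R')}$. Here is where the index restriction $\bs R'\neq \bs R_c$ becomes essential: it forces the Kronecker delta from completeness to vanish, leaving only the compensating exponential $-e^{i\bs k_c\cdot(\bs R_c-\bs R')}$. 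The second piece therefore contributes exactly $+\frac{1}{N}e^{i\bs k_c\cdot(\bs R-\bs R')}$, which cancels the stray term from the first piece and leaves $\delta_{\bs R,\bs R'}$.

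The computation is almost mechanical, so there is no real ``hard step,'' but the conceptual point worth highlighting — and which I would emphasize in the write-up — is the role played by the two excluded indices. The restriction $\bs k\neq \bs k_c$ produces a residual $\frac{1}{N}e^{i\bs k_c\cdot\bs x}$ in the momentum completeness relation, while the correction factor $(1-e^{-i(\bs k-\bs k_c)\cdot(\bs R-\bs R_c)})$ in $M^{-1}$ is precisely engineered so that, provided $\bs R'\neq \bs R_c$, the analogous residual in position space cancels the first one. If either exclusion were dropped the argument would fail, so the invertibility is sharp: it is exactly the square truncation $\bs k\neq \bs k_c$, $\bs R\neq \bs R_c$ that makes the discrete Fourier transform invertible with this explicit closed form.
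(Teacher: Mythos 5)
Your proof is correct. The paper states this lemma without any proof, so your direct verification of $M^{-1}M=\id$ is the natural argument; the computation checks out, and you correctly identified that the hypothesis $\bs R'\neq\bs R_c$ is exactly what kills the stray Kronecker delta in the second piece, making the residual exponentials cancel.

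One small point worth keeping in the write-up: you should note explicitly (as you did in passing) that $M$ is a square $(N-1)\times(N-1)$ matrix, since that is what lets a one-sided inverse imply a two-sided one; without that remark the claim ``$M$ is invertible'' would technically require checking $MM^{-1}=\id$ as well.
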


Next, we shall discuss whether the dual orbital formalism is applicable in the case of a Chern band. Denote the exponentially localized orbitals we constructed above by $\{\ket{\varphi_{\bs R}}\}$, the dual orbitals $\{\ket{\tilde\varphi_{\bs R}}\}$ are defined by $\inner{\tilde\varphi_{\bs R}}{\varphi_{\bs R'}}=\delta_{\bs R,\bs R'}$. It is clear that $\ket{\tilde\varphi_{\bs R_c}}$ should just be proportional to $\ket{\psi''_{\bs k_c}}$ and is delocalized. However, there is no need to care about a single orbital, and it is more important to check whether other $\ket{\tilde\varphi_{\bs R}}$ are exponentially localized or not. 
Let $\{\ket{\tilde\phi_{\bs R}'}|\bs R\neq \bs R_c\}$ be the dual basis to $\{\ket{\phi_{\bs R}'}|\bs R\neq \bs R_c\}$ in the subspace with $\bs k\neq \bs k_c$, then we have
\begin{align}
    \ket{\tilde\varphi_{\bs R}}=\ket{\tilde\phi_{\bs R}'}+\lambda_{\bs R}\ket{\psi''_{\bs k_c}}~~~(\bs R\neq\bs R_c). 
\end{align}
The values of $\lambda_{\bs R}$ are not important; adjusting these will not affect the property $\inner{\tilde\varphi_{\bs R}}{\varphi_{\bs R'}}=\delta_{\bs R,\bs R'}$ for $\bs R,\bs R'\neq \bs R_c$. Since $\ket{\phi'_{\bs R}}=\frac{1}{N}\sum_{\bs k\neq \bs k_c}\ket{\psi'_{\bs k}}M_{\bs k,\bs R}$, we have 
\begin{align}
    \ket{\tilde\phi_{\bs R}'}=\sum_{\bs k\neq \bs k_c}\ket{\tilde\psi'_{\bs k}}\left(M^{-1}\right)^*_{\bs R,\bs k}, 
\end{align}
where $\ket{\tilde\psi'_{\bs k}}\propto\ket{\psi'_{\bs k'}}$ and is normalized as $\inner{\tilde\psi'_{\bs k}}{\psi'_{\bs k}}=N$. Given some standard Bloch basis, e.g. labeled by sublattice indices, the Bloch states can be represented as $u$-vectors. Let $\ket{\psi'_{\bs k}}$ be represented by $u'_{\bs k}$ which takes the following form when $\delta\bs k:=\bs k-\bs k_c\approx 0$: 
\begin{align}
    u'_{\bs k,\alpha}=\bs v'_\alpha\cdot\delta\bs k+\mc{O}(\delta\bs k^2),  
\end{align}
where $\alpha$ labels the $u$-vector components. Then $\ket{\tilde\psi'_{\bs k}}$ are represented by 
\begin{align}
    \tilde u'_{\bs k,\alpha}=\frac{\bs v'_\alpha\cdot\delta\bs k}{\sum_\beta |\bs v'_\beta\cdot\delta\bs k|^2}. 
\end{align}
Using the expression for $M^{-1}$, we find that the Fourier transform of $\ket{\tilde\phi_{\bs R}'}$ is proportional to
\begin{align}
    \frac{[(\bs R-\bs R_c)\cdot\delta\bs k](\bs v'_\alpha\cdot\delta\bs k)}{\sum_\beta |\bs v'_\beta\cdot\delta\bs k|^2}~~~(\delta\bs k\neq 0)
\end{align}
in the $u$-vector representation, which is finite but not continuous as $\delta\bs k\rightarrow 0$. Therefore $\ket{\tilde\varphi_{\bs R}}$ with $\bs R\neq\bs R_c$ can not be made exponentially localized, implying that the dual orbital formalism is not a good approach for the orbitals we constructed. 

\section{\label{app:PSG}Symmetry Analysis of Chiral Spin Interactions}

The Kagome lattice model considered in this work has the following lattice symmetries: translation symmetries $T_1$ and $T_2$ along two different directions, six-fold rotation symmetry $C_6$, and an anti-unitary mirror symmetry $\scM$ (reflection along the vertical axis and followed by complex conjugation). They are illustrated in \figref{fig:PSG}(a).

Depending on the optimal localized orbital achieved by minimizing the energy $H_{()}$, part of the lattice symmetry can be spontaneously broken. For example, the $(b,p_-)$ orbital in \figref{fig:orbitals} breaks the six-fold rotation symmetry $C_6$ to three-fold $C_3$, because there are two Wyckoff positions $b$ in each unit cell and choosing one of them to occupy will necessary break the lattice symmetry. On the other hand the $(a,d_+)$ orbital respects the $C_6$ symmetry, as it transforms under the $C_6$ symmetry as an irreducible representation. Under symmetry action, the localized orbital $\phi_I$ transforms as
\eqs{T_{1,2}:&\phi_I\to\phi_{T_{1,2}(I)},\\
 C_3:&\phi_I\to e^{-\ii 2\pi/3}\phi_{C_{3}(I)},\\
  \scM:&\phi_I\to\phi^*_{\scM(I)}.}
If the orbital further respect the $C_6$ symmetry, we also have \eq{C_6:\phi_I\to e^{\ii 2\pi/3}\phi_{C_{6}(I)}.}
Here $G(I)$ denotes the new orbital index that $I$ transforms to under the symmetry group element $G$. To preserve the translation symmetry, the orbitals follows the arrangement of the unit cells, and form a triangular lattice. The orbital (or unit cell) index $I$ can be considered as the site index on the triangular lattice. In the Mott state, the electron spin degrees of freedom will ret on these sites.

\begin{figure}[htbp]
\begin{center}
\includegraphics[width=0.6\columnwidth]{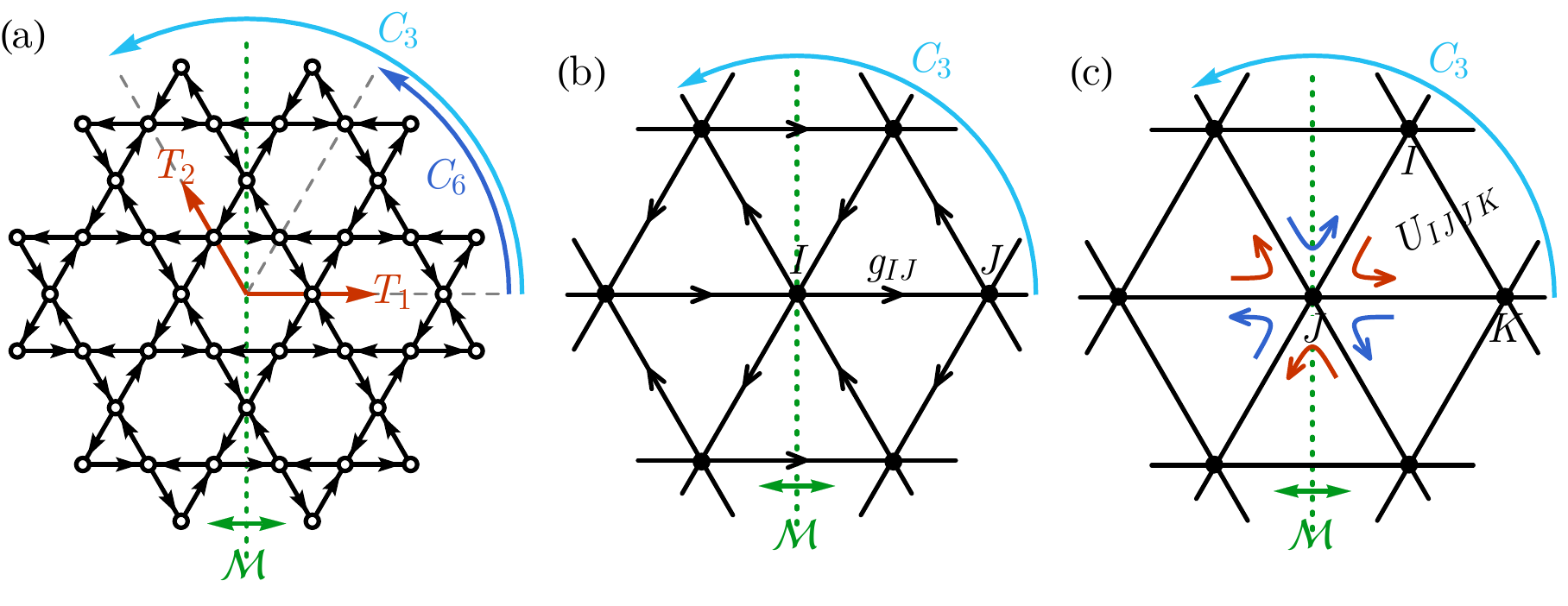}
\caption{(a) Symmetries of the lattice model. Consider $(b,p_-)$ orbitals that breaks the $C_6$ symmetry, the remaining symmetries fix the pattern of (b) $g_{IJ}$ (same as $t_{IJ}$) and (c) $U_{IJJK}$.}
\label{fig:PSG}
\end{center}
\end{figure}

Given the symmetry transformations of the orbitals $\phi_I$, we can infer the symmetry transformations of the tensors $g_{IJ}$, $t_{IJ}$ and $U_{IJKL}$, which are defined via
\eqs{g_{IJ}&=\sum_{i}\phi^*_I(i)\phi_J(i),\\
t_{IJ}&=\sum_{ij}\phi^*_I(i)t_{ij}\phi_J(j),\\
U_{IJKL}&=\sum_{ij}\phi^*_I(i)\phi_J(i)U_{ij}\phi^*_K(j)\phi_L(j).\\}
For unitary symmetries $G=T_1,T_2,C_3,C_6$, they transform as
\eq{G:g_{IJ}\to g_{G(I)G(J)},\quad t_{IJ}\to t_{G(I)G(J)},\quad U_{IJKL}\to U_{G(I)G(J)G(K)G(L)},}
such that tensor elements related by the symmetry should simply be equal to each other. Only anti-unitary symmetry $\scM$ relates them by additional complex conjugate
\eq{\scM:g_{IJ}\to g^*_{\scM(I)\scM(J)},\quad t_{IJ}\to t^*_{\scM(I)\scM(J)},\quad U_{IJKL}\to U^*_{\scM(I)\scM(J)\scM(K)\scM(L)},}
Note that $t_{IJ}$ has identical symmetry property as $g_{IJ}$. We can use symmetry transformations to bring one tensor element at a particular link or plaquette to elsewhere through out the triangular lattice. For example, between the nearest neighboring sites $I, J$, required by the symmetries $T_1,T_2,C_3,\scM$,
\eq{g_{IJ}=g^*_{JI}=g,\quad t_{IJ}=t^*_{JI}=t,}
if the direction $I\to J$ follows the link directions as depicted in \figref{fig:PSG}(b). If we further require $C_6$ symmetry, parameters $g$ and $t$ will be restricted to real numbers. Among three sites $I,J,K$ in an upper/lower-triangle following the counterclockwise order, the symmetries $T_1,T_2,C_3,\scM$ requires all $U_{IJJK}$ terms to be related as
\eq{U_{IJJK}=U^*_{KJJI}=U_{\vartriangle/\triangledown},}
see \figref{fig:PSG}(c). In the absence of the $C_6$ symmetry, $U_{\vartriangle}$ and $U_{\triangledown}$ are not related in general. If we impose the $C_6$ symmetry, we have $U_{\vartriangle}=U_{\triangledown}$, but they are still in general complex.

Based on \eqnref{eq:3spin}, the spin chirality term mainly originates from two channels
\begin{equation}
K_{IJK} = 4 \Im(t_{IJ}g_{JK}g_{KI}+U_{IJJK}g_{KI}+\text{perm.}).
\end{equation}
In terms of the parameters $t,g,U_{\vartriangle/\triangledown}$, we found
\eqs{K_{\vartriangle}&=4\Im(t g^2+U_{\vartriangle} g)=4\Im(t g^2)+4\Im(U_{\vartriangle} g),\\ K_{\triangledown}&=4\Im((t g^2)^*+U_{\triangledown} g^*)=-4\Im(t g^2)+4\Im(U_{\triangledown} g^*).}
For $(b,p_-)$ orbitals that does not have the $C_6$ symmetry, $K_\vartriangle$ and $K_\triangledown$ are not related in general. If $\Im(tg^2)$ term dominates, the spin chirality term will be approximately staggered $K_\vartriangle\simeq - K_\triangledown$. For $(a,d_+)$ orbitals that respects the $C_6$ symmetry, the spin chirality term is uniform $K_\vartriangle=K_\triangledown$.

\section{\label{app:exactFM} Rigorous Statements About Ferromagnetism}
In the case of on-site Hubbard interaction, we can make some rigorous statements about ferromagnetism in the Kagome lattice model discussed in the main text. 

We first introduce an important theorem and a corollary due to Andreas Mielke about ferromagnetism in general flat-band Hubbard models\cite{ISI:A1993KU26700019}, before focusing on the specific model. 
Consider the Hubbard model of spin-1/2 electrons, defined on a finite lattice $\Lambda$ by the Hamiltonian: 
\be
H=\sum_{x,y,\sigma}t_{xy}c^{\dagger}_{x,\sigma} c_{y,\sigma}+U\sum_x n_{x,+}n_{x,-},  
\ee
where $T=(t_{xy})_{x,y\in\Lambda}$ is a Hermitian matrix and $U>0$. We assume without loss of generality that the matrix $T$ is non-negative (positive semi-definite) and has a lowest eigenvalue $0$ with multiplicity $N_d$. Let $\{ \phi_i(x),~i=1,\cdots,N_d\}$ be an orthonormal basis of the kernel of $T$. We define the corresponding fermion mode creation operators $f^\dagger_{i,\sigma}=\sum_x \phi_i(x)c^\dagger_{x,\sigma}$. Now suppose the number of electrons $N_e$ in the system is equal to $N_d$, we know the following spin-polarized state
\be
\ket{\psi}=\prod_i f^\dagger_{i,+}\ket{\mathrm{vac}}
\ee
is an exact ground state of the Hamiltonian. This does not immediately imply that the system exhibits ferromagnetism; we at least need to check whether this ground state is unique up to the spin rotation degeneracy. We introduce the two-point equal-time correlation function of the state $\ket\psi$, 
\be
C_{x,y}:=\bra\psi c^\dagger_{x,+} c_{y,+}\ket\psi. 
\ee
Let us also define the following terminology for the simplicity of discussions. 
\begin{definition}	We say a correlator matrix $(C_{x,y})$ is \underline{connected}, if one cannot use simultaneous row and column permutations to transform it into a block-diagonal form with more than one block being nonzero.
\end{definition}
\noindent In other words, $(C_{x,y})$ is connected if it is irreducible after removing vanishing rows and columns. Now we can state the main theorem.\cite{ISI:A1993KU26700019}
\begin{theorem}[Mielke]\label{MielkeThm}
	The state $\ket{\psi}$ is the unique ground state of $H$ with $N_e=N_d$ electrons up to the spin degeneracy if and only if $(C_{x,y})$ is connected. 
\end{theorem}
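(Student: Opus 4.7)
The plan is to characterize the zero-energy subspace $\mathcal{V}$ of $H$ explicitly and then count its dimension. Since $T\succeq 0$ and $U>0$, any $\ket\Psi$ with $\langle H\rangle=0$ must have every occupied single-particle mode in $\ker T$ and no doubly-occupied site; $\ket\psi$ satisfies both, and because $[\mathbf{S},H]=0$ the space $\mathcal{V}$ is $\mathrm{SU}(2)$-invariant and decomposes into total-spin multiplets. The first step I would take is to observe that the $S^z=N_d/2$ sector of $\mathcal{V}$ is one-dimensional: Pauli exclusion and the $N_d$-dimensionality of $\ker T$ force all kernel up-modes to be filled, so $\ket\psi$ is the unique such state. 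Hence $\ket\psi$ is the highest-weight vector of a spin-$N_d/2$ multiplet of dimension $N_d+1$, and the theorem reduces to showing $\dim\mathcal{V}=N_d+1$ iff $C$ is connected.

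The core step is the analysis of the $S^z=N_d/2-1$ sector. I would parametrize a general state there as $\ket\Phi=\sum_{ij}M_{ij}\,f^\dagger_{i,-}f_{j,+}\ket\psi$ and impose $c_{x,+}c_{x,-}\ket\Phi=0$ at every site $x$. Using the anticommutator $\{c_{x,-},f^\dagger_{i,-}\}=\phi_i(x)$ together with $c_{x,+}\ket\psi=\sum_i\phi_i(x)f_{i,+}\ket\psi$, a short calculation yields
\begin{equation}
c_{x,+}c_{x,-}\ket\Phi=\sum_{i,j,k}M_{ij}\,\phi_i(x)\,\phi_k(x)\,f_{k,+}f_{j,+}\ket\psi,
\end{equation}
and antisymmetrizing in $(j,k)$ forces the pointwise rank-one condition $\sum_i M_{ij}\phi_i(x)=\lambda(x)\phi_j(x)$ for some scalar $\lambda:\Lambda\to\dsC$. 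Assembling $\phi_i(x)$ into an $|\Lambda|\times N_d$ matrix $\Phi$ and letting $D=\mathrm{diag}(\lambda)$, this reads $\Phi M=D\Phi$, so $M=\Phi^\dagger D\Phi$ by the orthonormality relation $\Phi^\dagger\Phi=I$; the consistency requirement $(I-\Phi\Phi^\dagger)D\Phi=0$, after separating $D$ into Hermitian and anti-Hermitian parts, collapses to the commutator identity $[D,\Phi\Phi^\dagger]=0$. Since $\Phi\Phi^\dagger$ shares its connectivity graph with $C$, this holds iff $\lambda$ is constant on every $C$-connected component of $\mathrm{supp}(C)$. When $C$ is connected, $\lambda$ is a single constant and $\ket\Phi\propto S^-\ket\psi$, contributing no new multiplet; when $C$ is disconnected, distinct constants on different components produce a ground state linearly independent of $S^-\ket\psi$, hence outside the spin multiplet of $\ket\psi$. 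This immediately gives the $\Rightarrow$ direction of the theorem.

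The main obstacle is the converse: connectedness of $C$ implies $m_{N_d/2-1}=0$ via the argument above, yet multiplets of spin $S<N_d/2-1$ contribute only to $S^z\leq N_d/2-2$ sectors and are invisible in the one-spin-flip analysis. I would attack these higher sectors by parametrizing the $S^z=N_d/2-k$ subspace by a pair of antisymmetric $k$-tensors and deriving, site by site, a rank-$k$ analogue of the factorization $\sum_i M_{ij}\phi_i(x)=\lambda(x)\phi_j(x)$; I expect the combined constraints to force the tensor through a single scalar function and thereby collapse each such sector to its unique descendant $(S^-)^k\ket\psi$. A cleaner alternative I would try in parallel is a direct highest-weight argument: any $\ket{\mathrm{HW}}\in\mathcal{V}$ with $S^+\ket{\mathrm{HW}}=0$ and $S^z<N_d/2$ should be forced to vanish by combining the no-double-occupancy structure with a maximum-principle-style propagation argument on the correlator graph, exploiting $C$-connectivity. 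The hard part is the combinatorial complexity of multi-site Pauli exclusion for $k\geq 2$: at $k=1$ the problem reduces cleanly to a single $N_d\times N_d$ commutator identity, while at higher $k$ the constraints are multilinear in $\Phi$ and the connectivity argument must be made rigorous at that level.
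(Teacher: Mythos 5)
The paper does not supply its own proof of this theorem: it is quoted as Mielke's result with a citation to Ref.~\cite{ISI:A1993KU26700019}, the only added commentary being that Mielke's argument, written for real symmetric $T$, carries over to complex Hermitian $T$. So there is no in-paper proof to compare against, and I assess your proposal on its own merits. Your characterization of the zero-energy subspace (all occupied modes in $\ker T$, plus $c_{x,+}c_{x,-}\ket\Psi=0$ at every site) is correct, and your one-spin-flip analysis is essentially sound: antisymmetrizing the local constraint to get the pointwise rank-one condition $\sum_i M_{ij}\phi_i(x)=\lambda(x)\phi_j(x)$, rewriting it as $\Phi M=D\Phi$ with $M=\Phi^\dagger D\Phi$, and isolating the consistency condition $(\id-P)DP=0$ with $P=\Phi\Phi^\dagger$ is the right structure. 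One caveat on the passage from $(\id-P)DP=0$ to $[D,P]=0$: splitting $D$ into Hermitian and anti-Hermitian parts does not separate the single complex equation into two, because $P$ is itself complex. What actually does the job is that $D$ is diagonal, hence normal, and on a finite-dimensional space every invariant subspace of a normal operator is reducing; thus $\mathrm{Range}(P)$ being $D$-invariant already forces $[D,P]=0$, after which $(\lambda(x)-\lambda(y))P_{xy}=0$ and your connectivity argument closes. This cleanly handles the \emph{only if} direction (disconnected $C$ produces a ground state outside the spin multiplet of $\ket\psi$) and the $S^z=N_d/2-1$ piece of the \emph{if} direction.

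The genuine gap, which you flag yourself, is the remainder of the \emph{if} direction. A multiplet with total spin $S\le N_d/2-2$ is invisible in the $S^z=N_d/2-1$ sector, so to prove uniqueness you must show that the kernel-plus-no-double-occupancy constraints collapse every $S^z=N_d/2-k$ sector onto the single descendant $(S^-)^k\ket\psi$, for all $k\ge 2$, not just $k=1$. Your two proposed routes---a rank-$k$ antisymmetric-tensor generalization of the pointwise factorization, or a highest-weight-vector argument combined with maximum-principle propagation along the correlator graph---are plausible outlines, but neither is carried out, and the multilinear combinatorics for $k\ge 2$ is precisely where the difficulty lies. As written, the proposal establishes that connectedness of $(C_{x,y})$ is necessary for uniqueness and verifies one nontrivial consequence of sufficiency, but it does not prove the theorem.
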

\noindent We would like to remark that in the original paper, $T$ is assumed to be real symmetric, but the proof of the theorem actually applies to general complex Hermitian hopping matrices.

Now we consider a more complicated situation where zero is not the lowest eigenvalue of $T$. Let $N_<$ be the number of eigenvalues of $T$ below zero. Let $\ket{\psi}$ and $C_{x,y}$ be defined in the same way as before, i.e. only the zero energy states are occupied. Mielke\cite{ISI:A1993KU26700019} also derived the following corollary using degenerate perturbation theory. 
\begin{corollary}[Mielke]\label{MielkeCor}
	Assuming translation symmetry, if $(C_{x,y})$ is connected, then for a sufficiently small $U$ (for a fixed lattice $\Lambda$), the ground state with $N_e=2N_<+N_d$ electrons is spin-polarized with total spin quantum number $S=N_d/2$, and it is unique up to the spin degeneracy. 
\end{corollary}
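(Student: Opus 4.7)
The plan is to reduce the corollary to Theorem \ref{MielkeThm} by degenerate perturbation theory in $U$ around $U=0$. At the unperturbed point the ground manifold $\mathcal{P}$ at filling $N_e=2N_<+N_d$ consists of Slater determinants in which every negative-energy eigenstate of $T$ is doubly occupied, every positive-energy state is empty, and the $N_d$-dimensional zero-energy flat band is occupied by $N_d$ electrons distributed arbitrarily among the $2N_d$ available spin-orbital modes. The dimension of $\mathcal{P}$ is $\binom{2N_d}{N_d}$, matching that of the degenerate subspace in Theorem \ref{MielkeThm}, and the proof hinges on identifying the first-order effective dynamics on $\mathcal{P}$ with the flat-band Hubbard model.

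The first step is to compute the first-order effective Hamiltonian $H^{(1)}=\mathcal{P}H_U\mathcal{P}$. Decomposing $c^\dagger_{x,\sigma}=c^{<\dagger}_{x,\sigma}+c^{0\dagger}_{x,\sigma}+c^{>\dagger}_{x,\sigma}$ into the spectral components of $T$, a direct calculation gives
\begin{equation}
    \mathcal{P}\,n_{x,\sigma}\,\mathcal{P}=\rho_<(x)\,\mathcal{P}+n^0_{x,\sigma},\qquad \rho_<(x)=\sum_{\epsilon_n<0}|\phi_n(x)|^2,
\end{equation}
with $n^0_{x,\sigma}$ the flat-band density. The interband cross contribution $\mathcal{P}n_{x,+}(1-\mathcal{P})n_{x,-}\mathcal{P}$ vanishes because any spin-$\sigma$ excitation out of $\mathcal{P}$ shifts the spin-$\sigma$ band occupations in a way that the opposite-spin density operator cannot reverse. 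Consequently
\begin{equation}
    H^{(1)}=U\sum_x\Bigl[\rho_<(x)^2+\rho_<(x)\bigl(n^0_{x,+}+n^0_{x,-}\bigr)+n^0_{x,+}n^0_{x,-}\Bigr].
\end{equation}
The first term is a c-number. The Hartree-like middle piece is a spin-independent one-body operator on the flat band which translation symmetry makes diagonal in the flat-band momentum basis; restricted to the sector of $\mathcal{P}$ with a single electron per flat-band Bloch mode, it reduces to a constant. What remains is the pure flat-band Hubbard interaction $U\sum_x n^0_{x,+}n^0_{x,-}$, i.e.~exactly the Hamiltonian of Theorem \ref{MielkeThm}.

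Applying Theorem \ref{MielkeThm} to this effective interaction, the connectedness of $(C_{x,y})$ provides the precise hypothesis required, yielding a unique spin-$S=N_d/2$ ferromagnetic ground state of $H^{(1)}$ gapped above at order $U$. To complete the argument, I would invoke standard bounds on the degenerate perturbation series: the $k$-th-order correction is suppressed by $(U/\Delta_{\rm band})^{k-1}$ relative to the order-$U$ leading term, where $\Delta_{\rm band}$ is the single-particle gap separating the flat band from the negative and positive bands. Since the ferromagnetic gap is of order $U$ while the leading correction is $O(U^2/\Delta_{\rm band})$, for $U$ sufficiently small compared to $\Delta_{\rm band}$ (with $\Lambda$ fixed) the gap remains open and the unique ferromagnetic ground state persists.

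The principal obstacle I anticipate is controlling the Hartree piece $U\sum_x\rho_<(x)(n^0_{x,+}+n^0_{x,-})$. Although translation symmetry makes it diagonal in the flat-band momentum basis, $\rho_<(x)$ is only lattice-periodic rather than constant, so configurations that double-occupy some flat-band Bloch modes and leave others empty can gain Hartree energy at the same scale $U$ as the interaction penalty they incur. A rigorous reduction to Theorem \ref{MielkeThm} therefore requires arguing, either by a variational comparison against the ferromagnet or by showing that the doubly-occupied sector is gapped above the singly-occupied sector on the same scale as Mielke's ferromagnetic gap, that no such configuration can beat the flat-band ferromagnet. This is the delicate technical point on which the corollary really rests.
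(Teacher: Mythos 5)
The paper does not actually prove this statement: it is cited from Mielke's original work with the one-line remark that it follows ``using degenerate perturbation theory,'' which is also the method you use, so there is no in-paper proof to compare against step by step. Your setup is right: the choice of degenerate manifold $\mathcal{P}$ at $N_e=2N_<+N_d$, the vanishing of the cross piece $\mathcal{P}\,n_{x,+}(1-\mathcal{P})n_{x,-}\,\mathcal{P}$ (because $n_{x,-}$ cannot undo a spin-$+$ interband excitation), and the resulting decomposition of $\mathcal{P}H_U\mathcal{P}$ into a constant, a Hartree one-body piece, and the flat-band Hubbard term are all correct.

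The genuine gap is exactly the one you flag in your last paragraph, but it is worse than a ``technical point'': the sentence ``restricted to the sector of $\mathcal{P}$ with a single electron per flat-band Bloch mode, it reduces to a constant'' does not finish the argument, because $\mathcal{P}$ is the full $\binom{2N_d}{N_d}$-dimensional manifold and contains states with doubly occupied and empty Bloch modes. Translation symmetry only diagonalizes the Hartree piece, $U\sum_{\bs k}M_{\bs k}\bigl(n^0_{\bs k,+}+n^0_{\bs k,-}\bigr)$ with $M_{\bs k}=\sum_\alpha\rho_<(\alpha)|u_{\bs k\alpha}|^2$, and a nonconstant $M_{\bs k}$ splits the degeneracy at the \emph{same} order $U$ as the Hubbard interaction, so taking $U$ small does not suppress it relative to the flat-band Hubbard term. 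You therefore have not reduced the problem to Theorem~\ref{MielkeThm}: one must still show that the Hartree gain from piling electrons into low-$M_{\bs k}$ modes never beats the Hubbard penalty this incurs, and that is a model-dependent estimate you have not supplied. In the paper's actual application this issue silently evaporates because the Kagome lattice is site-transitive under its point group, so $\rho_<(\alpha)$ is the same on every sublattice site, $M_{\bs k}=\rho_<\sum_\alpha|u_{\bs k\alpha}|^2=\rho_<$ is $\bs k$-independent, and the Hartree piece really is a constant on all of $\mathcal{P}$. So either the corollary should carry an extra hypothesis guaranteeing a flat $M_{\bs k}$ (e.g.\ site-transitivity, or constancy of $\rho_<$), or one has to reproduce whatever estimate Mielke used to control the Hartree piece; your proposal, as written, does neither.
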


Equipped with the above general results, we can now study the Kagome lattice model which has the hopping Hamiltonian
\be
	H_0=\frac{1}{2}\sum_{i\leftarrow j}(i c^\dagger_i c_j+h.c.)
\ee
and is illustrated in Fig.~\ref{fig:Kagome}a. We will always consider periodic boundary condition, i.e. the system lives on a torus, so that there is a translation symmetry. Let $N$ be the total number of unit cells. We turn on a repulsive on-site Hubbard interaction: 
\be
	H=H_{ 0}+U\sum_in_{i,+}n_{i,-}. 
	\label{PoHubbardModel}
\ee
Using Corollary~\ref{MielkeCor}, we are able to prove the following result for the middle flat band. 
\begin{theorem}
	Fixing a periodic lattice with $N>3$, when $U$ is sufficiently small, the half-filling ground state of $H$ is spin-polarized with $S=N/2$, and it is unique up to the spin degeneracy. 
\end{theorem}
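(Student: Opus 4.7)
The plan is to deduce the theorem directly from Mielke's Corollary stated earlier in the appendix. The half-filling condition $N_e = 3N$ must be reorganized as $2N_{<} + N_d$ with $N_{<} = N_d = N$. First I would shift the hopping matrix $T$ by a constant so that the middle flat band lies exactly at zero energy; this only adds a constant to $H$ and is harmless. After the shift, the bottom Chern band provides $N$ strictly negative eigenvalues of $T$, the flat band provides $N$ zero modes, and the top Chern band is strictly positive. The spin-polarized candidate state $|\psi\rangle$ then doubly fills the bottom band and singly fills the flat band with a single spin species, giving $N_e = 3N$ and total spin $S = N/2$, matching what the theorem predicts.

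With translation symmetry already built in, Mielke's Corollary reduces the theorem to verifying that the flat-band single-particle correlator $C_{xy} = \langle\psi | c^\dagger_{x,+} c_{y,+}|\psi\rangle$ (evaluated on the ferromagnetic flat-band state, as in Theorem D.1) is \emph{connected} in the sense of Definition D.1. This matrix is, up to the inert doubly-filled bottom-band contribution, the real-space projector $P$ onto the flat band, $P_{xy} = \sum_{n\in\mathrm{flat}} \phi_n(x)\phi_n^*(y)$. I would then split the connectedness claim into: (i) no row or column of $P$ vanishes, i.e.\ $P_{xx} > 0$ for every site $x$; and (ii) the graph with edges $\{(x,y): P_{xy}\neq 0\}$ is connected.

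Step (i) is essentially automatic from translation symmetry: $P_{xx}$ depends only on the sublattice, and $\operatorname{tr} P = N$ forces each sublattice contribution to equal $1/3$, which is strictly positive. For step (ii) I would use the familiar compact-localized description of the Kagome flat band: up to a single global constraint on the torus, $P$ is spanned by hexagon-localized destructive-interference states, each supported on the six sites of a single hexagon. Neighboring hexagons share two sites, and the hexagon-adjacency graph of the Kagome lattice is connected whenever the torus is large enough to resolve it. Concretely I would compute $P_{xy}$ within one hexagon by Fourier-transforming the explicit $3\times 3$ Bloch Hamiltonian, extract the flat-band Bloch eigenvector in closed form as a simple ratio of the three sublattice phase factors, and check that the resulting intra-hexagon entries of $P$ are indeed nonzero.

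The main obstacle is precisely step (ii): a priori, destructive interference between different flat-band eigenstates could make an intra-hexagon matrix element of $P$ vanish by accident, in which case the connectedness graph might fragment. The hypothesis $N > 3$ enters here: for very small tori, the finite $\bs k$-grid can align with zeros of the Bloch eigenvector and spoil the connectedness that is generic at larger $N$. Once the explicit momentum-space expression for $P_{xy}$ is in hand, I would verify nonvanishing of a chosen representative matrix element by showing that the momentum-averaged sum cannot cancel for $N>3$ (for instance by bounding its magnitude below using positivity of $|u_{\bs k}|^2$ together with the absence of a consistent phase alignment that would force a cancellation). With connectedness in hand, Corollary D.2 immediately yields the unique (up to global spin rotation) ferromagnetic ground state with $S = N/2$ at sufficiently small $U$, which is the claim.
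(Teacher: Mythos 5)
Your proposal follows the same overall route as the paper: invoke Mielke's Corollary and reduce the claim to connectedness of the flat-band correlator matrix. Where the two diverge is in how connectedness is established. The paper's argument is leaner: since the Kagome lattice is connected by nearest-neighbor bonds, any nontrivial block decomposition of $(C_{x,y})$ forces some nearest-neighbor pair into different blocks, so some nearest-neighbor correlator must vanish; but translation and rotation symmetry force all nearest-neighbor correlators to equal one real number $C_{nn}$, so the whole question collapses to checking $C_{nn}\neq 0$ for $N>3$. Your plan instead brings in the compact-localized hexagon states and the hexagon-adjacency graph and then worries about accidental cancellation in intra-hexagon entries of the projector $P$ — that is the same final check with extra scaffolding, and you have not used the symmetry reduction to a single scalar that makes the verification tractable. (Also, the compact localized states on the torus are neither orthogonal nor complete, so ``$P$ is spanned by hexagon-localized states'' is not a shortcut to the entries of $P$; you would still be doing the momentum-space computation you describe at the end.) What each approach buys: the symmetry argument immediately reduces step (ii) to one number $C_{nn}$, whereas your description is more explicit about what goes wrong at very small $N$ (lattice-commensurate zeros of the Bloch eigenvector), which is the correct intuition for the $N>3$ hypothesis.

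Two smaller points. First, in Mielke's Corollary as quoted in the appendix, $\ket\psi$ and $C_{x,y}$ are defined ``in the same way as before, i.e.\ only the zero energy states are occupied'' — so $C_{x,y}$ is literally the flat-band projector and there is no doubly-filled-bottom-band contribution to strip off; your electron count $N_e=2N_<+N_d=3N$ and $S=N_d/2=N/2$ for the $3N$-electron ground state is correct, but the correlator in the hypothesis is not evaluated on that state. Second, both you and the paper leave the same residual step unfinished: an actual proof that $C_{nn}\neq 0$ for every periodic lattice with $N>3$ (the paper declares it ``not hard'' and reports only the thermodynamic-limit value $\approx 0.11$). Your suggestion to extract the flat-band Bloch eigenvector in closed form and bound the finite momentum sum away from zero is a reasonable way to close that gap, and it would also recover the paper's scalar $C_{nn}$ once you specialize to the nearest-neighbor entry.
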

\begin{proof}
Due to Corollary~\ref{MielkeCor}, it suffices to check the correlator matrix $(C_{\b x,\b y})$ is connected. If it is not, we can find a nonempty proper subset $A$ of the whole lattice $\Lambda$, such that the correlation function between any site in $A$ and any site in the complement $A^c$ is zero. This implies the existence of a pair of nearest-neighbor sites whose correlator vanishes. For the middle flat band of the Kagome lattice model, we find that all nearest-neighbor correlators take the same real value $C_{nn}$ for any fixed periodic lattice. It is not hard to prove that $C_{nn}$ is nonzero whenever $N>3$ and this proves the theorem. The details are not important so we skip them here. 
In particular, if we fix the modulus parameter (shape) of the real space torus and take the infinite size limit, $C_{nn}$ converges to a momentum integral which we numerically evaluated to be around $0.11$. This already proves a slightly weaker statement where, instead of considering all possible lattices with $N>3$, we fix a modulus parameter and take $N$ large enough. 
\end{proof}
\noindent We are not able to directly say anything about the lowest band as it is not exactly flat. However, if we apply a band flattening, then the following result easily follows from Theorem~\ref{MielkeThm}. 
\begin{theorem}
Suppose we flatten the lowest band in the Kagome lattice model and turn on an on-site Hubbard interaction. Given a modulus parameter (shape) of the real space torus, when $N$ is sufficiently large, the ground state of the system with $N$ number of electrons is fully spin-polarized and is unique up to the spin degeneracy for any $U>0$. 
\end{theorem}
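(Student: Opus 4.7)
The plan is to deduce the statement directly from Mielke's Theorem (Theorem~\ref{MielkeThm}) applied to the band-flattened Hamiltonian. After flattening, we shift the spectrum so that the lowest band sits at energy zero and the upper bands have strictly positive energies; call the resulting single-particle matrix $T$. Then $T\succeq 0$ with kernel equal to the lowest band, which has dimension $N_d=N$. Choosing $N_e=N_d=N$ electrons (one electron per kernel mode), Theorem~\ref{MielkeThm} tells us that the fully spin-polarized state $\ket{\psi}=\prod_i f^\dagger_{i,+}\ket{\vac}$ is an exact ground state, and that it is the unique ground state (up to the global spin degeneracy) if and only if the one-spin-species correlator $C_{\b x,\b y}=\bra{\psi}c^\dagger_{\b x,+}c_{\b y,+}\ket{\psi}$ is connected in the sense of the Definition in Appendix~\ref{app:exactFM}. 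So the entire proof reduces to showing that $(C_{\b x,\b y})$ is connected.

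For the reduction, I would first observe that $C_{\b x,\b y}$ equals the kernel projector $P$ of $T$ on the lowest band evaluated in the site basis, and so has the full translation and point-group symmetry of the Kagome hopping model. Second, since the Kagome lattice is connected as a graph, any would-be partition $\Lambda=A\sqcup A^c$ that blocks the matrix must cut at least one nearest-neighbor bond; so if a single nearest-neighbor correlator is nonzero, no such partition exists and $(C_{\b x,\b y})$ is automatically connected. The task therefore reduces further to showing $|C_{\b x,\b y}|\neq 0$ for one (and hence every) nearest-neighbor pair.

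To evaluate this correlator, I would write $C_{\b x,\b y}$ as a Brillouin-zone average of the lowest-band projector:
\begin{equation}
C_{\b x,\b y}=\frac{1}{N}\sum_{\b k\in \text{BZ}_N}u^{}_{\b k,\alpha(\b x)}\,u^{*}_{\b k,\alpha(\b y)}\,\rme^{\rmi\b k\cdot(\b R(\b x)-\b R(\b y))},
\end{equation}
where $u_{\b k}$ is a normalized lowest-band Bloch eigenvector and $\alpha,\b R$ decompose a site into a sublattice index and a unit-cell vector. Fixing the torus modulus and sending $N\to\infty$, the sum converges to a Riemann integral over the BZ. It therefore suffices to check that this limiting integral is nonzero for a single nearest-neighbor pair, which, as noted in the excerpt, is numerically $\approx 0.11$; this value can also be anchored rigorously, e.g.\ by bounding the imaginary part or by using that the Bloch eigenvectors are known analytically for the three-band imaginary-hopping Kagome model. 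Finally, for any fixed torus shape, uniform convergence of Riemann sums of this smooth integrand guarantees $|C_{\b x,\b y}^{(N)}-C_{\b x,\b y}^{(\infty)}|\to 0$, so $C_{\b x,\b y}^{(N)}\neq 0$ for all sufficiently large $N$.

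The only genuine obstacle is the last step: establishing that the limiting nearest-neighbor correlator is nonzero. Numerical evaluation is straightforward, but a clean closed-form argument requires picking a smooth lowest-band gauge on the BZ (which exists here because the middle flat band is a trivial band and the flattened lowest band can be given a global gauge up to the single Chern obstruction that only affects one orbital per $N$) and showing that the relevant overlap integrand does not integrate to zero by symmetry. A convenient shortcut is to note that $C_{nn}$ is manifestly real by the anti-unitary mirror $\scM$, and to locate its sign by inspecting either the $\b k=0$ or the $\Gamma$-point contribution, which dominates for generic gauges. Once $C_{nn}\neq 0$ in the thermodynamic limit is secured, the rest of the argument is immediate from Mielke's theorem and the graph-theoretic reduction above.
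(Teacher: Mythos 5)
Your overall strategy matches the paper's: both reduce the statement to Mielke's flat-band theorem (Theorem~\ref{MielkeThm}) applied to the band-flattened, energy-shifted single-particle matrix, then reduce connectivity of the correlator matrix to non-vanishing of a single nearest-neighbor correlator using the graph connectivity of the Kagome lattice, and finally evaluate that correlator as a Brillouin-zone integral in the $N\to\infty$ limit at fixed torus modulus. The reduction steps are sound.

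However, there are two errors in the final step, and the second one would make your ``convenient shortcut'' fail. First, you quote the numerical value $\approx 0.11$, but that figure is the paper's value for the \emph{middle flat band} (the first theorem in Appendix~\ref{app:exactFM}); for the \emph{lowest band} the paper obtains $C_{nn}'\approx -0.057 + 0.20\ii$. Second, and more substantively, you assert that ``$C_{nn}$ is manifestly real by the anti-unitary mirror $\scM$.'' This is not true for the lowest band, as the complex numerical value shows. The constraint coming from $\scM$-invariance of the lowest-band projector $P$ is $P_{M(x),M(y)} = P^*_{x,y}$, which relates a bond to its mirror image; only for a bond with both endpoints fixed by the mirror would this force reality, and a generic nearest-neighbor bond is not of that form. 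The lowest band carries Chern number $C=-1$, so it has no real (or even globally smooth) gauge, and one should expect a genuinely complex $C_{nn}'$. Consequently your proposed sign-from-$\Gamma$-point argument does not apply; one must instead show that the \emph{modulus} $|C_{nn}'|$ is nonzero, which is what the paper does numerically. A related point you omit is that because $C_{nn}'$ is complex, the directed nature of the bond matters: the translation and $C_3$ symmetries fix the value on all bonds oriented along the arrows, but the reversed-orientation value is $C_{nn}'^*$, not $C_{nn}'$ --- this is why the paper is careful to say ``with $\bs y \to \bs x$ along the arrows.''

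Aside from these two slips, the structure of your proof is correct and tracks the paper's argument closely; after replacing the reality claim with the observation that one only needs $|C_{nn}'|\neq 0$ and quoting the correct complex value, the proof goes through.
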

\begin{proof}
Due to the translation and rotation symmetries of the Kagome lattice model, all nearest-neighbor correlators $C_{\b x,\b y}$ of the lowest band with $\b y\rightarrow \b x$ (say) along the arrows in Fig.~\ref{fig:Kagome} take the same value $C_{nn}'$. If we fix the modulus parameter of the real space torus and take the infinite size limit, $C_{nn}'$ converges to a momentum integral which we numerically evaluated to be around $-0.057 + 0.20\ii$. 
\end{proof}

\section{\label{app:phase} Phase Transition and Other Wannier-Obstructed Bands}
In this section, we give a discussion about the numerical results of ground state properties of the toy model in Sec.\ref{model}. The localized non-orthogonal orbital is determined from Eqn.\ref{eq:GPeqn} without band flattening to take finite-band-width effect into account at zeroth-order approximation. Then the effective spin Hamiltonian is constructed by enumerating all the possible permutations on a finite-size lattice. We use exact-diagonalization to investigate different order parameters in Fig.\ref{bottpt}: $M_\text{FM}=\sum_I\la Z_I\ra/2\tilde{N}$ and $CSO=\la \b{S}_0\cdot(\b{S}_1\times\b{S}_2) \ra$, where $Z_I$ is the on-site Pauli matrix. In particular, we find a direct first-order phase transition (AFM-FM) for $(a,d_+)$ orbitals, while an intermediate phase for $(b,p_-)$ orbitals at $2.1\lesssim U_0/t\lesssim 2.3$. The intermediate phase is characterized by non-zero magnetism and staggered chirality, indicating specific magnetic pattern. As we argued in the main text, this phase belongs to the umbrella-type noncoplanar phase by canting the $\text{120}^o$ Neel configuration in the $x$-$y$ plane toward $z$ direction, from which the chirality is automatically staggered. 
 
Further decreasing the interaction strength goes beyond our strong coupling framework, and the electrons fail to arrange into localized orbitals under weak repel interaction.

\begin{figure}[htbp]

\includegraphics[width=0.8\linewidth]{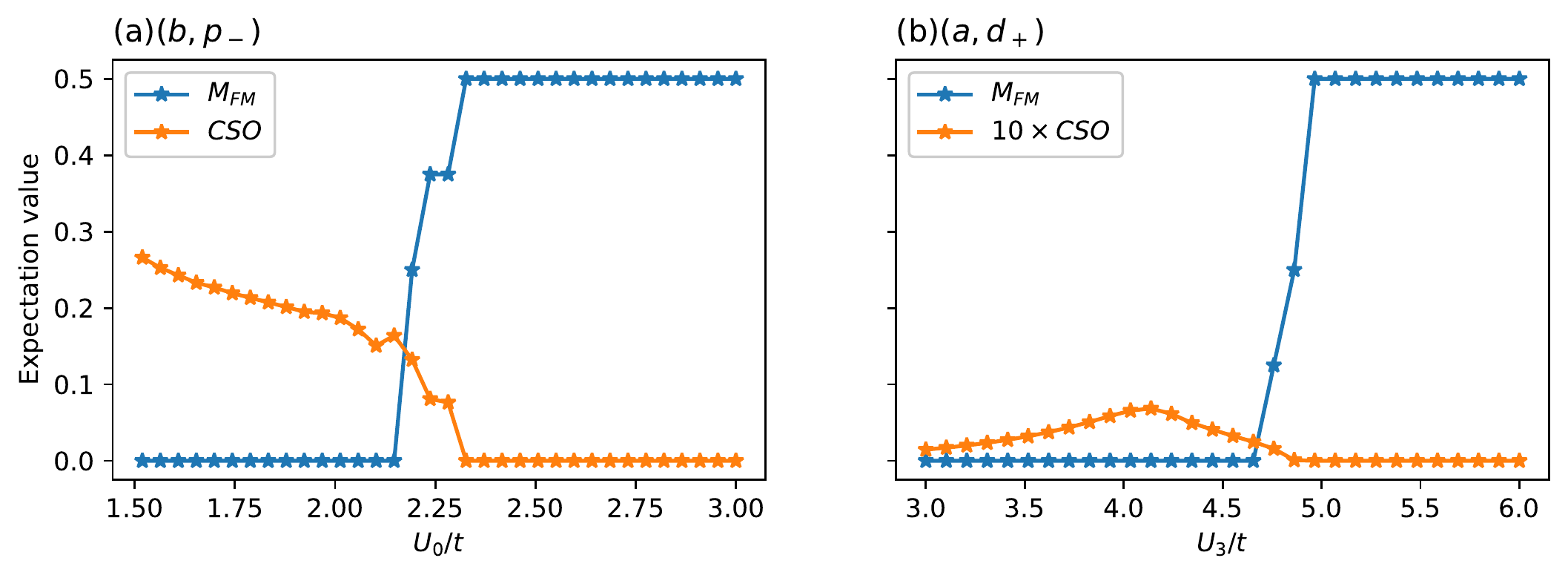}
\caption{Phase transition for the stably topological (bottom) band with respect to on-site interaction $U_0$ (a) and 3rd-neighbor interaction $U_3$ (b).
In (b), we multiply the $CSO$ by 10 to fit the graph. The $CSO$ is staggered for (a) (not shown here). }
\label{bottpt}
\end{figure}

In the main text, we present the localized orbitals of the Chern band. Now, we show the ones for fragile topological bands (combining top and bottom bands). To fill both of the bands, we need to fill two electrons per unit cell. Then the orbital index $I$ labels both the unit cell and the orbital in the unit cell. In the case of two electrons per unit cell, the strong repulsion between two electrons force them to form nematic orbitals localized on a single site as shown in Fig.\ref{appD}.

\begin{figure}[htbp]
 \includegraphics[width=.4\linewidth]{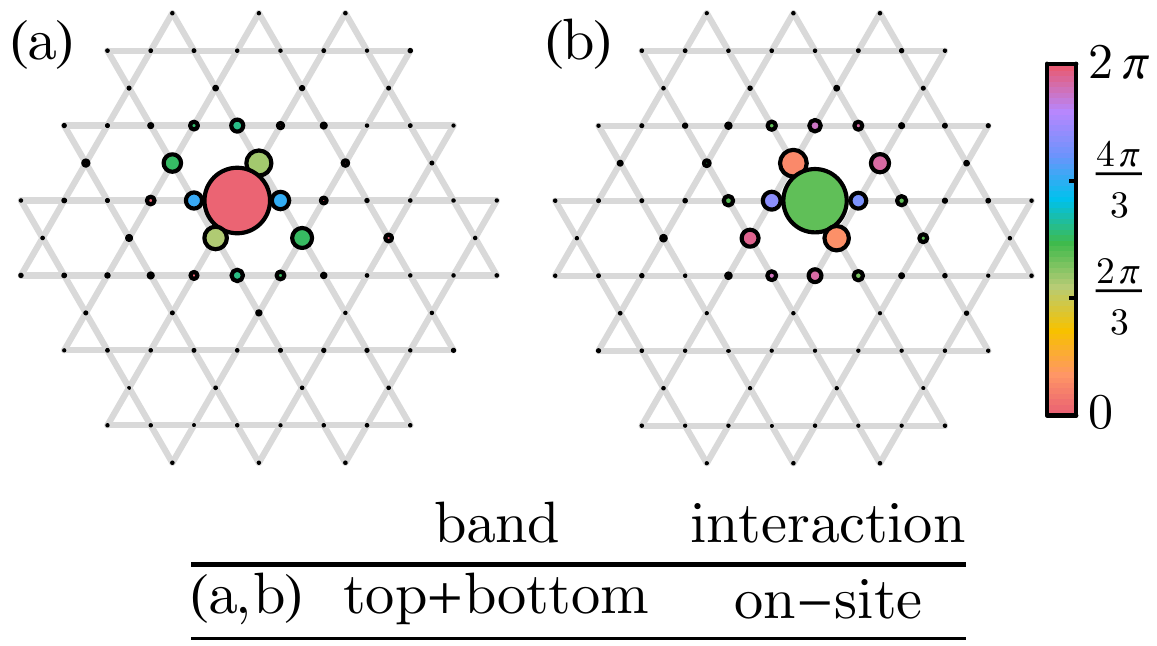}
\caption{Nonorthogonal Wannier orbitals for fragile topological band (combining top and bottom bands). (a) and (b) are a possible set of orbitals in a single unit cell. }
\label{appD}
\end{figure}

\end{document}